\def\RR{{\mathbb R}}
\def\CC{{\mathbb C}}
\def\NN{{\mathbb N}}
\def\ZZ{{\mathbb Z}}
\def\A{{\mathcal A}}
\def\B{{\mathcal B}}
\def\C{{\mathcal C}}
\def\D{{\mathcal D}}
\def\F{{\mathcal F}}
\def\H{{\mathcal H}}
\def\K{{\mathcal K}}
\def\M{{\mathcal M}}
\def\N{{\mathcal N}}
\def\O{{\mathcal O}}
\def\R{{\mathcal R}}
\def\f{\varphi}
\def\L{\Lambda}
\def\o{\omega}
\def\r{\rho}
\def\t{\tau}
\def\x{\xi}
\def\Ad{{\hbox{\rm Ad\,}}}
\def\1{{\mathbbm 1}}
\def\ran{{\rm Ran}\,}
\def\uone{{\rm U(1)}}
\def\u1net{{\A^{(0)}}}
\def\conf{{\rm Conf(\E)}}
\def\diff{{\rm Diff}}
\def\diffs1{\diff(S^1)}
\def\mob{{\rm M\ddot{o}b}}
\def\mob2{{\rm M\ddot{o}b}^{(2)}}
\def\supp{{\rm supp\,}}
\def\slim{{{\mathrm{s}\textrm{-}\lim}\,}}
\def\timesi{{\overset{\tin}\times}}
\def\timeso{{\overset{\tout}\times}}
\def\psl2r{{\rm PSL}(2,\RR)}
\def\sl2r{{\rm SL}(2,\RR)}
\def\su11{{\rm SU}(1,1)}
\def\2dmob{{\overline{\psl2r}\times\overline{\psl2r}}}
\def\<{\langle}
\def\>{\rangle}
\def\Re{\mathrm{Re}\,}
\def\Im{\mathrm{Im}\,}
\def\hout{\H^\tout}
\def\houtprod{\H^\tout_{\mathrm{prod}}}
\def\timeso{{\overset{\mathrm{out}}{\times}}}
\def\conf{{\mathscr C}}
\def\gconf{{\widetilde{\mathscr C}}}
\def\cmink{{\bar M}}
\def\cyl{{\widetilde{M}}}
\def\cc{{\mathrm c}}
\newcommand{\Om}{\Omega} 
\newcommand{\phiout}{\Phi^{\mathrm{out}}}
\newcommand{\tout}{\mathrm{out}}
\newcommand{\tin}{\mathrm{in}}
\newcommand{\tdir}{\mathrm{dir}}
\newtheorem{theorem}{Theorem}[section]
\newtheorem{corollary}[theorem]{Corollary}
\newtheorem{proposition}[theorem]{Proposition}
\newtheorem{lemma}[theorem]{Lemma}
\theoremstyle{remark}
\title{Massless Wigner particles in conformal field theory are free}
\date{}
\author{
{\bf Yoh Tanimoto} \footnote{Supported by
Alexander von Humboldt Stiftung until March 2013.} \\
e-mail: {\tt hoyt@ms.u-tokyo.ac.jp}\\
Graduate School of Mathematical Sciences, The University of Tokyo\\
and Institut f\"ur Theoretische Physik, G\"ottingen University\\
3-8-1 Komaba Meguro-ku Tokyo 153-8914, Japan.\\
JSPS SPD postdoctoral fellow\\
}
\begin{document}
\maketitle
\begin{abstract}
We show that in a four dimensional conformal Haag-Kastler net, its massless particle
spectrum is generated by a free field subnet. If the massless particle spectrum
is scalar, then the free field subnet decouples as a tensor product component.
\end{abstract}

\section{Introduction}\label{introduction}
Conformal field theories have been extensively studied in two-dimensional spacetime.
There are many examples, certain exact computations are available and they provide
also interesting mathematical structures.
On the other hand, from a mathematical point of view, no nonperturbative construction
of a single interacting quantum field theory in four dimensional spacetime
is available today. In this paper, instead
of constructing models, we try to understand general restrictions on models with a large spacetime
symmetry.
We prove that if a conformal field theory in four spacetime dimensions
in the operator-algebraic approach (Haag-Kastler net) contains massless particles, then
there is a free subnet generating the massless particles. Furthermore, if the massless
particles are scalar, then they decouple as a tensor product component.
Therefore, massless particles in conformal field theory cannot interact.

Actually Buchholz and Fredenhagen have already proved more than 30 years ago
that the S-matrix of a dilation-invariant theory is trivial \cite{BF77}.
Based on this result, Baumann \cite{Baumann82} has shown that any dilation-invariant
scalar field (in the sense of Wightman) where a complete particle interpretation is available
(asymptotic completeness with respect to massless particles) is the Wick product of the free field.
Compared to these, our results are not necessarily
stronger because we assume conformal invariance. On the other hand, there are more general aspects:
our framework is Haag-Kastler nets and we do not assume neither the existence of Wightman fields,
nor asymptotic completeness.
In two-dimensional spacetime, triviality of S-matrix does not
necessarily imply that the net is free (second quantized). Indeed, in our previous work
\cite{Tanimoto12-1}, we have seen that a two-dimensional conformal net is asymptotically
complete with respect to massless waves if and only if it is the tensor product of its chiral components.
Hence one may consider the tensor product subnet as the ``particle-like'' (or ``wave-like'') part.
However, chiral components can be highly nontrivial (different from the second quantized net,
the $\uone$-current net). In comparison, in four dimensions,
we prove that the particle spectrum is generated by the free, second quantized net.
In particular, if the particles are scalar, the free field subnet which we construct
cannot have any nontrivial extension,
hence it must decouple in the full net. This is the operator-algebraic version
of the argument given in \cite[Section 1]{BNRT07}.
Relaxing the assumption of asymptotic completeness (with respect to massless particles) is important, because
while there are many physical arguments that dilation-invariance should imply
conformal invariance \cite{Nakayama13, DKSS13}, conformal field theory
may contain massive spectrum (the meaning of ``massive'' will be clarified in Section \ref{representation}),
as one would expect from the maximally supersymmetric Yang-Mills theory, which should be
conformal \cite{Mandelstam83}.

We stress that our approach is nonperturbative. We make an assumption that there is
a nonperturbatively given model as a conformal Haag-Kastler net.
The existence of massless particles \`a la Wigner is defined in the sense that
the representation of the spacetime translations has nontrivial spectral projection
on the surface of the positive lightcone.
In this case, Buchholz has established the existence of asymptotic fields \cite{Buchholz77}.
Besides, operator-algebraic scattering theory has been successfully
applied to many massive models in low dimensions.
The theory was able to reconstruct the factorizing S-matrix as an invariant of the net
\cite{Lechner08, Tanimoto13-1}.

There are more claims that conformal fields with massless particles
are free with different assumptions \cite{Todorov06, Weinberg12}.
An advantage of our approach is to avoid any field-theoretic calculation.
One of the main tools is the Tomita-Takesaki modular theory applied to
conformal nets \cite{BGL93}: Brunetti, Guido and Longo have shown that the modular group
of a double cone is certain conformal transformations which preserve the double cone.
This renders the central idea of our arguments geometric, combined with
the construction of asymptotic fields by Buchholz \cite{Buchholz77}.

Let us recall a technical conjecture in \cite{Buchholz77}.
In order to obtain asymptotic fields, one had to choose local operators
with a certain regularity condition in the momentum space,
although Buchholz conjectured that this construction should extend to any local operator.
In our application, this restriction is a problem because the regularity condition is not stable
under conformal transformations. We remove this restriction and show that
the asymptotic fields are covariant under the conformal transformation of
the given net.

This paper is organized as follows. In Section \ref{preliminaries} we summarize
the foundations of conformal nets and the massless scattering theory.
The technical conjecture above is proved there.
We first state and prove our results on the existence of free subnet
for globally conformal nets in Section \ref{gci}.
This additional assumption greatly reduces the problem and emphasizes
the geometric nature of our proof.
Section \ref{general} treats the general case, not necessarily globally conformal
but conformal. We also prove the decoupling of the free scalar subnet. 
Finally we discuss open problems and future directions in Section \ref{open}.

\section{Preliminaries}\label{preliminaries}
\subsection{Conformal field theory}
A model of quantum field theory is realized
as a net of von Neumann algebras. A conformal field theory is a net with
the conformal symmetry. We collect here the definitions and results
necessary for our analysis.

\subsubsection{The conformal group and the extended Minkowski space}\label{conformalgroup}
We consider $\RR^4$, the Minkowski space. A conformal symmetry
is a transformation of $\RR^4$ which preserves the Lorentz metric
$a\cdot b = a_0b_0 - \sum a_kb_k$
up to a function. Actually we allow a symmetry to take a meager set
out of $\RR^4$. Hence we need to consider local actions, following
the work by Brunetti-Guido-Longo \cite{BGL93}.

Let $G$ be a Lie group and $M$ be a manifold.
We say that $G$ {\bf acts locally on} $M$ if there is an open nonempty set
$B \subset G \times M$ and a smooth map $T: B\to M$ such that
\begin{enumerate}[{(} 1 {)}]
 \item For any $a \in M$ , $V_a := \{g \in G: (g,a) \in B\}$ is an open connected neighborhood of the
 unit element $e$ of $G$.
 \item $T_ea = a$ for any $a \in M$.
 \item For $(g,a) \in B$, it holds that $V_{T_ga} = V_ag^{-1}$ and for $h \in G$ such that
 $hg \in V_a$, one has $T_hT_ga = T_{hg}a$.
\end{enumerate}

In the following, we only consider $M = \RR^4$.
The {\bf conformal group} $\conf$ is generated by the Poincar\'e group,
dilations and the special conformal transformations: a special conformal
transformation is of the form $\r \t(a) \r$, where $\t(a)$ is a translation
by $a \in \RR^4$ and $\r$ is the relativistic ray inversion
\[
 \r a = -\frac{a}{a\cdot a}.
\]
This action is quasi global in the sense that
for any $g \in \conf$ the open set $\{a\in M: (g,a)\in B\}$ is the complement of a meager set $S_g$
and it holds for $a_0 \in S_g$ that $\lim_{a\to a_0} T_ga = \infty$. In other words,
the set of points in $M$ which are taken out of $M$ by $g$ is meager.
This action $T$ is transitive. It has been shown \cite[Propositions 1.1, 1.2]{BGL93}
that there is a manifold $\cmink$ such that $M$ is a dense open subset of $\cmink$
and the action $T$ extends to a transitive global action on $\cmink$.
Furthermore, the action of $T$ lifts to a transitive global action $\widetilde T$
of the universal covering group $\widetilde G$ of $G$ on the universal covering
$\cyl$ of $\cmink$.

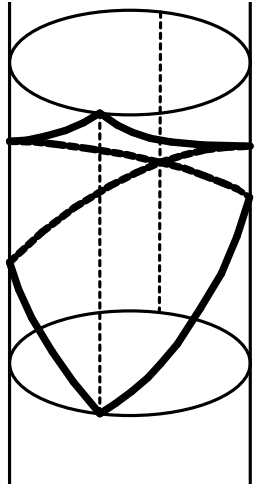
\begin{figure}[ht]
    \centering
\begin{tikzpicture}[scale=0.8, line cap=round,line join=round,>=triangle 45,x=1.0cm,y=1.0cm]
\clip(-4,-2) rectangle (2,6);
\draw [line width=1.2pt] (-3,-2) -- (-3,6);
\draw [line width=1.2pt] (1,-2) -- (1,6);
\draw [rotate around={0:(-1,5)},line width=1.2pt] (-1,5) ellipse (2cm and 0.87cm);
\draw [rotate around={0:(-1,0)},line width=1.2pt] (-1,0) ellipse (2cm and 0.87cm);
\draw [line width=1.2pt,dash pattern=on 2pt off 2pt] (-0.49,5.84)-- (-0.51,0.85);
\draw [line width=1.2pt,dash pattern=on 2pt off 2pt] (-1.5,-0.84)-- (-1.5,4.16);
\draw [line width=2.8pt] (-1.5,4.16)-- (-1.77,4);
\draw [line width=2.8pt] (-1.77,4)-- (-2.07,3.87);
\draw [line width=2.8pt] (-2.07,3.87)-- (-2.39,3.78);
\draw [line width=2.8pt] (-2.39,3.78)-- (-2.71,3.71);
\draw [line width=2.8pt] (-2.71,3.71)-- (-3,3.69);
\draw [line width=2.8pt,dotted] (-3,3.69)-- (-2.58,3.68);
\draw [line width=2.8pt,dotted] (-2.58,3.68)-- (-2.22,3.64);
\draw [line width=2.8pt,dotted] (-2.22,3.64)-- (-1.69,3.57);
\draw [line width=2.8pt,dotted] (-1.69,3.57)-- (-1.06,3.47);
\draw [line width=2.8pt,dotted] (-1.06,3.47)-- (-0.5,3.34);
\draw [line width=2.8pt,dotted] (-0.5,3.34)-- (-0.09,3.24);
\draw [line width=2.8pt,dotted] (-0.09,3.24)-- (0.23,3.12);
\draw [line width=2.8pt,dotted] (0.23,3.12)-- (0.54,3);
\draw [line width=2.8pt,dotted] (0.54,3)-- (0.78,2.9);
\draw [line width=2.4pt,dotted] (0.78,2.9)-- (0.99,2.76);
\draw [line width=2.8pt] (-1.5,-0.84)-- (-1.24,-0.67);
\draw [line width=2.8pt] (-1.24,-0.67)-- (-0.97,-0.47);
\draw [line width=2.8pt] (-0.97,-0.47)-- (-0.7,-0.24);
\draw [line width=2.8pt] (-0.7,-0.24)-- (-0.49,-0.01);
\draw [line width=2.8pt] (-0.49,-0.01)-- (-0.21,0.32);
\draw [line width=2.8pt] (-0.21,0.32)-- (0.17,0.91);
\draw [line width=2.8pt] (0.17,0.91)-- (0.52,1.49);
\draw [line width=2.8pt] (0.52,1.49)-- (0.78,2.11);
\draw [line width=2.8pt] (0.78,2.11)-- (0.99,2.76);
\draw [line width=2.8pt] (-1.5,-0.84)-- (-1.94,-0.31);
\draw [line width=2.8pt] (-1.94,-0.31)-- (-2.29,0.19);
\draw [line width=2.8pt] (-2.29,0.19)-- (-2.63,0.75);
\draw [line width=2.8pt] (-2.63,0.75)-- (-2.85,1.23);
\draw [line width=2.8pt] (-2.85,1.23)-- (-3,1.68);
\draw [line width=2.8pt,dotted] (-3,1.68)-- (-2.46,2.18);
\draw [line width=2.8pt,dotted] (-2.46,2.18)-- (-1.92,2.59);
\draw [line width=2.8pt,dotted] (-1.92,2.59)-- (-1.33,2.95);
\draw [line width=2.8pt,dotted] (-1.33,2.95)-- (-0.89,3.17);
\draw [line width=2.8pt,dotted] (-0.89,3.17)-- (-0.5,3.34);
\draw [line width=2.8pt,dotted] (-0.5,3.34)-- (-0.11,3.46);
\draw [line width=2.8pt,dotted] (-0.11,3.46)-- (0.26,3.55);
\draw [line width=2.8pt,dotted] (0.26,3.55)-- (0.59,3.59);
\draw [line width=2.8pt,dotted] (0.59,3.59)-- (1,3.61);
\draw [line width=2.8pt] (-1.5,4.16)-- (-1.21,3.96);
\draw [line width=2.8pt] (-1.21,3.96)-- (-0.96,3.84);
\draw [line width=2.8pt] (-0.96,3.84)-- (-0.66,3.74);
\draw [line width=2.8pt] (-0.66,3.74)-- (-0.32,3.68);
\draw [line width=2.8pt] (-0.32,3.68)-- (0.21,3.63);
\draw [line width=2.8pt] (0.21,3.63)-- (1,3.61);
\end{tikzpicture}
    \caption{The global space $\cyl$ projected on the two-dimensional cylinder. The region surrounded by thick lines is a copy of the Minkowski space.}
    \label{fig:minkowski}
\end{figure}

We can realize $\cmink$ concretely in $\RR^6$ as follows:
\[
 N := \{(\xi_0,\cdots, \xi_5) \in \RR^6 \setminus \{0\}: \xi_0^2-\xi_1^2-\cdots-\xi_4^2+\xi_5^2 = 0\}/\RR^*,
\]
where $\RR^* = \RR\setminus \{0\}$ acts on $\RR^6$ by multiplication.
For $a \in M = \RR^4$, we define the embedding by
 $\xi_k = a_k$ for $k = 0,1,2,3$ and $\xi_4 = \frac{1-a\cdot a}{2}, \xi_5 = \frac{1+a\cdot a}{2}$.
The group $\mathrm{PSO}(4,2)$ acts on $N$ and this corresponds to the action
of the conformal group $\conf$. Since the image of $M$ in $N$ is dense,
it follows that $N = \cmink$ \cite{BGL93}.
One observes that $N$ is diffeomorphic to $(S^3 \times S^1) / \ZZ_2$, hence its
universal covering is $S^3 \times \RR$.

\subsubsection{Conformal nets}\label{conformalnet}
An operator-algebraic conformal field theory, or a {\bf conformal net},
is a triple $(\A, U, \Om)$  of a map $\A$ from the family of open double cones in
$M$ into the family of von Neumann algebras on $\H$, a local unitary representation
(the group structure is respected only locally) $U$
of the conformal group $\conf$ and a unit vector $\Om \in \H$ such that
\begin{enumerate}[{(}1{)}]
\item {\bf Isotony.} If $O_1 \subset O_2$, then $\A(O_1) \subset \A(O_2)$.
\item {\bf Locality.} If $O_1$ and $O_2$ are spacelike separated, then $\A(O_1)$ and $\A(O_2)$ commute.
\item {\bf Local conformal covariance.} For each double cone $O \subset M$,
there is a neighborhood $V_O$ of the identity of $\conf$ such that
$V_O \times O \subset B$, where $B$ is the domain of the local action of $\conf$
on $M$, such that $\Ad U(g) (\A(O)) = \A(gO)$.
\item {\bf Positivity of energy.} The spectrum of the subgroup of
translations in $\conf$ in the representation $U$
(this is well-defined although the action $U$ is local, since
the group of translations is simply connected)
is included in the
closed positive lightcone $\overline V_+ := \{a \in \RR^4: a_0 \ge 0, \; a\cdot a \ge 0\}$.
\item {\bf Vacuum.} The vector $\Om$ is invariant under the action of $U$.
Such a vector is unique up to a scalar.
\item {\bf Reeh-Schlieder property.} The vector $\Om$ is cyclic and separating
 for each local algebra $\A(O)$.
\end{enumerate}
Note that Reeh-Schlieder property is usually proved under additivity.
We take it here as an assumption for simplicity
(see the discussion in \cite[Section 2]{Weiner11}).

A conformal net can be extended to $\cyl$ with the action of
$\gconf$ \cite[Proposition 1.9]{BGL93}. Indeed, the representation
$U$ lifts to $\gconf$ and the local algebra $\A(O)$ for
$O$ which is not included in $\cyl$ is defined by covariance.

A {\bf (conformal) subnet} $\A_0$ of a net $(\A,U,\Om)$ is
a family of von Neumann subalgebras $\A_0(O) \subset \A(O)$
such that isotony and covariance with respect to the same $U$ hold.
In this case, $\overline{\A_0(O)\Om}$ is a Hilbert subspace of $\H$ independent of $O$.

\subsubsection{Bisognano-Wichmann property}\label{bisognano-wichmann}
Certain regions play a special role in the study of conformal field theory.
Here we pick the standard wedge in the $a_1$-direction, the unit double cone
and the future lightcone:
\begin{itemize}
 \item $W_1 := \{a \in M: a_1 > |a_0|\}$,
 \item $O_1 := \{a \in M: |a_0|+\sqrt{a_1^2 + a_2^2 + a_3^2}  < 1\}$,
 \item $V_+ := \{a \in M: a_0 > 0, \; a\cdot a > 0\}$
\end{itemize}
To each of these regions $O$ in $\cyl$ we associate a one-parameter group
$\L^O_t$ in $\gconf$ which preserve $O$ and commute with all $O$-preserving
conformal transformations:
\begin{itemize}
 \item For the wedge $W_1$, we take the boosts in $a_1$-direction.
 They are linear transformations and their actions on $(a_0,a_1)$ components can be written, in a matrix form,
 as $\L^{W_1}_t = \left(\begin{array}{cc} \cosh 2\pi t & -\sinh 2\pi t \\ -\sinh 2\pi t & \cosh 2\pi t\end{array}\right)$.
 \item For the unit double cone, by rotation invariance the action is determined by the action on $(a_0,a_1)$-plane:
\[
 \L^{O_1}_ta_\pm = \frac{(1 + a_\pm) - e^{-2\pi t}(1 - a_\pm)}{(1 + a_\pm) - e^{-2\pi t}(1 + a_\pm)},
\]
 where $a_\pm = a_0 \pm a_1$.
 \item For the future lightcone $V_+$, we take the dilation:
 $\L^{V_+}_t a = e^{2\pi t} \cdot a$.
\end{itemize}
These regions are mapped to each other by conformal transformations (on $\cyl$) and
the associated transformations are coherent, in the sense that
$\L^{O}_t = g^{-1}\L^{O'}_tg$ where $O = gO'$, $g \in \gconf$ and
$O, O' = W_1, O_1, V_+$. One can define $\L^{O}_t$ for any other double cone, wedge or
lightcone by coherence.

For a conformal net, the modular group of a local algebra with respect to
the vacuum has been completely determined \cite{BGL93}.
\begin{theorem}[Bisognano-Wichmann property]
 Let $(\A, U, \Om)$ be a conformal net and consider its natural extension to $\cyl$.
 Then for any image $O$ of a double cone by a conformal transformation in $\gconf$,
 one has $\Delta_O^{it} = U(\L^O_t)$, where $\Delta_O$ is the modular operator of $\A(O)$ with
 respect to $\Om$.
\end{theorem}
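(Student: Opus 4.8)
The plan is to reduce the statement to the single standard wedge $W_1$, use Borchers' commutation theorem to fix how $\Delta_{W_1}$ acts on the translations, and then bring in the conformal symmetry --- via a chiral restriction of the net and the one-dimensional Bisognano--Wichmann property --- for the full equality. For the reduction: every region $O$ in the statement is a $\gconf$-image of a double cone, every double cone in $M$ is a conformal image of $O_1$, and $O_1$ is a conformal image of $W_1$ (via a special conformal transformation composed with a translation). Since the extension to $\cyl$ is globally covariant, $\Ad U(g)(\A(O)) = \A(gO)$ with $U(g)\Om = \Om$, so $(\A(gO),\Om)$ is unitarily conjugate to $(\A(O),\Om)$ by $U(g)$; uniqueness of modular objects gives $\Delta_{gO}^{it} = U(g)\Delta_O^{it}U(g)^{*}$, while the coherence relation $\L^{gO}_t = g\L^O_t g^{-1}$ and the fact that $U$ is a genuine representation of $\gconf$ give $U(g)U(\L^O_t)U(g)^{*} = U(\L^{gO}_t)$. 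Hence it suffices to prove $\Delta_{W_1}^{it} = U(\L^{W_1}_t)$; here $\Om$ is cyclic and separating for $\A(W_1)$ because $W_1$ contains double cones and is spacelike to $W_1' = -W_1$, which also does.

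For the adjoint action on translations, put $e_\pm = (1,\pm 1,0,0)$ and record the geometric facts: $W_1 + \l e_+ \subset W_1$ for $\l \ge 0$ and $W_1 + \l e_- \subset W_1$ for $\l \le 0$; the generators of $\tau(\l e_\pm)$ are nonnegative by positivity of energy (a future null vector pairs nonnegatively with $\overline V_+$); the boost satisfies $\L^{W_1}_t e_\pm = e^{\mp 2\pi t}e_\pm$; and the translations along and rotations in the $(a_2,a_3)$-plane, together with the dilations, preserve $W_1$ while fixing $\Om$. Borchers' commutation theorem applied to $(\A(W_1),\Om)$ and to each of the two half-sided light-like translation semigroups then gives $\Delta_{W_1}^{it}\,U(\tau(\l e_\pm))\,\Delta_{W_1}^{-it} = U(\tau(e^{\mp 2\pi t}\l e_\pm))$; since $\Ad U(g)$ is a state-preserving automorphism of $\A(W_1)$ for each $W_1$-preserving symmetry $g$, canonicity of the modular flow makes $\Delta_{W_1}^{it}$ commute with all such $U(g)$. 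Putting these together, $\Delta_{W_1}^{it}\,U(\tau(a))\,\Delta_{W_1}^{-it} = U(\tau(\L^{W_1}_t a))$ for every $a\in\RR^4$, so $V(t):=\Delta_{W_1}^{it}\,U(\L^{W_1}_{-t})$ commutes with all translations, fixes $\Om$, and normalizes $\A(W_1)$.

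The hard part will be showing $V(t)=1$: the correct adjoint action on the Poincar\'e subgroup is not by itself enough --- the Borchers/Wiesbrock structure still allows a nontrivial cocycle acting on multiplicity spaces --- and the conformal symmetry must enter essentially. Following \cite{BGL93}, I would restrict $\A$ to the two-dimensional Minkowski plane $\{(a_0,a_1,0,0)\}$, obtaining a two-dimensional conformal net, and then pass to a chiral component: a light ray, closed up to a circle, on which the relevant subgroup of $\conf$ acts through $\mob$ with positive energy. The one-dimensional Bisognano--Wichmann property for positive-energy M\"obius-covariant nets --- itself provable by modular-inclusion methods, using uniqueness of the vacuum --- identifies the modular group of the half-line interval algebra with the geometric dilation, which as an element of $\conf$ is precisely $\L^{W_1}_t$; combining the two chiral contributions and matching them to $\A(W_1)$ gives $\Delta_{W_1}^{it}=U(\L^{W_1}_t)$. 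The delicate technical points along this route --- and what I expect to be the real obstacle --- are that the restricted (planar and then chiral) vacuum must remain cyclic and separating, and that the chiral modular data must be transported back to the four-dimensional wedge via lightfront considerations. Once $\Delta_{W_1}^{it}=U(\L^{W_1}_t)$ is in hand, the reduction above propagates it to every $O$ in the statement, in particular to all double cones.
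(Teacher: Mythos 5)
First, a point of comparison: the paper does not prove this theorem at all --- it is imported from Brunetti--Guido--Longo \cite{BGL93} --- so there is no internal proof to measure your attempt against; I will judge it on its own. Your first two paragraphs are correct and are indeed how every proof of this result begins. The reduction of all regions $O$ to the single wedge $W_1$ via covariance, the coherence relation for $\Lambda^O_t$, and uniqueness of modular objects is sound ($\Om$ is cyclic and separating for $\A(W_1)$ for exactly the reasons you give), and Borchers' commutation theorem applied to the two half-sided lightlike translation semigroups, combined with the commutation of $\Delta_{W_1}^{it}$ with the $W_1$-preserving, vacuum-fixing symmetries, correctly yields the stated adjoint action of $\Delta_{W_1}^{it}$ on the full translation group. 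You are also right, and commendably explicit, that this is strictly weaker than the claim.

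The gap is in the third paragraph, which is where the theorem actually lives. The mechanism you propose for showing $V(t)=\Delta_{W_1}^{it}U(\Lambda^{W_1}_{-t})=1$ --- restrict $\A$ to the $(a_0,a_1)$-plane, pass to a chiral component on a light ray, and invoke the one-dimensional Bisognano--Wichmann theorem for that restricted net --- does not work as stated. A four-dimensional net has no canonical restriction to a lower-dimensional submanifold retaining the Reeh--Schlieder property: algebras attached to segments of a light ray (intersections over shrinking four-dimensional neighbourhoods) may be trivial, and there is no reason a ``chiral subnet'' exists at all; you flag this as ``the real obstacle'' but offer no way around it. This is precisely where \cite{BGL93} do something different: they never form a restricted net. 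They exhibit a copy of $\overline{\psl2r}$ inside $\gconf$ acting on one light-ray coordinate whose ``dilation'' subgroup is $\Lambda^{W_1}_t$, whose ``translation'' semigroup maps $W_1$ into itself, and whose rotation generator (the conformal Hamiltonian of that subgroup) is positive --- a fact derived from the translation spectrum condition via conformal covariance --- and then run the one-interval chiral argument directly on the single algebra $\A(W_1)$ equipped with this representation. Without that (or an equivalent) input, your $V(t)$ is only known to commute with the Poincar\'e group, fix $\Om$, and normalize $\A(W_1)$, which, as you yourself note, does not force $V(t)=1$. So the heart of the proof is missing.
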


The following duality has been also proved \cite{BGL93}.
\begin{theorem}[Haag duality on $\cyl$]
 Let $(\A, U, \Om)$ be a conformal net and consider its natural extension to $\cyl$.
 Then for a wedge $W$, it holds that $\A(W)' = \A(W')$.
\end{theorem}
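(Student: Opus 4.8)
The plan is to obtain wedge duality on $\cyl$ from the Bisognano-Wichmann property together with Tomita-Takesaki theory and positivity of energy. One inclusion is automatic: a wedge $W$ and its causal complement $W'$ are spacelike separated in $\cyl$, so locality gives $\A(W')\subseteq\A(W)'$. For the converse, Reeh-Schlieder makes $\Om$ cyclic and separating for $\A(W)$, so Tomita-Takesaki theory applies and the modular conjugation $J_W$ satisfies $J_W\A(W)J_W=\A(W)'$. Hence it is enough to show that $\Ad J_W$ acts geometrically, implementing the conformal reflection $j_W$ that exchanges $W$ and $W'$, in the sense that $\Ad J_W(\A(O))=\A(j_WO)$ for every double cone $O$ in $\cyl$; granting this, $\A(W)'=J_W\A(W)J_W=\A(j_WW)=\A(W')$.

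By conformal covariance, $\Ad U(g)(\A(W_1))=\A(gW_1)$ and $(gW_1)'=g(W_1')$, and every wedge is of the form $gW_1$; so it suffices to treat $W=W_1$, in which case $j_{W_1}$ is the reflection $(a_0,a_1,a_2,a_3)\mapsto(-a_0,-a_1,a_2,a_3)$, which maps $W_1$ onto its causal complement in $\cyl$ (a geometric fact about the conformal completion). The first step is to pin down $J_{W_1}$ on the translation subgroup. By Bisognano-Wichmann, $\Delta_{W_1}^{it}=U(\L^{W_1}_t)$, and $J_{W_1}$ commutes with $\Delta_{W_1}^{it}$ and fixes $\Om$. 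The boost $\L^{W_1}_t$ rescales the lightlike coordinates $a_\pm=a_0\pm a_1$ by $e^{\mp 2\pi t}$ and fixes $(a_2,a_3)$; since the generators of the translations along $e_\pm=e_0\pm e_1$ are positive by positivity of energy, Borchers' commutation theorem applied to $\Delta_{W_1}$ and to $\Delta_{W_1}^{-1}$ yields $J_{W_1}U(\t(se_\pm))J_{W_1}=U(\t(-se_\pm))$, while $J_{W_1}$ commutes with translations along the edge (these preserve $W_1$ and $\Om$, hence the modular objects). Therefore $\Ad J_{W_1}$ acts on the full translation group as conjugation by $j_{W_1}$.

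It remains to bootstrap this to the geometric action on all local algebras. One extends the identification from translations to the whole of $\gconf$: using that $\Ad J_{W_1}$ commutes with the $W_1$-boost, together with Bisognano-Wichmann for further regions (wedges, double cones and lightcones are all conformally equivalent on $\cyl$, so their modular data are available), the action of $\Ad J_{W_1}$ on the generators of $\gconf$ — including the special conformal transformations, controlled through the ray inversion and the uniqueness of the vacuum — is seen to coincide with conjugation by $j_{W_1}$. Conformal covariance and the Reeh-Schlieder property then transport this geometric action to the net, giving $\Ad J_{W_1}(\A(O))=\A(j_{W_1}O)$ for every double cone, hence in particular $\A(W_1)'=\A(W_1')$; conformal covariance finally yields $\A(W)'=\A(W')$ for an arbitrary wedge.

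The main obstacle is exactly this last bootstrap: going from the clean, Borchers-type identification of $J_{W_1}$ on translations to its geometric action on the whole algebraic structure over the universal covering $\cyl$. On Minkowski space, for a Poincar\'e-covariant net, this is classical, but here one must track the passage to the covering and the fact that $j_{W_1}$ lies outside the identity component, so that $\Ad J_{W_1}$ is an antiunitarily implemented symmetry extending $U$; one must also verify that the causal complement of a wedge in $\cyl$ is again a wedge, realized by $j_{W_1}$. This is the content carried out in detail in \cite{BGL93}.
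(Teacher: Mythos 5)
First, a point of comparison: the paper does not prove this statement at all — it is quoted directly from \cite{BGL93} — so there is no in-house argument to match. Your outline follows the standard geometric-modular-conjugation route, and the individual assertions in it are correct; but as a self-contained proof it has a genuine gap exactly where you flag it. The Borchers-type identification of $\Ad J_{W_1}$ on the translation subgroup is the easy part; the claim that $\Ad J_{W_1}(\A(O))=\A(j_{W_1}O)$ for every double cone $O$ of $\cyl$ — equivalently, that $J_{W_1}$ extends $U$ to an antiunitarily implemented action of the conformal group enlarged by $j_{W_1}$ which acts geometrically on the whole net — is essentially the entire content of the theorem, and you outsource it to \cite{BGL93}. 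What you have written is therefore a reduction of the statement to the main theorem of the reference rather than a proof of it.

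Second, granting the Bisognano--Wichmann theorem stated immediately before (which your argument assumes anyway), there is a much shorter route that avoids identifying $J_{W_1}$ altogether; it is the same Takesaki-type argument the paper itself deploys in Proposition \ref{pr:gci-duality}. Locality gives $\A(W')\subset\A(W)'$. The vector $\Om$ is cyclic for both algebras by the Reeh--Schlieder property, since each contains some double cone algebra, and it is separating for $\A(W)'$ because it is cyclic for $\A(W)$. The modular group of $\A(W)'$ with respect to $\Om$ is $\Ad\Delta_W^{-it}=\Ad U(\L^W_{-t})$ by Bisognano--Wichmann; the boost preserves $W'$ geometrically, hence preserves $\A(W')$ by covariance. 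Takesaki's theorem then provides a conditional expectation $E:\A(W)'\to\A(W')$ with $E(x)\Om=Px\Om$, where $P$ is the projection onto $\overline{\A(W')\Om}=\H$, so $E(x)\Om=x\Om$; since $\Om$ is separating for $\A(W)'$, this forces $E=\mathrm{id}$ and hence $\A(W)'=\A(W')$. This stays entirely within the toolkit the paper has already set up and sidesteps the bootstrap you correctly identify as the hard step.
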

Since a conformal transformation can bring a wedge to a double cone $O$, a similar duality
holds for double cones. In that case, we need the causal complement $O^\cc$ on $\cyl$ rather than
the usual spacelike complement $O'$.

\begin{figure}[ht]
    \centering
\begin{tikzpicture}[scale=0.6, line cap=round,line join=round,>=triangle 45,x=1.0cm,y=1.0cm]
\clip(-5,-5) rectangle (10,7);
\fill[fill=black,fill opacity=1.0] (-1,1) -- (1,-1) -- (3,1) -- (1,3) -- cycle;
\fill[line width=1.6pt,fill=black,fill opacity=0.5] (4,2) -- (3,1) -- (4,0) -- (5,1) -- cycle;
\fill[line width=1.6pt,fill=black,fill opacity=0.35] (4,0) -- (6,-2) -- (7,-1) -- (5,1) -- cycle;
\fill[fill=black,fill opacity=0.5] (-2,2) -- (-2,0) -- (-1,1) -- cycle;
\fill[fill=black,fill opacity=0.5] (7,3) -- (5,1) -- (7,-1) -- (8,0) -- (8,2) -- cycle;
\fill[fill=black,fill opacity=0.25] (4,2) -- (5,1) -- (7,3) -- (6,4) -- cycle;
\draw [line width=1.6pt] (8,-5) -- (8,7);
\draw [line width=2.4pt] (0,6)-- (8,-2);
\draw (-1,1)-- (1,3);
\draw (1,3)-- (3,1);
\draw (3,1)-- (1,-1);
\draw (1,-1)-- (-1,1);
\draw (-1,1)-- (1,-1);
\draw (1,-1)-- (3,1);
\draw (3,1)-- (1,3);
\draw (1,3)-- (-1,1);
\draw [line width=1.6pt] (4,2)-- (3,1);
\draw [line width=1.6pt] (3,1)-- (4,0);
\draw [line width=1.6pt] (4,0)-- (5,1);
\draw [line width=1.6pt] (5,1)-- (4,2);
\draw [line width=1.6pt] (4,0)-- (6,-2);
\draw [line width=1.6pt] (6,-2)-- (7,-1);
\draw [line width=1.6pt] (7,-1)-- (5,1);
\draw [line width=1.6pt] (5,1)-- (4,0);
\draw [line width=2.4pt] (0,-4)-- (8,4);
\draw [line width=2.4pt] (0,-4)-- (-2,-2);
\draw [line width=2.4pt] (-2,4)-- (0,6);
\draw [line width=1.6pt] (-2,-5) -- (-2,7);
\draw [line width=1.6pt] (-1,1)-- (-2,2);
\draw [line width=1.6pt] (-1,1)-- (-2,0);
\draw [line width=1.6pt] (7,-1)-- (8,0);
\draw [line width=1.6pt] (8,2)-- (7,3);
\draw [line width=1.6pt] (7,3)-- (6,4);
\draw [line width=1.6pt] (6,4)-- (4,2);
\draw (-2,2)-- (-2,0);
\draw (-2,0)-- (-1,1);
\draw (-1,1)-- (-2,2);
\draw (7,3)-- (5,1);
\draw (5,1)-- (7,-1);
\draw (7,-1)-- (8,0);
\draw (8,0)-- (8,2);
\draw (8,2)-- (7,3);
\draw (4,2)-- (5,1);
\draw (5,1)-- (7,3);
\draw (7,3)-- (6,4);
\draw (6,4)-- (4,2);
\end{tikzpicture}
    \caption{Regions in the global space $\cyl$. The left and right sides are identified.
    The white square: a copy of the Minkowski space. Black: a double cone $O$.
    Dark gray: the spacelike complement $O'$ of the double cone in the Minkowski space.
    Light gray + dark gray: the causal complement $O^\cc$ in $\cyl$.}
    \label{fig:duality}
\end{figure}
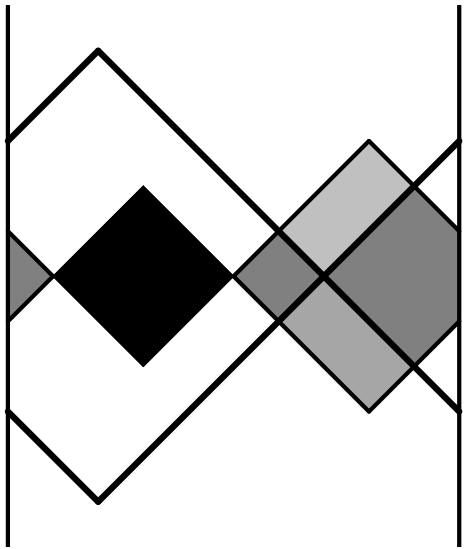

\subsubsection{Representation theory of the conformal group}\label{representation}
The conformal group is locally isomorphic to $\mathrm{SU}(2,2)$
and its unitary positive-energy irreducible representations have been classified \cite{Mack77}.
Using the dimension $d \ge 0$ and half-integers $j_1,j_2 \ge 0$, they are parametrized as follows.
When restricted to the Poincar\'e group, one can consider the mass parameter $m$
and spin $s$ or helicity.
\begin{itemize}	
 \item trivial representation. $d=j_1=j_2=0$.
 \item $j_1\neq 0 \neq j_2$, $d > j_1+j_2+2$. In this case, $m > 0$ and $s = |j_1-j_2|, \cdots j_1+j_2$ (integer steps).
 \item $j_1j_2=0$, $d > j_1+j_2+1$. $m > 0$ and $s = j_1+j_2$.
 \item $j_1 \neq 0 \neq j_2$, $d = j_1+j_2+2$. $m > 0$ and $s = j_1+j_2$.
 \item $j_1j_2 = 0$, $d=j_1+j_2+1$. $m = 0$ and helicity $s = j_1-j_2$.
\end{itemize}
Hence, the only massless representations are the last family.
In this paper, when we say that a conformal net contains massless particles,
it means that the representation $U$ has a subrepresentation in this family.

In \cite{Weinberg12} the following has been proved: if there is a quantum field
(an operator-valued distribution) which transforms as a vector in one of the
above massless representations,
then it is free. It implicitly assumes that the massless particles are
generated by such a field. This is apparently a stronger assumption
than the one in the operator-algebraic approach (see Section \ref{scattering})
that local observables generate states which contain massless particles.

The other nontrivial representations have mass $m>0$. One can call them massive,
although there is no mass gap because of the action of dilations.

\subsection{Massless scattering theory}\label{scattering}
In the operator-algebraic approach, the concept of particle is not given a priori, but
to be defined through operational process.
Such a theory for massless particles has been established in \cite{Buchholz77}
for a Poincar\'e covariant net
under the assumption that the representation of the translation has nontrivial
spectral projection corresponding to the cone $m = 0$.
In such a case, we say that the net contains massless particles (following Wigner).

\subsubsection{Convergence of asymptotic fields for regular operators}\label{convergence}
Let $(\A, U, \Om)$ be a Poincar\'e covariant net (a net for which the covariance
is only assumed for the Poincar\'e group).
Let $x$ be an operator in $\A(O)$ which is smooth in norm under the group action
$g \mapsto \Ad U(g)(x)$.
There are sufficiently many such operators. Indeed, if $x$ is localized in
a slightly smaller region than $O$, then
one can smear $x$ with a smooth function with compact support in the group
(note that the conformal group $\conf$ is finite-dimensional).
For a vector $a \in M$, we denote $x(a) = \Ad U(\t(a))(x)$.
For $t \in \RR$, we define
\[
 \Phi^t(x) := -2t\int_{S^2} d\o(\mathbf n)\; \partial_0 x(t,t \mathbf n),
\]
where $d\o$ is the normalized rotation-invariant measure on $S^2$ and
$\partial_0$ is the derivative with respect to the time translation (which is independent from $t$).
By a straightforward calculation, one finds that
\[
  \Phi^t(x)\Om =
  \frac{1}{|\mathbf{P}|}(e^{it(H-|\mathbf{P}|)} - e^{it(H+|\mathbf{P}|)})Hx\Om,
\]
where $P = (H,\mathbf{P})$ is the generator of translation: $U(\t(a)) = e^{itP\cdot a}$.
Furthermore, we need to take suitable time-averages. 
We fix a positive, smooth and compactly supported function $h$ with $\int_{\RR} h(t)dt = 1$ and
$h_T(t) = \frac1{\log |T|}\, h\left(\frac{t-T}{\log |T|}\right)$.
We set
\[
 \Phi^{h_T}(x) = \int_{\RR} dt\; h_T(t)\Phi^t(x).
\]
Then by the mean ergodic theorem one obtains \cite{Buchholz82}
\[
 \underset{T\to \infty}{\slim} \Phi^{h_T}(x)\Om = P_1
 x\Om,
\]
where $P_1$ is the projection onto the massless one-particle space, where
$H = |\mathbf P|$ holds.

For any double cone $O$, we denote by $V_{O,+}$ the future tangent of $O$,
the set of all points separated by a future-timelike vector from any point of $O$.
For a fixed double cone $O_+$ in $V_{O,+}$, there is a sufficiently large $T$ such that
$\Phi^{h_T}(x)$ is contained in the causal complement of $O_+$.
In particular, for sufficiently large $T$, there is a large commutant
for $\Phi^{h_T}(x)$ and one can define the operator $\phiout(x)$ by
$\phiout(x)y\Om = \underset{T\to\infty}\slim y \Phi^{h_T}(x)\Om = yP_1x\Om$,
where $y \in \A(O_+)$.
Let us denote $\F(V_{O,+}) = \bigcup_{O_+\subset V_{O,+}} \A(O_+)$
(the union, not the weak closure and $O_+$ are bounded).
The choice of $O_+$ was arbitrary in $V_{O,+}$, hence
$\phiout(x)$ can be defined on $\F(V_{O,+})\Om$. It is easy to see that $\phiout(x)$ is closable.
We denote the closure by the same symbol and its domain by $\D(\phiout(x))$.

For $N \in \NN$, let $\A_N(O)$ be the linear span of the operators
\[
 \int_{\RR} dt\; \f(t)\Ad U(\t(ta))(x),
\]
where $x \in \A(\check O)$, $a$ is a timelike vector and $\f$ is a test function with compact support
which has a Fourier transform $\tilde \f(p)$ with an $N$-fold zero at $p=0$,
and $\check O + (\supp \f )a\subset O$.

\begin{figure}[ht]
    \centering
\begin{tikzpicture}[scale=0.5, line cap=round,line join=round,>=triangle 45,x=1.0cm,y=1.0cm]
\clip(-15,-9) rectangle (15,9);
\draw [fill=black,fill opacity=0.75] (1,-6) circle (1.41cm);
\fill[fill=black,fill opacity=0.1] (1,2) -- (13,-10) -- (-11,-10) -- cycle;
\fill[fill=black,fill opacity=0.1] (1,2) -- (18,19) -- (-16,19) -- cycle;
\draw [rotate around={0:(1,6.5)},line width=1.6pt] (1,6.5) ellipse (4.26cm and 1.46cm);
\draw [rotate around={0:(1,-2.5)},line width=1.6pt] (1,-2.5) ellipse (4.26cm and 1.46cm);
\draw [line width=1.6pt] (-3,-2)-- (5,6);
\draw [line width=1.6pt] (-3,6)-- (5,-2);
\draw [line width=1.6pt,dash pattern=on 6pt off 6pt,domain=-15.0:-2.9999999999999947] plot(\x,{(-3--1*\x)/-1});
\draw [line width=1.6pt,dash pattern=on 6pt off 6pt,domain=5.0:15.0] plot(\x,{(--1--1*\x)/1});
\draw [line width=1.6pt,dash pattern=on 6pt off 6pt,domain=-15.0:-3.0] plot(\x,{(-1-1*\x)/-1});
\draw [line width=1.6pt,dash pattern=on 6pt off 6pt,domain=5.0:15.0] plot(\x,{(--3-1*\x)/1});
\draw [line width=1.2pt,dotted,domain=1.0:15.0] plot(\x,{(-9--1*\x)/1});
\draw [line width=1.2pt,dotted,domain=-15.0:1.0] plot(\x,{(--7--1*\x)/-1});
\draw [line width=1.2pt,dotted,domain=-15.0:2.0] plot(\x,{(--1.5--0.5*\x)/-0.5});
\draw [line width=1.2pt,dotted,domain=0.0:15.0] plot(\x,{(-2.5--0.5*\x)/0.5});
\draw [rotate around={0:(1,2)},line width=1.2pt,dotted] (1,2) ellipse (5.52cm and 2.35cm);
\draw (1,2)-- (13,-10);
\draw (13,-10)-- (-11,-10);
\draw (-11,-10)-- (1,2);
\draw (1,2)-- (18,19);
\draw (18,19)-- (-16,19);
\draw (-16,19)-- (1,2);
\draw [shift={(1,-3)},line width=0.4pt]  plot[domain=4.28:5.15,variable=\t]({1*3.36*cos(\t r)+0*3.36*sin(\t r)},{0*3.36*cos(\t r)+1*3.36*sin(\t r)});
\end{tikzpicture}
    \caption{How asymptotic fields are constructed. A local observable in a dark gray region is taken in the region between the cones indicated by dotted lines.}
    \label{fig:asymtotic}
\end{figure}
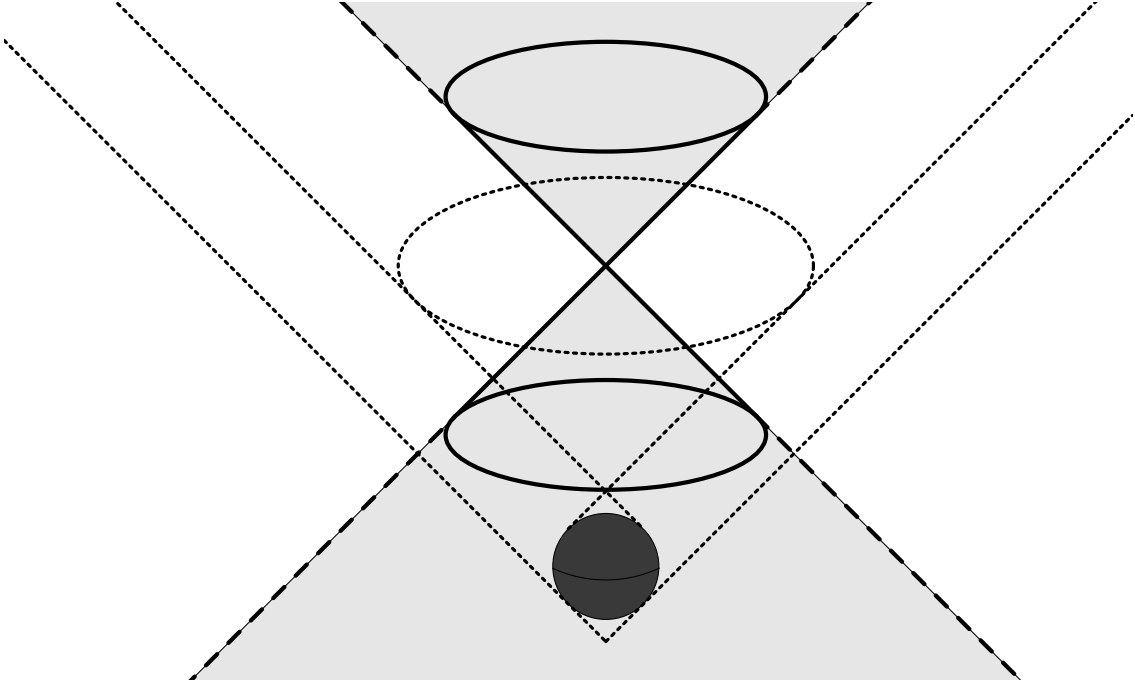

The following has been proved \cite[Lemma 1, Lemma 6, Theorems 7, 8, 9]{Buchholz77}.
\begin{theorem}[Buchholz]\label{th:bu}
Let $x = x^*$ be an element of $\A_{N_0}(O)$, where $N_0 \ge 15$,
$O$ is a double cone and $V_{O,+}$ be the future tangent of $O$.
Then the following hold.
\begin{enumerate}[{(}1{)}]
 \item\label{lee:commutation} For an arbitrary $y \in \A(O_+)$, where $O_+ \subset V_{O,+}$ is bounded,
 $y\cdot \D(\phiout(x))\subset \D(\phiout(x))$ and
 one has $[\phiout(x), y] = 0$ on $\D(\phiout(x))$.
 \item\label{lee:core} The operator $\phiout(x)$ is self-adjoint and depends only on
 $P_1 x\Om$. The subspace $\F(V_{O,+})\Om$ is a core of $\phiout(x)$.
 \item\label{lee:converngece} The sequence $\Phi^{h_T}(x)$ is convergent to $\phiout(x)$ in the
 strong resolvent sense.
 \item\label{lee:houtprod} The operator $\phiout(x)$ can be applied to the vacuum $\Omega$
 arbitrarily many times. We denote the vectors generated in this way recursively (the first
 term in the right-hand side which contains $n+1$ product is defined in this way):
 \[
  \phiout(x)\cdot \xi_1\timeso\xi_2\timeso \cdots \timeso \xi_n =
  \xi\timeso\xi_1\timeso\xi_2\timeso \cdots \timeso \xi_n +
  \sum_{k=1}^n \<\xi,\xi_k\> \xi_1\timeso\cdots \check{\xi}_k\cdots \timeso\xi_n,
 \]
 where $\xi = P_1x\Om = P_1x^*\Om$ and $\check{\xi}_k$ means the omission of the $k$-th element.
 Then the symbol $\timeso$ is compatible (unitarily equivalent) with
 the normalized symmetric tensor product on the Fock space with the one particle space $P_1\H$.
 The domain of $\phiout(x)$ includes the set $\houtprod$ of all linear combinations (without closure)
 of product states
 $\xi_1\timeso\xi_2\timeso \cdots \timeso \xi_n$, where $\xi_k$ is an arbitrary vector in $P_1\H$.
 \item\label{lee:covariance} It holds that $\Ad U(g)(\phiout(x)) = \phiout(\Ad U(g)(x))$
 if $g$ is a Poincar\'e transformation.
 \item For the resolvent $R_{\pm i}(y)= (y\pm i)^{-1}$ of $y$, it holds that
\begin{align*}
&[R_{\pm i}(\phiout(x_1)), R_{\pm i}(\phiout(x_2))]\\
&= \<\Om, [\phiout(x_1), \phiout(x_2)]\Om\>\cdot
  R_{\pm i}(\phiout(x_1))R_{\pm i}(\phiout(x_2))^2 R_{\pm i}(\phiout(x_1))\\
&=   \Re \<P_1x\Om, P_1x_2\Om\>\cdot R_{\pm i}(\phiout(x_1))R_{\pm i}(\phiout(x_2))^2 R_{\pm i}(\phiout(x_1)),
\end{align*}
where $\Re$ denotes the real part of the following number.
 \item For $x \in \A_{N_0}(O)$ and $y \in \F(V_{O,+})$, it holds that $[R_{\pm i}(\phiout(x)),y] = 0$.
 \end{enumerate}
\end{theorem}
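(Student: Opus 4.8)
The plan is to reduce the entire theorem to norm estimates on the approximating operators $\Phi^{h_T}(x)$ and on their pairwise commutators, all evaluated on the dense domain $\F(V_{O,+})\Om$ of finite products of local vectors from the future tangent. I would split the work into a ``soft'' geometric part — items (1), (5), (7) — and an analytic core — items (2), (3), (4), (6) — the latter being where the momentum-space regularity $x\in\A_{N_0}(O)$, $N_0\ge 15$, is genuinely consumed.

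For the soft part, the elementary fact to exploit is that for $t$ large the sphere $\{(t,t\mathbf n):\mathbf n\in S^2\}$ translated by the fixed bounded localization region of $x$ recedes to the forward light cone and becomes entirely spacelike to any fixed bounded $O_+\subset V_{O,+}$: writing $v=o-o_+$ with $o$ in the support of $x$ and $o_+\in O_+$, the vector $o_+-o$ is future-timelike, hence $v_0<-|\mathbf v|$ and $v_0-\mathbf n\cdot\mathbf v<0$ uniformly in $\mathbf n$, so $\bigl((t,t\mathbf n)+v\bigr)\cdot\bigl((t,t\mathbf n)+v\bigr)=2t(v_0-\mathbf n\cdot\mathbf v)+v\cdot v<0$ once $t$ is large. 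Thus $[\Phi^t(x),y]=0$ for $t\ge T_0(x,O_+)$, and since $h_T$ is supported near $t=T$ with width of order $\log|T|$ this gives $[\Phi^{h_T}(x),y]=0$ for all large $T$; approximating $\zeta\in\D(\phiout(x))$ by vectors in $\F(V_{O,+})\Om$ and using that $\phiout(x)$ is closed then yields item (1), together with the invariance of the domain under $y$. Item (5) is the same argument in disguise: a Poincar\'e transformation $g$ takes $\Phi^t(x)$ to $\Phi^{t'}(\Ad U(g)(x))$ up to an error negligible in the limit — a time translation only shifts the argument of $h_T$ by $O(1)$ against the width $\log|T|$, while a boost or a rotation merely reparametrizes the sphere integral — and it carries $\F(V_{O,+})\Om$ onto the corresponding core for $\F(V_{gO,+})\Om$, so conjugating the closures gives covariance. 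Item (7) then follows from (1) and the self-adjointness in (2): a bounded operator that leaves $\D(\phiout(x))$ invariant and commutes with $\phiout(x)$ there commutes with $R_{\pm i}(\phiout(x))$.

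The analytic core is where I expect the real difficulty. From the explicit formula $\Phi^t(x)\Om=|\mathbf P|^{-1}\bigl(e^{it(H-|\mathbf P|)}-e^{it(H+|\mathbf P|)}\bigr)Hx\Om$ together with the $N_0$-fold vanishing of the momentum-space test function built into $\A_{N_0}(O)$, one extracts the uniform bound $\sup_t\|\Phi^t(x)\Om\|<\infty$, and the mean ergodic theorem already recorded gives $\Phi^{h_T}(x)\Om\to P_1x\Om$. The plan is to propagate this to the iterated vectors $\Phi^{h_T}(x_1)\cdots\Phi^{h_T}(x_n)\Om$ by induction on $n$: for each adjacent pair write $\Phi^{h_T}(x_i)\Phi^{h_T}(x_{i+1})=\Phi^{h_T}(x_{i+1})\Phi^{h_T}(x_i)+[\Phi^{h_T}(x_i),\Phi^{h_T}(x_{i+1})]$ and show that the commutator, applied to any vector already controlled at the previous level, converges to the c-number $\Re\langle P_1x_i\Om,P_1x_{i+1}\Om\rangle$ times the identity plus a remainder that vanishes as $T\to\infty$. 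Concretely $[\Phi^t(x),\Phi^s(x')]$ is a double sphere integral of $[\partial_0x(t,t\mathbf n),\partial_0x'(s,s\mathbf m)]$, which by locality is supported where the two receding regions overlap ($\mathbf n$ near $\mathbf m$, $t$ near $s$), and after the double time-average over $h_T$ only the ``diagonal'' c-number part survives. This is precisely where the regularity is indispensable: the receding sphere has area of order $t^2$, the definition of $\Phi^t$ carries an extra factor $t$, and the successive norm estimates bring in inverse powers of $|\mathbf P|$ (equivalently of the light-cone variable), so the order of the zero at the tip of the light cone must be large enough to beat all of them — the accounting in \cite{Buchholz77} gives the threshold $N_0\ge 15$. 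Granting these estimates, the resolvent identity of item (6) follows, it identifies $\phiout(x)$ on the domain $\houtprod$ with the sum of a creation and an annihilation operator on the symmetric Fock space over $P_1\H$ (the Bose relations of item (4)), that Fock domain consists of analytic vectors so $\phiout(x)$ is essentially self-adjoint there, and $\F(V_{O,+})\Om$ — dense by Reeh-Schlieder — is a core by the very construction of $\phiout(x)$, giving item (2); finally the uniform resolvent bounds of item (6) upgrade the vacuum convergence to strong resolvent convergence of $\Phi^{h_T}(x)$, which is item (3).

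In short, the main obstacle is the family of decay estimates for $\Phi^t(x)$ and for the commutators $[\Phi^t(x),\Phi^s(x')]$ near the tip of the light cone — quantifying how fast the ``recollision'' contributions die as $t,s\to\infty$ — and it is exactly because these estimates lean on the momentum-space regularity, which is not stable under the full conformal group, that the covariance of $\phiout$ under all of $\gconf$ will have to be recovered for general local operators by a separate argument.
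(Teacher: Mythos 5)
The paper does not actually prove this theorem: it is quoted from Buchholz's 1977 paper (Lemma 1, Lemma 6, Theorems 7--9 there), so the only thing to compare your attempt against is that original argument, and your outline does reproduce its skeleton. The soft parts are sound: the computation $\bigl((t,t\mathbf n)+v\bigr)\cdot\bigl((t,t\mathbf n)+v\bigr)=2t(v_0-\mathbf n\cdot\mathbf v)+v\cdot v$ is exactly the geometric input for items (1) and (7), the closedness argument correctly extends the commutation from the core $\F(V_{O,+})\Om$ to the full domain, and covariance (5) is indeed best obtained not from the heuristic that a boost ``merely reparametrizes the sphere integral'' (it does not --- the boosted sphere sits at $\mathbf n$-dependent times) but from the endgame you state: both $\Ad U(g)(\phiout(x))$ and $\phiout(\Ad U(g)(x))$ are self-adjoint and act as $z\Om\mapsto zP_1(\Ad U(g)(x))\Om$ on the common core $\F(V_{gO,+})\Om$, using that $\A_{N_0}$ is Poincar\'e-covariant and that $P_1$ commutes with $U(g)$.

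The gap is in the analytic core, and you half-concede it yourself. Items (2), (3), (4), (6) all rest on uniform-in-$T$ bounds for $\Phi^{h_T}(x_1)\cdots\Phi^{h_T}(x_n)\Om$ and on the convergence of the time-averaged commutators to c-numbers; these are precisely Buchholz's Lemma 2 and Proposition II, proved in his Appendix via the Huygens/Jost--Lehmann--Dyson structure of massless commutators, and the bookkeeping that produces the threshold $N_0\ge 15$ lives entirely there. Writing ``granting these estimates'' therefore grants essentially the whole theorem; as a proof the proposal stops exactly where the difficulty begins. Two further steps are glossed over. First, before you can ``identify $\phiout(x)$ on $\houtprod$ with the free field'' you must show that $\houtprod$ lies in the domain of the closure taken from $\F(V_{O,+})\Om$; this is Buchholz's Lemma 6, which approximates $\xi_1\timeso\cdots\timeso\xi_n$ weakly by local vectors $y_k\Om$ with $y_k^*y_k\Om$ uniformly bounded (the present paper reviews exactly this step in Lemma \ref{lm:tangent}), and only then does the analytic-vector argument upgrade the symmetric closure to a self-adjoint operator. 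Second, $[\Phi^{h_T}(x_1),\Phi^{h_T}(x_2)]$ does not converge to a multiple of the identity as an operator --- the commutator is a c-number only on the appropriate domain --- which is why item (6) must be phrased as the resolvent identity with the factor $R_{\pm i}(\phiout(x_1))R_{\pm i}(\phiout(x_2))^2R_{\pm i}(\phiout(x_1))$ rather than as commuting resolvents; deriving that identity from the domain statement is a separate, if routine, manipulation.
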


We note that by Claims (\ref{lee:commutation}) and (\ref{lee:houtprod}),
the domain of $\phiout(x)$ includes $\F(V_{O,+})\houtprod$.

The restriction to $\A_{N_0}$ is essential in the original proof \cite{Buchholz77}.
The technical issue is that the set $\A_{N_0}(O)$ is covariant under Poincar\'e transformations
and dilations but not under conformal transformations.
We will extend these results to each smooth operator in a local algebra $\A(O)$. This has been expected
by Buchholz himself in the same paper \cite[P.\! 157, footnote]{Buchholz77}.

\subsubsection{Extension to general smooth operators}\label{extension}
We exploit the arguments of \cite[Chapter VI\!I\!I.7]{RSI} and \cite[Chapter X.10]{RSII}.
Let $\{A_n\}$ be a sequence of (unbounded) operators.
The following is an adaptation of \cite[Theorem X.63]{RSII} to the case of
our interest.
\begin{lemma}\label{lm:conv-general}
 Let $\{A_n\}$ be a sequence of self-adjoint operators on $\H$,
 whose domains have a dense intersection $\D$
 and suppose that their resolvents $R_{\pm i}(A_n)$ are strongly convergent, whose limits we denote by $R_\pm$
 and that for each $\xi \in \D$, $A_n\xi$ is convergent in norm, whose limit we denote
 by $A\xi$.
 Then there is a self-adjoint extension $\tilde A$ of $A$
 and $A_n$ are convergent to $\tilde A$ in the strong resolvent sense. 
\end{lemma}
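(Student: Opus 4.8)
The plan is to adapt the proof of \cite[Theorem X.63]{RSII}: the strong limits $R_\pm$ already carry all the algebraic structure of the resolvents of a self-adjoint operator, and the norm convergence of $A_n\xi$ on the dense set $\D$ is precisely what excludes the degenerate case in which the limit fails to be a resolvent at all. I would begin by collecting the elementary properties of $R_\pm$. Since each $A_n$ is self-adjoint, $\|R_{\pm\ima}(A_n)\|\le 1$, so $R_\pm$ are bounded with $\|R_\pm\|\le 1$; and $R_{-\ima}(A_n)=R_{+\ima}(A_n)^*$, so the weak-limit computation $\<\eta,R_-\xi\> = \lim_n\<R_{+\ima}(A_n)\eta,\xi\> = \<R_+\eta,\xi\>$ gives $R_- = R_+^*$. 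The first resolvent identity reads $R_{+\ima}(A_n)-R_{-\ima}(A_n) = -2\ima\,R_{+\ima}(A_n)R_{-\ima}(A_n) = -2\ima\,R_{-\ima}(A_n)R_{+\ima}(A_n)$, and since the product of two uniformly bounded strongly convergent sequences converges strongly, passing to the limit yields $R_+-R_+^* = -2\ima\,R_+R_+^* = -2\ima\,R_+^*R_+$; in particular $R_+$ is normal and $\ker R_+ = \ker R_+^*$. The same manipulation on the ranges gives $\ran R_+ = \ran R_+^{*}=:\D_0$, and multiplying the resolvent identity by $R_+^{-1}$ on the left and $R_-^{-1}$ on the right gives $R_-^{-1} = R_+^{-1}-2\ima$ on $\D_0$.

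Next comes the one step using the hypothesis on $\D$. For $\xi\in\D$ one has $R_{+\ima}(A_n)(A_n+\ima)\xi = \xi$; since $(A_n+\ima)\xi\to (A+\ima)\xi$ in norm while $R_{+\ima}(A_n)\to R_+$ strongly with a uniform bound, one may pass to the limit to obtain $R_+(A+\ima)\xi = \xi$, and likewise $R_+^*(A-\ima)\xi=\xi$. Hence $\ran R_+\supseteq\D$ is dense; and if $R_+\eta = 0$, then also $R_+^*\eta = 0$ by the normality relation, so $\<\eta,\xi\> = \<\eta,R_+(A+\ima)\xi\> = \<R_+^*\eta,(A+\ima)\xi\> = 0$ for every $\xi\in\D$, i.e.\ $\eta = 0$. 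Thus $R_+$ is injective with dense range.

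Therefore $R_+^{-1}$ is densely defined on $\ran R_+$, and I would set $\tilde A := R_+^{-1}-\ima$ with $D(\tilde A) := \ran R_+$. The relation $R_+-R_+^* = -2\ima R_+^*R_+$ together with normality of $R_+$ makes $\tilde A$ symmetric by a direct computation; moreover $\tilde A+\ima = R_+^{-1}$ and $\tilde A-\ima = R_+^{-1}-2\ima = R_-^{-1}$ on $D(\tilde A)$, so $\ran(\tilde A\pm\ima) = \H$ and $\tilde A$ is self-adjoint. For $\xi\in\D$ we have $R_+^{-1}\xi = (A+\ima)\xi$, hence $\tilde A\xi = A\xi$, so $\tilde A$ extends $A$. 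Finally, $R_{+\ima}(A_n)\to R_+ = (\tilde A+\ima)^{-1}$ strongly is, by the standard fact that strong convergence of resolvents at one non-real point propagates to all non-real points, exactly the statement that $A_n\to\tilde A$ in the strong resolvent sense.

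The only genuine difficulty is the injectivity-and-dense-range step: without the norm convergence of $A_n\xi$ on a dense set the strong limit of the resolvents could be non-injective or have non-dense range, and then no self-adjoint limit operator would exist. Everything else is bookkeeping with the resolvent identity and the criterion that a symmetric operator $\tilde A$ with $\ran(\tilde A\pm\ima)=\H$ is self-adjoint.
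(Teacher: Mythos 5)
Your proof is correct and follows essentially the same route as the paper: the one place the hypothesis on $\D$ enters --- passing to the limit in $R_{\pm i}(A_n)(A_n\pm i)\xi=\xi$ to get $R_\pm(A\pm i)\xi=\xi$, whence $R_\pm$ are injective with dense range --- is exactly the paper's argument. The only difference is that the paper then invokes the Trotter--Kato theorem directly, whereas you re-derive it by constructing $\tilde A=R_+^{-1}-i$ on $\ran R_+$ explicitly; both are fine.
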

\begin{proof}
 We claim that $\ker R_\pm = \{0\}$. Let $\xi \in \ker R_+$ and $\eta \in \D$.
 It is clear that $R_+^* = R_-$.
 It holds that
 \begin{eqnarray*}
  \<\xi,\eta\> &=& \<\xi, R_{-i}(A_n)(A_n-i)\eta\> \\
  &=& \<R_{+ i}(A_n)\xi, (A_n-i)\eta\> \\
  &=& \lim_n\, \<R_{+ i}(A_n)\xi, (A_n-i)\eta\> \\
  &=& \<R_+\xi, (A-i)\eta\> \\
  &=& 0.
 \end{eqnarray*}
As $\D$ is dense, $\xi = 0$. Similarly $\ker R_- = \{0\}$
and it follows that $\ran R_\pm$ are dense in $\H$ since
$R_\pm = R_\mp^*$. Then by the Trotter-Kato theorem \cite[Theorem V\!I\!I\!I.22]{RSI}
there is a self-adjoint operator $\tilde A$ and $A_n \to \tilde A$ in the strong resolvent sense.

The domain of $\tilde A$ is exactly $R_\pm\H$ and for $\xi \in \D$ it holds
that
\[
 R_\pm\cdot (A\pm i)\xi = \lim_n R_{\pm i}(A_n) (A_n\pm i)\xi= \xi,
\]
by the uniform boundedness of $R_{\pm i}(A_n)$,
hence $\xi$ is in the range of $R_\pm$ and
$\D$ is included in the domain of $\tilde A$.
\end{proof}

We do not know whether $\D$ is a core of $\tilde A$ in general.
We will prove this in the case of asymptotic fields.

Let $N_0 \ge 15$. For a smooth $x \in \A(O)$, where $O$ is a double cone, there is a sequence $x_n \in \A_{N_0}(O_n)$
such that $P_1x\Om = \lim P_1x_n\Om$ and $P_1x^*\Om = \lim P_1x_n^*\Om$ by the argument of \cite[Remark, p.155]{Buchholz77},
where $O_n$ is growing to the past of $O$.
Namely, for $n \in \NN$ one can take $\f_n(t)$ whose Fourier transform is
\[
 \tilde \f_n(\omega) = (1+(e^{-i\omega n}-1)/i\omega n)^{N_0}\cdot \tilde \f(\omega/n),
\]
where $\f$ is a test function which vanishes for $t\ge0$ and $\int dt\, \f(t) = 1$.
We define $x_n = \int dt\, \f_n(t) \Ad U(\tau(t,0))(x)$,
where $\tau$ denotes the translation.
If $x$ is self-adjoint, we may consider $x_n + x^*_n$ and assume that $x_n$ are self-adjoint as well.
It is clear that $x_n$ are contained in the union of past translations of $O$. Let $O_n$ be their localization regions.
Let $V_{O,+}$ be the future tangent of $O$, then it is the future tangent
of the finite union $O \cup O_1 \cup\cdots\cup O_n$.
By \cite[Theorem 7]{Buchholz77} cited above, all $\{\phiout(x_n)\}$ are self-adjoint.
In addition, $\F(V_{O,+})\Omega$, and accordingly $\F(V_{O,+})\houtprod$, are common cores.

\begin{lemma}\label{lm:conv-field}
 The sequence $\{\phiout(x_n)\}$ is convergent in the strong resolvent sense.
\end{lemma}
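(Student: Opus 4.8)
The plan is to verify the hypotheses of Lemma \ref{lm:conv-general} for the sequence $A_n = \phiout(x_n)$. Three things must be checked: (a) the operators $\phiout(x_n)$ are self-adjoint with a common dense domain; (b) the resolvents $R_{\pm i}(\phiout(x_n))$ converge strongly; and (c) on that common domain the vectors $\phiout(x_n)\xi$ converge in norm. Point (a) is already in hand: by the discussion preceding the lemma, $\F(V_{O,+})\Om$, and hence $\F(V_{O,+})\houtprod$, is a common core for all the self-adjoint operators $\phiout(x_n)$, so we may take $\D = \F(V_{O,+})\houtprod$ (or even just $\F(V_{O,+})\Om$) as the dense intersection of domains.

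For point (c), I would first establish convergence on the simplest vectors and then bootstrap. On $\Om$ itself, $\phiout(x_n)\Om = \xi_n := P_1 x_n\Om$, and by construction $\xi_n \to P_1 x\Om =: \xi$ in norm, so convergence holds there. On a general product vector $\xi_1\timeso\cdots\timeso\xi_m \in \houtprod$, Claim (\ref{lee:houtprod}) of Theorem \ref{th:bu} gives the explicit formula
\[
 \phiout(x_n)(\xi_1\timeso\cdots\timeso\xi_m) = \xi_n\timeso\xi_1\timeso\cdots\timeso\xi_m + \sum_{k=1}^m \<\xi_n,\xi_k\>\,\xi_1\timeso\cdots\check\xi_k\cdots\timeso\xi_m;
\]
since $\timeso$ is the normalized symmetric tensor product (unitarily), both the leading term and the contraction terms depend on $x_n$ only through $\xi_n$, and they converge in norm as $\xi_n\to\xi$. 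Hence $\phiout(x_n)$ converges in norm on all of $\houtprod$. For vectors of the form $y\cdot(\xi_1\timeso\cdots\timeso\xi_m)$ with $y\in\F(V_{O,+})$, Claim (\ref{lee:commutation}) lets one pull $y$ through: $\phiout(x_n)\,y\,\zeta = y\,\phiout(x_n)\zeta$, so norm convergence on $\houtprod$ gives it on $\F(V_{O,+})\houtprod$ as well, taking $\D = \F(V_{O,+})\houtprod$.

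Point (b) — strong resolvent convergence — is the main obstacle, and this is where I would use the resolvent identities in Claims (6)–(7) of Theorem \ref{th:bu}. The idea is that on the dense set $\F(V_{O,+})\Om$, which is a common core, $R_{\pm i}(\phiout(x_n))\,\eta$ should converge: one writes $R_{\pm i}(\phiout(x_n))\,y\Om = R_{\pm i}(\phiout(x_n))\,y\,\Om$ and uses that $R_{\pm i}(\phiout(x_n))$ commutes with $y\in\F(V_{O,+})$ (Claim (7)), reducing to the action on $\Om$, combined with the fact that $\phiout(x_n)^k\Om$ are the Fock vectors built from $\xi_n$, which converge. Since the $R_{\pm i}(\phiout(x_n))$ are uniformly bounded by $1$, strong convergence on the dense set $\F(V_{O,+})\Om$ upgrades to strong convergence on all of $\H$. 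With (a), (b), (c) in place, Lemma \ref{lm:conv-general} yields a self-adjoint extension $\widetilde A$ of the common limit operator and $\phiout(x_n)\to\widetilde A$ in the strong resolvent sense, which is the assertion. (The identification of $\widetilde A$ with the candidate $\phiout(x)$, and the fact that $\F(V_{O,+})\houtprod$ is actually a core for it, would be treated in the statements following this lemma.)
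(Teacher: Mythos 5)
Your reduction is the right one, and parts (a) and (c) of your plan are fine: the common core $\F(V_{O,+})\Om$ (hence $\F(V_{O,+})\houtprod$) is available from the preceding discussion, and norm convergence of $\phiout(x_n)$ on $\F(V_{O,+})\houtprod$ does follow from the explicit Fock-space formula of Theorem \ref{th:bu}(\ref{lee:houtprod}) together with $\xi_n\to\xi$ and the commutation with $y$. The problem is point (b), which is not an auxiliary hypothesis but the entire content of the lemma: note that the conclusion of Lemma \ref{lm:conv-general} (strong resolvent convergence to a self-adjoint limit) is essentially its own hypothesis (b) plus an identification of the limit, so ``verifying the hypotheses of Lemma \ref{lm:conv-general}'' cannot discharge the burden of proving that $R_{\pm i}(\phiout(x_n))$ converges strongly. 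You correctly reduce (b), via $R_{\pm i}(\phiout(x_n))y\Om = yR_{\pm i}(\phiout(x_n))\Om$ and uniform boundedness, to the convergence of $R_{\pm i}(\phiout(x_n))\Om$ — this is exactly the paper's first step — but you then dispose of that convergence with the clause ``combined with the fact that $\phiout(x_n)^k\Om$ are the Fock vectors built from $\xi_n$, which converge.'' That inference is not valid as stated: for unbounded self-adjoint operators, convergence of all the moments $A_n^k\Om$ does not by itself yield convergence of $(A_n\pm i)^{-1}\Om$; the resolvent sees the full spectral resolution, not finitely many powers applied to a vector.

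What is missing is the bridge from the one-particle data $\xi_n\to\xi$ to the resolvents. The paper supplies it by restricting to the invariant Fock subspace $\overline{\houtprod}$, where $\phiout(x_n)$ is the free field of $\xi_n$, and computing the exponentiated fields explicitly on the total set of coherent vectors: $e^{it\phiout(x_n)}e^\eta = e^{-\frac{t^2}{2}\<\xi_n,\xi_n\>}e^{-t\<\xi_n,\eta\>}e^{t\xi_n+\eta}$, which is manifestly continuous in $\xi_n$; uniform boundedness then gives strong convergence of the unitary groups $e^{it\phiout(x_n)}$ on the Fock space, and the unitary-group version of the Trotter--Kato theorem \cite[Theorem V\!I\!I\!I.21]{RSI} converts this into strong resolvent convergence there, in particular convergence of $R_{\pm i}(\phiout(x_n))\Om$. (An alternative patch along your lines would be to prove uniform entire-vector estimates for $\Om$ and all of $\houtprod$, sum the power series for $e^{it\phiout(x_n)}$, and then invoke the same Trotter--Kato theorem — but some such argument passing through the unitary groups or the explicit second-quantization structure must be made; it cannot be omitted.)
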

\begin{proof}
Let us denote $R_{\pm,n} = R_{\pm i}(\phiout(x_n))$.
On the subspace $\{y\Om: y \in \F(V_{O,+})\}$, which is a common core
for $\{\phiout(x_n)\}$, it holds that
$R_{\pm,n} y\Om = y R_{\pm,n}\Om$ and $y \in \F(V_{O,+})$ is bounded.
Since $\{R_{\pm,n}\}$ is uniformly bounded, it is enough to show
that $R_{\pm,n}\Om$ is convergent.

We know from \cite{Buchholz77} that $\phiout(x_n)$ acts on $\houtprod$ like
the free field. Since the problem is now reduced to the vacuum $\Om$ and
the free fields, we can restrict ourselves to $\houtprod$ and
its closure, namely the Fock space generated from $\Om$ by
the fields. Let us denote $\xi_n := P_1 x_n\Om$.
The action of the exponentiated field $e^{i\phiout(x_n)}$ on the
vacuum $\Om$ is given by $e^{i\phiout(x_n)}\Om = e^{-\frac12\<\xi_n,\xi_n\>} e^{\xi_n}$,
where we introduced a vector (cf.\! \cite{Longo08})
\[
e^{\eta} := \Om \bigoplus_k \frac{1}{\sqrt{k!}} \eta^{\otimes k}.
\]
It is easy to see that $\<e^\eta,e^\zeta\> = e^{\<\eta,\zeta\>}$.
Now it is obvious that $\eta \mapsto e^{\eta}$ is continuous.
This implies the convergence $e^{\xi_n}\to e^\xi$ when $\xi_n \to \xi$.
The exponentiated field acts by
$e^{i\phiout(x_n)}e^\eta = e^{-\frac12\<\xi_n,\xi_n\>}e^{-\<\xi_n,\eta\>}e^{\xi_n+\eta}$
and $\{e^\eta\}$ is total in the Fock space.
The whole argument applies to $t\xi_n$ for arbitrary $t\in\RR$,
hence $e^{it\phiout(x_n)}$ are strongly convergent to
$W(t\xi)$ on the Fock space (because this sequence is uniformly bounded),
where $W(\xi)$ is an operator
which acts by $W(\xi)\eta = e^{-\frac12\<\xi,\xi\>}e^{-\<\xi,\eta\>}e^{\xi+\eta}$.

Hence we obtain the convergence in the strong resolvent sense \cite[Theorem V\!I\!I\!I.21]{RSII},
in particular $R_{\pm,n}\Om$ is convergent.
\end{proof}

As seen from Theorem \ref{th:bu}(\ref{lee:houtprod}),
$\phiout(x_n)$ is convergent on $\houtprod$, hence on $\F(V_{O,+})\houtprod$.

By Lemma \ref{lm:conv-general}, there is a self-adjoint operator,
which we denote by $\Upsilon(\xi)$, such that $\Upsilon(\xi)$ is the limit
of $\{\phiout(x_n)\}$ in the strong resolvent sense.
Accordingly, $\Upsilon(\xi)$ commutes with $\F(V_{O,+})$ on its domain.
Importantly, we have shown that $\Upsilon(\xi)$ is a self-adjoint extension
of the limit of the sequence $\{\phiout(x_n)\}$ on a common domain $\F(V_{O,+})\houtprod$.
Furthermore, the action of $\Upsilon(\xi)$ is determined by $\xi$
as in Theorem \ref{th:bu}(\ref{lee:houtprod}).
This implies that $\Om$ is in the domain of $\Upsilon(\xi)^m$ for any $m \in \NN$.

\begin{lemma}\label{lm:analytic}
 Any vector $y\Om \in \F(V_{O,+})\Om$ is an analytic vector for $\Upsilon(\xi)$.
 In particular, $\F(V_{O,+})\houtprod$ is a core of $\Upsilon(\xi)$.
\end{lemma}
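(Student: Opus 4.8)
\section*{Proof proposal}

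The plan is to deduce the analyticity of an arbitrary $y\Om\in\F(V_{O,+})\Om$ from that of the vacuum $\Om$, and to control $\|\Upsilon(\xi)^m\Om\|$ directly from the free-field action of $\Upsilon(\xi)$ on $\houtprod$.

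First, since $\Upsilon(\xi)$ commutes with $\F(V_{O,+})$ on its domain and $\Om\in\bigcap_{m\in\NN}\D(\Upsilon(\xi)^m)$, an immediate induction on $m$ gives $y\Om\in\D(\Upsilon(\xi)^m)$ with $\Upsilon(\xi)^m\,y\Om=y\,\Upsilon(\xi)^m\Om$ for every $m\in\NN$ and every $y\in\F(V_{O,+})$; hence $\|\Upsilon(\xi)^m\,y\Om\|\le\|y\|\,\|\Upsilon(\xi)^m\Om\|$ and it suffices to estimate $\|\Upsilon(\xi)^m\Om\|$. For this I would use that $\Upsilon(\xi)$ acts on $\houtprod$ (which it preserves) by the free-field formula of Theorem \ref{th:bu}(\ref{lee:houtprod}) with one-particle vector $\xi$; writing $\xi^{\timeso n}:=\xi\timeso\cdots\timeso\xi$ ($n$ copies, $\xi^{\timeso 0}:=\Om$), these are mutually orthogonal and the formula specializes to the three-term recursion $\Upsilon(\xi)\,\xi^{\timeso n}=\xi^{\timeso(n+1)}+n\<\xi,\xi\>\,\xi^{\timeso(n-1)}$. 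Pairing this with $\xi^{\timeso(n+1)}$ and using self-adjointness of $\Upsilon(\xi)$ gives $\|\xi^{\timeso(n+1)}\|^2=(n+1)\<\xi,\xi\>\,\|\xi^{\timeso n}\|^2$, hence $\|\xi^{\timeso n}\|^2=n!\,\<\xi,\xi\>^n$; expanding $\Upsilon(\xi)^m\Om$ along the $\xi^{\timeso n}$ then yields, exactly as for the moments of a Gaussian, $\|\Upsilon(\xi)^m\Om\|^2=(2m-1)!!\,\<\xi,\xi\>^m$. (Alternatively, Lemma \ref{lm:conv-field} gives $e^{it\Upsilon(\xi)}\Om=W(t\xi)\Om$, whence $\<\Om,e^{it\Upsilon(\xi)}\Om\>=e^{-\frac12 t^2\<\xi,\xi\>}$, and the same conclusion follows by functional calculus.)

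Now the elementary bound $(2m-1)!!\le 2^m m!$ gives $\|\Upsilon(\xi)^m\Om\|\le(2\<\xi,\xi\>)^{m/2}\sqrt{m!}$, so
\[
 \sum_{m=0}^\infty \frac{t^m}{m!}\,\|\Upsilon(\xi)^m\,y\Om\| \;\le\; \|y\|\sum_{m=0}^\infty \frac{\bigl(t\sqrt{2\<\xi,\xi\>}\,\bigr)^m}{\sqrt{m!}} \;<\; \infty
\]
for every $t\in\RR$: every $y\Om\in\F(V_{O,+})\Om$ is in fact an entire analytic vector for $\Upsilon(\xi)$. For the ``in particular'' statement I would invoke Nelson's analytic vector theorem for the restriction $B$ of $\Upsilon(\xi)$ to $\F(V_{O,+})\houtprod$. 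This domain is $B$-invariant, since $\Upsilon(\xi)^k\Om\in\houtprod$ and $\Upsilon(\xi)^k\,y\Om=y\,\Upsilon(\xi)^k\Om\in\F(V_{O,+})\houtprod$; $B$ is symmetric, being a restriction of the self-adjoint $\Upsilon(\xi)$; and $\D(B)$ contains $\F(V_{O,+})\Om$, which is total by the Reeh--Schlieder property and, by the previous step, consists of analytic vectors for $B$. Hence $B$ is essentially self-adjoint, and as $\overline B$ is a self-adjoint restriction of the self-adjoint $\Upsilon(\xi)$ we conclude $\overline B=\Upsilon(\xi)$; that is, $\F(V_{O,+})\houtprod$ is a core.

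I expect the moment estimate to be the real point: everything hinges on $\|\Upsilon(\xi)^m\Om\|^2$ growing only like the Gaussian moments $(2m-1)!!\,\<\xi,\xi\>^m$ of the two-point function, rather than like $(2m)!$, which is exactly what lets the series with the $1/m!$ weights converge for every $t$; the commutation of $\Upsilon(\xi)$ with $\F(V_{O,+})$ is then what transports this bound from $\Om$ to all of $\F(V_{O,+})\Om$ at the sole cost of the factor $\|y\|$.
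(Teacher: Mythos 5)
Your proposal is correct and follows essentially the same route as the paper: reduce to the vacuum by commuting $\Upsilon(\xi)$ past $y$, bound $\|\Upsilon(\xi)^m\Om\|$ by the Gaussian moments $\sqrt{(2m)!\,2^{-m}(m!)^{-1}}\,\|\xi\|^m=\sqrt{(2m-1)!!}\,\|\xi\|^m$ coming from the free-field action on $\houtprod$, and apply Nelson's theorem on the $\Upsilon(\xi)$-stable domain $\F(V_{O,+})\houtprod$. Your version merely spells out the moment computation and the identification $\overline B=\Upsilon(\xi)$ more explicitly than the paper does.
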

\begin{proof}
 We have to estimate $\Upsilon(\xi)^ky\Om$. 
The operator $\Upsilon(\xi)$ commutes with $y$ and acts on $\Om$
as the free field, hence we have
\[
\|\Upsilon(\xi)^my\Omega\| \le \|y\|\cdot \left(\sqrt{(2m)!\,2^{-m}(m!)^{-1}}\right)\cdot \|\xi\|^m.
\]
Then it is easy to see that $\sum_m \|\Upsilon(\xi)^m y\Om\| \frac{t^m}{m!}$ is finite for any $t$
and since the subspace $\F(V_{O,+})\houtprod$ of the domain is stable under $\phiout(\x)$,
by Nelson's analytic vector theorem \cite[Theorem X.39, Corollary 2]{RSII}
(the stability of the domain is important, see the reference\footnote{We thank D.\! Buchholz
for pointing out this assumption.}),
$\F(V_{O,+})\houtprod$ is a core of $\Upsilon(\xi)$.
\end{proof}

\begin{lemma}\label{lm:tangent}
 The subspace $\F(V_{O,+})\Om$ is a core of $\Upsilon(\xi)$.
\end{lemma}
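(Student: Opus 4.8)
The plan is to bootstrap from the core property already established in Lemma~\ref{lm:analytic}, that $\F(V_{O,+})\houtprod$ is a core of $\Upsilon(\xi)$, together with Buchholz's Theorem~\ref{th:bu}(\ref{lee:core}), which gives the analogous statement for each approximating field $\phiout(x_n)$.

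First I would recast the claim. Since $\Upsilon(\xi)$ commutes with $\F(V_{O,+})$ and $\Upsilon(\xi)\Om=\xi$, the restriction of $\Upsilon(\xi)$ to $\F(V_{O,+})\Om$ is exactly the closable operator $y\Om\mapsto y\xi$ (whose closure is $\phiout(x)$), and $(\Upsilon(\xi)\pm i)[\F(V_{O,+})\Om]=\F(V_{O,+})(\xi\pm i\Om)$. As $\Upsilon(\xi)$ is a self-adjoint extension of this closable operator, its two deficiency indices coincide; hence it suffices to prove that $\F(V_{O,+})(\xi+i\Om)$ is dense in $\H$. Indeed, density of $\F(V_{O,+})(\xi+i\Om)$ makes one deficiency index vanish, hence both, so $\phiout(x)$ is self-adjoint and therefore equals $\Upsilon(\xi)$ --- which is precisely the assertion that $\F(V_{O,+})\Om$ is a core. (Equivalently, one may show directly that every product state $\xi_1\timeso\cdots\timeso\xi_n$, and its image under $\Upsilon(\xi)$, lies in the graph-norm closure of $\F(V_{O,+})\Om$; using that $\F(V_{O,+})$ is closed under products and that $\Upsilon(\xi)(yw\Om)=yw\xi$, this reduces the general element of $\F(V_{O,+})\houtprod$ to product states.)

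To obtain the density I would pass to the sequence $x_n\in\A_{N_0}(O_n)$ with $\xi_n:=P_1x_n\Om\to\xi$ used to define $\Upsilon(\xi)$. By Theorem~\ref{th:bu}(\ref{lee:core}), $\F(V_{O,+})\Om$ is a core of each $\phiout(x_n)$, so $\F(V_{O,+})(\xi_n+i\Om)=(\phiout(x_n)+i)[\F(V_{O,+})\Om]$ is dense in $\H$ for every $n$. I would then transfer this to $\xi$ by combining: the strong resolvent convergence $\phiout(x_n)\to\Upsilon(\xi)$ of Lemma~\ref{lm:conv-field}; the norm convergence $\xi_n+i\Om\to\xi+i\Om$; and the fact, visible from Theorem~\ref{th:bu}(\ref{lee:houtprod}), that on the common core $\F(V_{O,+})\houtprod$ the difference $\phiout(x_n)-\Upsilon(\xi)$ acts as the free-field operator attached to the \emph{small} one-particle vector $\xi_n-\xi$ --- an operator dominated by a multiple of $(N+1)^{1/2}\|\xi_n-\xi\|$ on vectors of finite particle number, and so converging strongly to $0$ on the out-Fock space. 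Concretely, for $\psi$ orthogonal to $\F(V_{O,+})(\xi+i\Om)$ one has $\langle(\phiout(x_n)+i)(y\Om),\psi\rangle=\langle y(\xi_n-\xi),\psi\rangle$ for all $y\in\F(V_{O,+})$, and playing this identity off against the density of $\{(\phiout(x_n)+i)(y\Om)\}_y$ and the strong resolvent convergence is designed to force $\psi=0$.

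The main obstacle is exactly this last transfer. Cyclicity is not a closed property, so density of $\F(V_{O,+})(\xi+i\Om)$ cannot follow from $\xi_n\to\xi$ alone; the argument has to use the strong resolvent convergence of the $\phiout(x_n)$ and the free-field structure controlling their differences on $\H_{\mathrm{out}}$ --- equivalently, it must keep control of the (a priori unboundedly large) local operators realizing the approximations of finite-particle vectors by $\F(V_{O,+})\Om$. This is where the restriction $N_0\ge 15$, needed so that Theorem~\ref{th:bu} applies along the whole sequence $x_n$, and the explicit mollifiers $\f_n$ of Buchholz's construction enter, and where the finiteness of the particle number of the vectors in $\houtprod$ is essential for bounding the free-field corrections.
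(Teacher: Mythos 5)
Your reduction is sound: since $\Upsilon(\xi)$ is a self-adjoint extension of the symmetric operator $y\Om\mapsto y\xi$ on $\F(V_{O,+})\Om$, the lemma is indeed equivalent to essential self-adjointness of that restriction, i.e.\ to density of $\F(V_{O,+})(\xi\pm i\Om)$. But the step that is supposed to deliver this density is missing, and you in effect concede this yourself. Strong resolvent convergence $\phiout(x_n)\to\Upsilon(\xi)$ together with $\xi_n\to\xi$ does not transfer the cyclicity of $(\phiout(x_n)+i)\F(V_{O,+})\Om$ to the limit: in your orthogonality argument, for $\psi\perp\F(V_{O,+})(\xi+i\Om)$ you obtain $\<(\phiout(x_n)+i)y\Om,\psi\>=\<y(\xi_n-\xi),\psi\>$, but to approximate $\psi$ by vectors $(\phiout(x_n)+i)y_{n,k}\Om$ you must in general let $\|y_{n,k}\|$ grow without bound, so the right-hand side is not $O(\|\xi_n-\xi\|)$ and no contradiction results. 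Moreover, the difference $\phiout(x_n)-\Upsilon(\xi)$ is controlled as the free field of $\xi_n-\xi$ only on $\houtprod$, not on $\F(V_{O,+})\Om$, so the $(N+1)^{1/2}$-type bound you invoke is not available where you need it.

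The ingredient that actually closes this gap --- and which the paper supplies by reproducing the proof of Buchholz's Lemma 6 --- is of a different nature: one shows directly that each product vector $\xi_1\timeso\cdots\timeso\xi_n$ lies in the domain of the closure $A_0$ of the restriction of $\Upsilon(\xi)$ to $\F(V_{O,+})\Om$. The trick is to approximate $\xi_1\timeso\cdots\timeso\xi_n$ \emph{weakly} by vectors $y_k\Om$ with $y_k\in\A_N(O_k)$, $O_k\subset V_{O,+}$, keeping $\|y_k^*y_k\Om\|$ (not $\|y_k\|$) uniformly bounded; then the commutation $[\Upsilon(\xi),y_k]=0$ and the fact that $\Om$ lies in the domain of $\Upsilon(\xi)^2$ give $|\<A_0^*\eta,y_k\Om\>|^2\le\|\eta\|^2\,\|y_k^*y_k\Om\|\,\|\Upsilon(\xi)^2\Om\|$ uniformly in $k$, whence $\xi_1\timeso\cdots\timeso\xi_n\in\D(A_0^{**})=\D(A_0)$. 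Totality of such vectors in each $n$-particle space and closedness of $A_0$ then give $\F(V_{O,+})\houtprod\subset\D(A_0)$, and Lemma \ref{lm:analytic} finishes the proof. Your parenthetical alternative points in this direction, but without the uniform bound on $\|y_k^*y_k\Om\|$ and the weak-limit/double-adjoint device the argument does not go through; as written, the proposal does not prove the density to which it correctly reduces the lemma.
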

\begin{proof}
 In \cite[Lemma 6]{Buchholz77}, it was shown that if $x_0 \in \A_{N_0}(O)$, $N_0 \ge 15$,
 then the domain $\D(\phiout(x_0))$
 of $\phiout(x_0)$, which is defined as the closure of the operator on $\F(V_{O,+})\Om$,
 includes $\houtprod$ and the action of $\phiout(x_0)$ on $\houtprod$ is exactly same
 as that of the free fields. Actually the only properties of $\phiout(x_0)$ used there are those
 that $\Om$ is in the domain of $\phiout(x_0)^*\phiout(x_0)$ and $\phiout(x_0)$ commute
 with $\F(V_{O,+})$, which are true also for $\Upsilon(\xi)$ as we have seen.
 
 For the reader's convenience, we review the proof of \cite[Lemma 6]{Buchholz77}.
 Let $x_0 \in \A_{N_0}(O)$.
 There is an $N$ (depending on $n$ which appears later) such that there is a sequence $\{y_k\}$
 which belongs to $\A_N(O_k)$, where $O_k \subset V_{O,+}$ (the localization region $O_k$ depends
 on $k$), $y_k\Om \to \xi_1\timeso\cdots\timeso\xi_n$ weakly and $y_k^*y_k\Om$ is uniformly
 bounded. To see that $\xi_1\timeso\cdots\timeso\xi_n$ is in the domain of $\phiout(x_0)$,
 one needs to estimate $\<\phiout(x_0)^*\eta, y_k\Om\>$ for an arbitrary vector
 $\eta \in \D(\phiout(x_0)^*)$. By using the fact that $\phiout(x_0)$ commutes with $y_k$,
 (which is also valid for $\Upsilon(\xi)$),  one obtains
 \[
  |\<\phiout(x_0)^*\eta, y_k\Om\>|^2 \le \|\eta\|^2\cdot \|\phiout(x_0)y_k\Om\|^2
  \le \|\eta\|^2\cdot \|y_k^*y_k\Om\|\cdot \|\phiout(x_0)^*\phiout(x_0)\Om\|,
 \]
 if $\phiout(x_0)\Om$ is in the domain of $\phiout(x_0)^*$ (this follows in the original proof
 from the assumption that $x_0 \in \A_{N_0}(O)$ and this is the only point where $N_0 \ge 15$ 
 is required. For $\Upsilon(\xi)$ we already know that
 that one can repeat its action on $\Om$ arbitrarily many times). This expression is uniformly bounded by
 the choice of $y_k$, hence $\<\phiout(x_0)^*\eta, \xi_1\timeso\cdots\timeso\xi_n\>$
 is bounded by $\|\eta\|$ times a constant and $\xi_1\timeso\cdots\timeso\xi_n$ belongs to $\D(\phiout(x_0))$.

 In order to get the explicit action of $\phiout(x_0)$ on $\xi_1\timeso\cdots\timeso\xi_n$ (see Theorem \ref{th:bu}),
 one takes a sequence $\{x^{(m)}\}$, where each member belongs to $\A_N(O^{(m)})$, double cones growing to
 the past of $O$ as in the construction before Lemma \ref{lm:conv-field}
 (it is not explicitly written in the original proof, but $N$ must be chosen corresponding to $2(n+1)$, see
 also \cite[Lemmas 2, 3]{Buchholz77}). In this computation, the only point is that
 $\{Px^{(m)}\Om\}$ can approximate $Px_0\Om$, which is true also for $\xi$.

 Although $\{\xi_k\}$ are not completely arbitrary since $\xi_1\timeso\cdots\timeso\xi_n$ must be the limit of $y_k\Om$,
 they form a total set in the free Fock space.
 Once one obtained the action of $\phiout(x_0)$ on a dense subspace,
 an arbitrary $n$-particle vector can be approximated in the $n$-particle subspace and the action of $\phiout(x_0)$
 is continuous there, hence by the closedness of $\phiout(x_0)$ it follows that
 any vector in $\houtprod$ is in the domain of $\phiout(x_0)$. The same argument is valid for $\Upsilon(\xi)$.

 Altogether, the closure of the restriction of $\Upsilon(\xi)$ to $\F(V_{O,+})\Om$
 includes $\F(V_{O,+})\houtprod$, hence the full domain of $\Upsilon(\xi)$ by
 Lemma \ref{lm:analytic}. This was what we had to prove.
\end{proof}

As $\phiout(x)$ is defined as the closure of the operator
$\F(V_{O,+})y \Omega\ni\eta \longmapsto yP_1x\Omega$, we can infer that
$\phiout(x) = \Upsilon(\xi)$.

\begin{theorem}\label{th:covariance}
 For any $x = x^*\in \A(O)$ smooth,
 $\phiout(x)$ is self-adjoint with a core $\F(V_{O,+})\Om$
 where $V_{O,+}$ is the future tangent of $O$.
 The sequence $\Phi^{h_T}(x)$ is convergent to $\phiout(x)$ in the strong resolvent sense.
\end{theorem}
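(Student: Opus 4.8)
The plan is to assemble the lemmas established above and to supply the single piece that is genuinely new at the level of the theorem, namely the convergence of $\Phi^{h_T}(x)$ itself.

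The self-adjointness and the core property require no further work. For a smooth $x = x^* \in \A(O)$ one has, by the discussion preceding the theorem, $\phiout(x) = \Upsilon(\xi)$ with $\xi = P_1 x\Om$; here $\Upsilon(\xi)$ is self-adjoint because it is the limit, in the strong resolvent sense, of the sequence $\{\phiout(x_n)\}$ of self-adjoint operators furnished by Lemma~\ref{lm:conv-field} and Lemma~\ref{lm:conv-general}, and $\F(V_{O,+})\Om$ is a core for it by Lemmas~\ref{lm:analytic} and \ref{lm:tangent}. This is precisely the first part of the statement.

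For the strong resolvent convergence of $\Phi^{h_T}(x)$ I would argue directly, without passing through the approximants $x_n$. First, $\Phi^{h_T}(x)$ is a \emph{bounded} self-adjoint operator: since $x$ is smooth in norm under the translations, each $\partial_0 x(t, t\mathbf n)$ is bounded, with norm uniformly controlled for $\mathbf n \in S^2$ and $t$ in the compact support of $h_T$, so $\Phi^{h_T}(x)$ is bounded, and it is symmetric because $x = x^*$ and $h_T$ is real. Second, for every fixed bounded double cone $O_+ \subset V_{O,+}$ and all sufficiently large $T$, the operator $\Phi^{h_T}(x)$ is localized in the causal complement of $O_+$ and hence commutes with $\A(O_+)$; therefore, for $\psi = y\Om$ with $y \in \A(O_+)$ and $T$ large, $\Phi^{h_T}(x)\psi = y\,\Phi^{h_T}(x)\Om$, which converges to $y\,\xi = \phiout(x)\psi$ by the mean ergodic convergence $\underset{T\to\infty}{\slim}\Phi^{h_T}(x)\Om = P_1 x\Om$ recalled in Section~\ref{convergence}. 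By linearity this gives $\Phi^{h_T}(x)\psi \to \phiout(x)\psi$ for every $\psi$ in the subspace $\F(V_{O,+})\Om$, which is a core for the self-adjoint operator $\phiout(x)$. Since the $\Phi^{h_T}(x)$ are bounded, $\psi$ always lies in their domains, and the standard criterion for strong resolvent convergence on a common core \cite[Theorem~VIII.25(a)]{RSI} then yields $\Phi^{h_T}(x) \to \phiout(x)$ in the strong resolvent sense.

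I do not expect a genuine obstacle: the substantive content --- extending Buchholz's construction away from the regularity class $\A_{N_0}$ --- is already in Lemmas~\ref{lm:conv-general}--\ref{lm:tangent}, and the theorem is their synthesis together with the elementary observation about $\Phi^{h_T}(x)$ above. The only point to keep an eye on is that the value of $T$ beyond which the commutation $[\Phi^{h_T}(x), y] = 0$ holds depends on $y$ solely through its localization region, so that the pointwise limit on $\F(V_{O,+})\Om$ is indeed available; this is exactly how the construction recalled in Section~\ref{convergence} is arranged.
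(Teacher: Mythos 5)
Your proposal is correct and follows essentially the same route as the paper: the first part is the identification $\phiout(x)=\Upsilon(\xi)$ via Lemmas~\ref{lm:conv-general}--\ref{lm:tangent}, and the convergence part rests on the pointwise convergence $\Phi^{h_T}(x)y\Om = y\Phi^{h_T}(x)\Om \to yP_1x\Om = \phiout(x)y\Om$ on the core $\F(V_{O,+})\Om$ together with uniform boundedness of the resolvents. The only cosmetic difference is that you invoke the standard core-convergence criterion directly, whereas the paper carries out the equivalent resolvent manipulation by hand following Buchholz's Theorem~9.
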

\begin{proof}
 By definition, $\phiout(x)$ is the closure of the operator $y\Om \mapsto yP_1x\Om$ on $\F(V_{O,+})\Om$.
 But since $\Upsilon(\xi) (= \Upsilon(P_1x\Om))$ is self-adjoint and $\F(V_{O,+})\Om$ is its core,
 it follows that $\Upsilon(\xi) = \phiout(x)$, as their actions coincide on their cores.
 
 As for the convergence, we follow the proof of \cite[Theorem 9]{Buchholz77}.
 We know that $\F(V_{O,+})\Omega$ is a core for $\phiout(x)$
 and it is self-adjoint. For $y \in \F(V_{O,+})$,
 \[
  \underset{T\to\infty}\slim (\Phi^{h_T}(x) + \lambda)^{-1}(\phiout(x) + \lambda)y\Omega
  = \underset{T\to\infty}\slim (\Phi^{h_T}(x) + \lambda)^{-1}(\Phi^{h_T}(x) + \lambda)y\Omega = y\Omega
 \]
by the uniform boundedness of $(\Phi^{h_T}(x) + \lambda)^{-1}$ for a fixed $\lambda \notin \RR$.
By the self-adjointness of $\phiout(x)$, $\{(\phiout(x) + \lambda)y\Omega, y\in \F(V_{O,+})\}$ is
dense in $\H$ and we obtain the convergence in the strong resolvent sense,
again by the uniform boundedness of the sequence.
\end{proof}

\begin{lemma}\label{lm:core}
 Let $(\A, U, \Om)$ be a conformal net.
 For $x = x^* \in \A(O)$ smooth, there is a $O_+$ whose closure is contained
 in the future tangent $V_{O,+}$ of $O$ such that $\A(\O_+)\Om$ is a core for $\phiout(x)$.
\end{lemma}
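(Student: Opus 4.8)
The plan is to pin down the graph closure of $\phiout(x)$ restricted to $\A(O_+)\Om$ directly, by a deficiency-space computation; the point is that this restriction is already essentially self-adjoint. Fix a double cone $O_+$ with $\overline{O_+}\subset V_{O,+}$ and write $T:=\phiout(x)$, $\xi:=P_1x\Om$. By Theorem \ref{th:covariance}, $T$ is self-adjoint; since $O_+\subset V_{O,+}$, Theorem \ref{th:bu}(\ref{lee:commutation}) (extended to smooth $x$ in Section \ref{extension}) gives $\A(O_+)\Om\subset\D(T)$ and $Ty\Om=yT\Om=y\xi$ for all $y\in\A(O_+)$, while $\A(O_+)\Om$ is dense by the Reeh-Schlieder property. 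Thus $S:=T|_{\A(O_+)\Om}$ is a densely defined symmetric operator with $\overline S\subset T=T^*$, so $\A(O_+)\Om$ is a core for $T$ precisely when $\overline S$ is self-adjoint, that is, when $\ran(S+i)$ and $\ran(S-i)$ are dense. Now $(T\pm i)y\Om=y\xi\pm iy\Om=y(\xi\pm i\Om)$, so these ranges are $\A(O_+)(\xi+i\Om)$ and $\A(O_+)(\xi-i\Om)$, and it remains to show that $\xi+i\Om$ and $\xi-i\Om$ are cyclic for $\A(O_+)$.

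For this I would invoke the Reeh-Schlieder mechanism. Since $\xi=P_1x\Om$ lies in the massless one-particle subspace, its energy-momentum spectrum is contained in the forward light cone $\partial\overline V_+$, so $a\mapsto U(\tau(a))\xi=\ee^{\ima P\cdot\Re a}\ee^{-P\cdot\Im a}\xi$ is a bounded holomorphic $\H$-valued function on the tube $\RR^4+\ima V_+$; as $\Om$ is translation invariant the same holds for $\xi\pm i\Om$. These vectors are nonzero (they carry the nonzero component $\pm i\Om$), and the global algebra $\bigvee_{O'}\A(O')$ is irreducible (as usual for conformal nets in the vacuum representation, via Haag duality on $\cyl$ and uniqueness of the vacuum), hence equals $B(\H)$, so $\xi\pm i\Om$ is cyclic for it. The standard Reeh-Schlieder argument --- boundedness and analyticity of $U(\cdot)\Psi$ on the forward tube together with cyclicity for the global algebra force cyclicity for every local algebra --- then shows $\xi\pm i\Om$ is cyclic for $\A(O_+)$. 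Combined with the reduction above, $\ran(S\pm i)$ are dense, hence $\overline S=T=\phiout(x)$, i.e.\ $\A(O_+)\Om$ is a core.

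The delicate point is the cyclicity of the perturbed vectors $\xi\pm i\Om$: the Reeh-Schlieder property as assumed here only gives cyclicity of $\Om$ itself, and although $\A(O_+)\Om$ is automatically dense, being a core is a strictly stronger statement. What makes the argument go through is that the energy-momentum support of $\xi$ lies on the light cone, so the analyticity in timelike directions needed for Reeh-Schlieder survives adding a one-particle vector to the vacuum. (An alternative route would be to verify, as in Lemma \ref{lm:analytic}, that $\A(O_+)\houtprod$ is a core via Nelson's theorem and then that it lies in the graph closure of $\A(O_+)\Om$, but the deficiency-index computation above seems cleaner and avoids re-running the multi-particle domain analysis.)
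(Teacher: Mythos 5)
Your functional-analytic reduction is correct and clean: since $S:=\phiout(x)|_{\A(O_+)\Om}$ is symmetric with $\overline S\subset \phiout(x)=\phiout(x)^*$, the core property is equivalent to density of $\ran(S\pm i)=\A(O_+)(\xi\pm i\Om)$, i.e.\ to cyclicity of $\xi\pm i\Om$ for $\A(O_+)$. (Note that Theorem \ref{th:covariance} already gives density of $\F(V_{O,+})(\xi\pm i\Om)$, so the entire content of the lemma is precisely this Reeh--Schlieder-type statement for the perturbed vectors with respect to a \emph{single} double cone.) The gap is in your proof of that cyclicity. The property you invoke --- that $a\mapsto U(\tau(a))\Psi$ extends to a bounded holomorphic function on the forward tube $\RR^4+iV_+$ --- holds for \emph{every} vector $\Psi\in\H$, by the spectrum condition alone; it has nothing to do with $\xi$ living on the light cone. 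So if ``forward-tube analyticity plus cyclicity for the global algebra implies cyclicity for every local algebra'' were true, every nonzero vector would be cyclic for $\A(O_+)$, which is false: take a nontrivial projection $e\in\A(O_+)'$; then $e\Om\neq 0$ is globally cyclic and forward-tube analytic, but $\overline{\A(O_+)e\Om}\subset e\H\subsetneq\H$. The genuine Reeh--Schlieder extension to non-vacuum vectors requires two-sided analyticity in the translations (e.g.\ $\Psi\in\D(e^{\beta H})$ for some $\beta>0$), because in the inductive step one must control $U(-c)\Psi$ for $\Im c\in V_+$, where the translation invariance of $\Om$ is what saves the vacuum case. For $\xi=P_1x\Om$ with $x$ merely smooth one only gets $\xi\in\bigcap_m\D(H^m)$, not energy-analyticity, so this route does not close.

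A second warning sign is that your argument claims the conclusion for an \emph{arbitrary} $O_+$ with $\overline{O_+}\subset V_{O,+}$ and never uses conformal covariance, whereas the lemma only asserts the existence of some (in the paper, sufficiently large) $O_+$ and is stated for conformal nets. The paper's proof is genuinely conformal: it uses an element $g\in\gconf$ associated with the light cone containing $V_{O,+}$ which compresses $V_{O,+}$ onto a double cone $gV_{O,+}$ while keeping $gO\subset O$, identifies $\phiout(\Ad U(g)(x))$ with $\Ad U(g)(\phiout(x))$ (both self-adjoint with coinciding cores), and thereby converts the known core $\F(gV_{O,+})\Om$ into $\A(O_+)\Om$ for a single large double cone $O_+$. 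If you want to salvage your approach, you would need to prove the cyclicity of $\xi\pm i\Om$ for $\A(O_+)$ by some such geometric device rather than by analyticity.
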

\begin{proof}
 We work on the extension of $\A$ on $\cyl$ and the lift of $U$ to $\gconf$.
 
 Recall that $V_{O,+}$ is
 a translation of the future lightcone, then there is a region $D$ in $\cyl$ such
 that the inclusion $V_{O,+} \subset D$ is conformally equivalent to
 $O_+ \subset V_+$, where $O_+$ is a double cone whose past apex is the point of origin.
 Then the conformal transformations associated to $V_+$, dilations, shrink $O_+$.
 Accordingly the conformal transformations associated to $D$ shrink $V_{O,+}$ to
 double cones whose past apex is the apex of $V_{O,+}$ (see Figure \ref{fig:duality}).
 In this situation, such a transformation shrinks also $O$.
 
 Let $g$ be a conformal transformation as in the previous paragraph.
 Now the operator $\phiout(\Ad U(g)(x))$ has a core $\F(V_{O,+})\Om$
 and $\Ad U(g)(\phiout(x))$ has a core $U(g)\F(V_{O,+})\Om = \F(gV_{O,+})\Om$,
 where $\F(gV_{O,+})$ is analogously defined as $\F(V_{O,+})$.
 Their actions coincide on $\F(gV_{O,+})\Om$, namely for $y\in \F(gV_{O,+})$
 they give $y\Om \mapsto yU(g)P_1x\Om = yP_1U(g)x\Om$ (the conformal group
 preserves $P_1\H$ from the classification of unitary positive-energy representations,
 Section \ref{representation}).
 The operator $\phiout(\Ad U(g)(x))$ is a self-adjoint extension of $\Ad U(g)(\phiout(x))$
 which is also self-adjoint, hence they must coincide.
 
 In the discussion above, the domain of $\phiout(\Ad U(g)(x))$
 naturally includes $\A(gV_{O,+})\Omega$ (note that $\A(gV_{O,+})$
 is a von Neumann algebra).
 Reversing the argument, for any $x\in\A(O)$
 there is a sufficiently large double cone $O_+$ in $V_{O,+}$,
 whose past apex is the future apex of $O$,
 such that $\A(O_+)\Omega$ is a core of $\phiout(x)$.

 Until now in this proof and in Theorem \ref{th:covariance},
 regarding the localization,
 we used only the assumption that $x$ is localized in $O$,
 a double cone in the past tangent of $V_{O,+}$.
 By considering $\Ad U(\t(-a))(x)$ which is localized in $O-a$ for a future-timelike vector $a$
 and translating everything by $a$ after the argument,
 we see actually that $\A(O_+ + a)\Omega$ is a core of $\phiout(x)$.
 In other words, if $x$ is localized in a double cone, then there is another double cone
 in the future tangent, separated by a nontrivial timelike vector,
 whose local operators can generate a core for $\phiout(x)$.
\end{proof}

\begin{corollary}\label{cr:covariance}
 Let $(\A, U, \Om)$ be a conformal net.
 For $x=x^*\in\A(O)$ smooth and $g\in\gconf$ sufficiently near to the unit element
 such that $gO$ is still a double cone in the Minkowski space $M$, it holds that
 $\Ad U(g)(\phiout(x)) = \phiout(\Ad U(g)(x))$.
\end{corollary}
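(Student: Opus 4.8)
The plan is to deduce the statement from Theorem~\ref{th:covariance} and Lemma~\ref{lm:core}, running the same scheme that already appears inside the proof of Lemma~\ref{lm:core} but with the dilation-type transformation there replaced by an arbitrary $g\in\gconf$ near the identity. First I would observe that $\Ad U(g)(x)=U(g)xU(g)^*$ is again smooth and self-adjoint and, by hypothesis, localized in the double cone $gO\subset M$; hence Theorem~\ref{th:covariance} applies, so $\phiout(\Ad U(g)(x))$ is self-adjoint and equals the closure of the operator $z\Om\mapsto z\,P_1(\Ad U(g)(x))\Om$ defined on $\F(V_{gO,+})\Om$, where $V_{gO,+}$ is the future tangent of $gO$. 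On the other hand $\Ad U(g)(\phiout(x))=U(g)\phiout(x)U(g)^*$ is self-adjoint, being a unitary conjugate of the self-adjoint operator $\phiout(x)$. The task is therefore to show these two self-adjoint operators coincide, and for that it suffices to exhibit a common core on which their actions agree.

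For the core I would use Lemma~\ref{lm:core} to fix a double cone $O_+$ with $\overline{O_+}\subset V_{O,+}$ for which $\A(O_+)\Om$ is a core of $\phiout(x)$. Shrinking the neighborhood of the identity if necessary, I may assume that for our $g$ the region $gO_+$ is still a double cone contained in $M$ and that $\overline{gO_+}\subset V_{gO,+}$; this last inclusion is the only genuinely geometric input, and it holds because $\gconf$ acts by causal automorphisms on $\cyl$, so for $g$ close to the identity it preserves the relation `` $\overline{O_+}$ lies in the future tangent of $O$ ''. Since $U(g)\Om=\Om$ and $\Ad U(g)(\A(O_+))=\A(gO_+)$, the subspace $\A(gO_+)\Om=U(g)\A(O_+)\Om$ is a core of $\Ad U(g)(\phiout(x))$, and it is contained in $\F(V_{gO,+})\Om\subset\D(\phiout(\Ad U(g)(x)))$.

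Then I would compare the two operators on $\A(gO_+)\Om$. For $y\in\A(gO_+)$ one has $U(g)^*yU(g)\in\A(O_+)$, and, using that $\phiout(x)$ acts on $\A(O_+)\Om$ by $z\Om\mapsto z\,P_1x\Om$ and that $U(g)$ commutes with $P_1$ because the conformal group preserves the massless one-particle space $P_1\H$ (Section~\ref{representation}), one computes
\[
 \Ad U(g)(\phiout(x))\,y\Om = U(g)\,\phiout(x)\,(U(g)^*yU(g))\Om = y\,U(g)\,P_1x\Om = y\,P_1(\Ad U(g)(x))\Om ,
\]
which is precisely $\phiout(\Ad U(g)(x))\,y\Om$. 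Hence $\Ad U(g)(\phiout(x))$ and $\phiout(\Ad U(g)(x))$ agree on the core $\A(gO_+)\Om$ of the former, so $\Ad U(g)(\phiout(x))\subset\phiout(\Ad U(g)(x))$, and two self-adjoint operators related by inclusion must be equal. I expect the only delicate point to be the geometric one in the middle step — it is exactly what forces `` $g$ sufficiently near the identity '' — with everything else being routine bookkeeping.
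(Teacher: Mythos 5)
Your proof is correct and follows essentially the same strategy as the paper's: invoke Lemma~\ref{lm:core} to get a single double cone whose local algebra applied to $\Om$ is a core, use that conformal maps near the identity preserve the ``future tangent'' relation between compact closures, and match the actions via $U(g)P_1=P_1U(g)$. The only (cosmetic) difference is that the paper builds one common core $\A(\widehat O_+)\Om$ with $\widehat O_+\supset(O_++a)\cup g(O_++a)$ and compares both operators there, whereas you show a core of $\Ad U(g)(\phiout(x))$ lies in the domain of $\phiout(\Ad U(g)(x))$ with agreeing action and then use maximality of self-adjoint operators.
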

\begin{proof}
 We may assume that $x$ is localized in $\check O$, whose closure is still in
 $O$. Let $O_+ + a$ be a double cone in $V_{O,+}$ separated from the future apex of $O$
 such that $\A(O_+ + a)\Om$ is a core for $\phiout(x)$
 (Lemma \ref{lm:core}).
 If $g \in \gconf$ is sufficiently near to the unit, we may assume the following:
 \begin{itemize}
  \item $g\check O \subset O$,
  \item $gO$ and $g(O_+ + a)$ are included in $\RR^4$,
  \item there is a double cone $\widehat O_+$ which include $(O_+ + a) \cup g(O_+ + a)$
 such that $\widehat O_+$ and $g^{-1}\widehat O_+$ are in the future tangent $V_{O,+}$ of $O$.
 \end{itemize}
 The set $\A(\widehat O_+)\Omega$  is a core of $\Ad U(g)(\phiout(x))$
 and $\phiout(\Ad U(g)(x))$. But their actions on $\Om$ coincide and they
 commute with $\A(\widehat O_+)$, hence the operators must coincide.
 This concludes the desired local covariance of $\phiout(x)$ with respect to
 $U$.
\end{proof}

We can now define the outgoing free field net by
\[
\A^\tout(O) := \{R_\lambda(\phiout(x)): x = x^*\in\A(O) \mbox{ smooth}, \lambda \notin \RR\}''.
\]
By Corollary \ref{cr:covariance}, this net $\A^\tout$ is covariant
with respect to the unitary representation $U$ for the original net $\A$.
The vacuum $\Om$ is in general not cyclic for $\A^\tout$.

This free field net can be defined for any given net which contains massless particles.
We will show that it is a subnet for a given conformal net,
namely $\A^\tout(O) \subset \A(O)$.

\section{A proof under global conformal invariance}\label{gci}
In this Section we show that a globally conformal net (defined below)
contains the second quantization (free) net if it has nontrivial massless particle spectrum.
Of course these two assumptions are very strong.
We can actually drop global conformal invariance as we will see in Section \ref{general} but here
we present a simpler proof in order to clarify the involved ideas.
This result should thus be considered as a simplification in operator-algebraic
formulation of \cite{Baumann82} with an additional
assumption, the global conformal invariance (GCI).
It is a strong property, under which there are indications that
the stress-energy tensor is the same as that of the free field \cite{Stanev13}.

A conformal net $(\A,U,\Om)$ is said to be {\bf globally conformal} if
the extension to $\cmink$ (the compactified Minkowski space, see Section \ref{conformalgroup}) already admits a
global action of $\gconf$ (cf.\! \cite{NT01, NT04}, where GCI is defined
in terms of Wightman functions). Namely,
the action of $\gconf$ factors through the action of $\conf$.
For example, the massless free fields with odd integer helicity are globally conformal,
while other free fields are not \cite[Corollary 3.12]{Hislop88}.

In this case, any two operators $x,y$ localized in timelike-separated
regions commute. Indeed, any pair of timelike-separated regions can be
brought into spacelike-separated regions by an action of $\conf$.

The first consequence of GCI is the following.
\begin{proposition}\label{pr:gci-duality}
 For a net $\A$ with GCI, it holds that $\A(V_+) = \A(V_-)'$,
 where $V_\pm$ are the future and past lightcones.
\end{proposition}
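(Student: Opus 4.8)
The plan is to obtain the statement from Haag duality on $\cyl$ transported by a conformal map, with global conformal invariance entering precisely to identify regions on $\cyl$ that differ by a deck transformation of $\cyl\to\cmink$. The geometric input I would isolate first is: on $\cmink$ the forward lightcone $V_+$ is the image of a wedge $W$ under some $g\in\conf$, and its causal complement in $\cmink$ is exactly $V_-$, i.e.\ $g$ carries $W'$ onto $V_-$. This I would check by lifting to $\cyl=S^3\times\RR$: although $V_+$ and $V_-$ share the origin on their boundaries and are causally connected inside $M$ itself (indeed $V_+\subset J^+(V_-)$ there, since $a-b$ is a sum of two future-timelike vectors whenever $a\in V_+$, $b\in V_-$), the causal complement of the Minkowski copy of $V_+$ on $\cyl$ turns out to be a deck-translate of the Minkowski copy of $V_-$; these two copies project to the same region of $\cmink$, and GCI is what makes the corresponding element of $\gconf$ (a representative of the deck group, i.e.\ of the kernel of $\gconf\to\conf$) act trivially in $U$. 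Figure~\ref{fig:duality} illustrates the analogous phenomenon for an ordinary double cone, whose causal complement $O^\cc$ in $\cmink$ is strictly larger than its Minkowski spacelike complement.

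Granting this, the inclusion $\A(V_+)\subseteq\A(V_-)'$ is locality: $V_+=gW$ and $V_-=gW'$ are causally disjoint in $\cmink$, and since under GCI the representation $U$ descends to $\conf$ (so the extension of the net to $\cmink$ is covariant under the genuine, global $\conf$-action), the algebras $\A(V_+)=\Ad U(g)(\A(W))$ and $\A(V_-)=\Ad U(g)(\A(W'))$ commute. One can instead argue straight from GCI, as in the discussion preceding Proposition~\ref{pr:gci-duality}: every pair $a\in V_+$, $b\in V_-$ is timelike-separated, and under GCI timelike-separated regions carry commuting algebras. For the reverse inclusion I would apply the cited Haag duality on $\cyl$, $\A(W)'=\A(W')$, and conjugate by $U(g)$: this gives $\A(V_+)'=\Ad U(g)(\A(W)')=\Ad U(g)(\A(W'))=\A(V_-)$, whence $\A(V_+)=\A(V_+)''=\A(V_-)'$.

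The main obstacle is the geometric claim that $V_-$ is the causal complement of $V_+$ in $\cmink$: it requires pinning down the causal structure of the conformal completion and, in particular, recognizing that the copy of $V_-$ which is genuinely causally disjoint from $V_+$ on the cover is not the ``Minkowski'' one but a deck-translate of it, so that GCI — which trivializes the deck group of $\cyl\to\cmink$ — is exactly the hypothesis that lets one conclude. A secondary, more routine point is to check that wedge duality on $\cyl$, combined with the conformal equivalence of $V_+$ to a wedge, descends correctly to $\cmink$ once $U$ factors through $\conf$.
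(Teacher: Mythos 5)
Your proposal is correct, but it follows a genuinely different route from the paper. You prove the inclusion $\A(V_+)\subset\A(V_-)'$ the same way (timelike commutativity from GCI), but for equality the paper does not invoke wedge duality at all: it observes that by the Bisognano--Wichmann property the modular group of $\A(V_-)$, hence of $\A(V_-)'$, is the dilation group, which manifestly preserves $\A(V_+)$; Takesaki's theorem then produces a conditional expectation $\A(V_-)'\to\A(V_+)$ implemented by the projection onto $\overline{\A(V_+)\Om}$, and Reeh--Schlieder cyclicity of $\Om$ for $\A(V_+)$ forces that projection to be $\1$, i.e.\ the two algebras coincide. Your route instead transports the duality $\A(W)'=\A(W')$ along a conformal map $W\mapsto V_+$ and rests on the geometric identification of the causal complement of $V_+$ on $\cyl$: in coordinates $(\tau,\chi)$ on $S^3\times\RR$ with the Minkowski copy $\{|\tau|+\chi<\pi\}$, $V_+$ is the double cone with vertices $(0,0)$ and $i^+=(\pi,0)$, its causal complement is the double cone with vertices $i^0=(0,\pi)$ and $(\pi,\pi)$, and this is exactly $\gamma V_-$ for the generator $\gamma\colon(\tau,p)\mapsto(\tau+\pi,-p)$ of the deck group of $\cyl\to\cmink$ --- so your key claim is true, and you correctly locate GCI as the hypothesis that makes $U(\gamma)$ trivial and hence $\A(\gamma V_-)=\A(V_-)$. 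The trade-off: your argument pins down geometrically \emph{where} GCI enters and exhibits the proposition as a literal instance of Haag duality on $\cmink$, at the cost of the covering-space bookkeeping you flag as the ``main obstacle''; the paper's modular argument needs no analysis of the causal structure of the compactification beyond the one inclusion, and the same Takesaki skeleton (geometric modular group $+$ cyclicity $\Rightarrow$ trivial expectation) is reused later in the paper, e.g.\ to show $\A^\tdir=\A^\tout$. Either way your write-up would be complete once the causal-complement computation above is carried out explicitly.
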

\begin{proof}
 As remarked above, it holds that $\A(V_+) \subset \A(V_-)'$ by GCI.
 The modular group for $\A(V_-)$ with respect to $\Om$ is the dilation \cite{BGL93}
 (see Section \ref{bisognano-wichmann}),
 thus the modular group for $\A(V_-)'$ with respect to $\Om$ is again dilation
 (up to a reparametrization).  It is clear that $\A(V_+)$ is invariant under dilation.
 
 Let us recall the simple variant of Takesaki's theorem \cite[Theorem IX.4.2]{TakesakiII}.
 Assume that $\N \subset \M$ is an inclusion of von Neumann algebras, $\Om$ is a cyclic separating
 vector for $\M$ and the modular group $\Ad \Delta^{it}$ for $\M$ with respect to $\Om$
 preserves $\N$. Then there is a conditional expectation $E:\M\to\N$ which preserves the
 state $\<\Om, \cdot\,\Om\>$ and this is implemented by the projection $P$ onto the subspace
 $\overline{\N\Om}$: $E(x)\Om = Px\Om$.
 In particular, $E(x) = x$ if and only if $x\in\N$.
 
 In our situation, from Takesaki's theorem it follows that $\A(V_+) = \A(V_-)'$
 because $\Om$ is cyclic for the both algebras
 by Reeh-Schlieder property (cf.\! \cite[Appendix A]{Tanimoto12-1}).
 Therefore the projection above is trivial and the two von Neumann algebras must coincide. 
\end{proof}

\begin{lemma}\label{gci:subnet}
 For a net $\A$ with GCI,
 the outgoing free field net $\A^\tout$ is a subnet of $\A$.
\end{lemma}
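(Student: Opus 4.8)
The plan is to establish the inclusion $\A^\tout(O)\subseteq\A(O)$ for every double cone $O$; since $\A^\tout(O)$ is generated by the resolvents $R_\lambda(\phiout(x))$ with $x=x^*\in\A(O)$ smooth and $\lambda\notin\RR$, and since the isotony and $U$-covariance of $\A^\tout$ are already recorded, it suffices to show $R_\lambda(\phiout(x))\in\A(O)$ for each such generator. The input is the localization already available: by the analysis of Section~\ref{extension} the operator $\phiout(x)=\Upsilon(P_1x\Om)$ is self-adjoint and commutes with $\F(V_{O,+})$ on its domain, so, passing to the bounded resolvent, $R_\lambda(\phiout(x))$ commutes with $\A(O_+)$ for every double cone $O_+\subseteq V_{O,+}$, i.e. $R_\lambda(\phiout(x))\in\A(V_{O,+})'$. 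This alone is too weak, since $\A(V_{O,+})'$ is much larger than $\A(O)$ (under GCI it equals $\A(q+V_-)$ by Proposition~\ref{pr:gci-duality}, where $q$ is the future apex of $O$).

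The next step is to exploit conformal covariance. Under global conformal invariance $U$ descends to an honest representation of $\conf$ and the net extends covariantly to $\cmink$, so Corollary~\ref{cr:covariance} upgrades to the global statement $\Ad U(g)(\phiout(x))=\phiout(\Ad U(g)(x))$ for every $g\in\conf$. Applying the previous paragraph to $\Ad U(g)(x)\in\A(gO)$ and transporting back by $\Ad U(g^{-1})$, one finds that $R_\lambda(\phiout(x))$ commutes with $\A(g^{-1}O_+)$ for every $g\in\conf$ and every double cone $O_+\subseteq V_{gO,+}$.

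The heart of the argument is then geometric: the double cones $g^{-1}O_+$ occurring above sweep out the causal complement $O^\cc$ of $O$ on $\cyl$. First, each $O_+$ is timelike separated from $gO$ in $M$, so by GCI $\A(O_+)$ commutes with $\A(gO)$, hence — transporting by $g^{-1}$ and using Haag duality on $\cyl$ — $\A(g^{-1}O_+)\subseteq\A(O)'=\A(O^\cc)$. Conversely, the family $\{g^{-1}V_{gO,+}:g\in\conf\}$ is precisely the orbit of $V_{O,+}$ under the conformal stabilizer $\mathrm{Stab}(O)$ of $O$, because the pair $(gO,V_{gO,+})$ — a double cone together with its Minkowski future tangent — is always conformally equivalent to $(O,V_{O,+})$, and the transformation realizing the equivalence can be chosen to preserve $O$. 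Now $O^\cc$ is, on $\cyl$, again a double cone (conformally a wedge), hence homogeneous under its own conformal automorphism group, which coincides with $\mathrm{Stab}(O^\cc)=\mathrm{Stab}(O)$; since $V_{O,+}$ contains double cones lying in $O^\cc$, the $\mathrm{Stab}(O)$-orbit of these exhausts $O^\cc$. Therefore $R_\lambda(\phiout(x))$ commutes with a family of local algebras generating $\A(O^\cc)$, hence with $\A(O^\cc)$, and Haag duality on $\cyl$ (Section~\ref{bisognano-wichmann}) yields $R_\lambda(\phiout(x))\in\A(O^\cc)'=\A(O)$. This gives $\A^\tout(O)\subseteq\A(O)$.

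I expect the geometric step to be the main obstacle: verifying that conformally rotating the future-tangent localization around $O$ fills out the entire causal complement $O^\cc$ is exactly where both facets of global conformal invariance are needed — timelike commutativity, and the global covariance of $\phiout$ which upgrades the merely local Corollary~\ref{cr:covariance}. A secondary technical point is the bookkeeping identifying $\bigvee_g\A(g^{-1}O_+)$ with $\A(O^\cc)$, i.e. checking that the exhausting family indeed generates the causal-complement algebra; this is routine given the Bisognano–Wichmann and Haag duality results quoted above.
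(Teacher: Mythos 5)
Your first two steps are sound and match the paper's setup: the resolvents $R_\lambda(\phiout(x))$ commute with $\F(V_{O,+})$, and under GCI the covariance $\Ad U(g)(\phiout(x))=\phiout(\Ad U(g)(x))$ becomes global. The gap is in the sweeping argument. You assert that ``$V_{O,+}$ contains double cones lying in $O^\cc$'' and that the $\mathrm{Stab}(O)$-orbit of these exhausts $O^\cc$. This is false as a statement about regions on $\cyl$, which is where the Haag duality $\A(O^\cc)'=\A(O)$ that you invoke is formulated: $V_{O,+}$ consists of points timelike-future separated from $O$, hence lies entirely in the causal future of $O$ on $\cyl$ and is \emph{disjoint} from $O^\cc$. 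Since conformal transformations preserve the causal structure of $\cyl$, every region $g^{-1}V_{gO,+}$ also lies in the causal future tangent of $O$, so the orbit never meets $O^\cc$, let alone sweeps it out. What is true is that the \emph{algebras} $\A(g^{-1}O_+)$ are contained in $\A(O)'=\A(O^\cc)$ by timelike commutativity; but that they \emph{generate} $\A(O^\cc)$ is exactly the point at issue, and containment of regions cannot supply it. The statement that would repair the argument --- that the algebra of the full future-tangent diamond $D\supset V_{O,+}$ on $\cyl$ coincides with $\A(O^\cc)$, because under GCI the net descends to $\cmink$ where $D$ and $O^\cc$ have the same image --- is precisely a conformal transform of Proposition \ref{pr:gci-duality}, so you cannot avoid the lightcone duality; and even then you would still owe a proof that the $\mathrm{Stab}(O)$-orbit of $V_{O,+}$ covers $D$ (transitivity of the stabilizer on the diamond), which you do not give.

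The paper's proof short-circuits all of this: take $O\subset V_-$, so that every bounded $O_+\subset V_+$ lies in the future tangent of $O$; then $R_\lambda(\phiout(x))\in\A(V_+)'$ by the localization of the approximants $\Phi^{h_T}(x)$ and strong resolvent convergence, and $\A(V_+)'=\A(V_-)$ by Proposition \ref{pr:gci-duality}. This gives $\A^\tout(V_-)\subset\A(V_-)$, and conformal covariance of both nets with respect to the same $U$, together with the fact that $\conf$ maps $V_-$ onto any double cone, yields $\A^\tout(O)\subset\A(O)$ for all $O$. I recommend you adopt this route: it uses GCI only once, through the lightcone duality, and requires no orbit computation.
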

\begin{proof}
 Let $O \subset V_-$ and $O_+ \subset V_+$. In particular,
 $O_+$ is in the future tangent of $O$.
 By the construction of asymptotic fields, $\Phi^{h_T}(x)$ is
 in the spacelike complement of $\A(O_+)$ if $x\in \A(O)$, hence we have
 $R_\lambda(\phiout(x)) \in \A(V_+)'$ by the convergence in the strong resolvent sense
 and by Proposition \ref{pr:gci-duality}
 this is equal to $\A(V_-)$. This implies that $\A^\tout(V_-) \subset \A(V_-)$.
 
 By conformal covariance with respect to the same representation $U$ (see the end of
 Section \ref{extension}), with the conformal group $\conf$ which takes $V_-$ to any double cone $O$,
 we obtain $\A^\tout(O) \subset \A(O)$.
\end{proof}

We summarize the result.
\begin{theorem}\label{gci:main}
 Let $(\A, U, \Om)$ be a globally conformal net and assume that
 the massless particle spectrum of $U$ is nontrivial.
 Then there is a subnet $\A^\tout$ of $\A$, which is isomorphic to
 the free field net associated to the massless representation.
 The free subnet $\A^\tout$ generates the whole massless particle spectrum of $U$.
\end{theorem}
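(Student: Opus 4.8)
\emph{Proof plan.} Lemma~\ref{gci:subnet} already gives $\A^\tout(O)\subset\A(O)$, so the real content is to recognize the abstract structure of $\A^\tout$ and to pin down the subspace it generates. I would work on the cyclic subspace $\F:=\overline{\A^\tout(O)\Om}$, which by the remark on subnets in Section~\ref{conformalnet} does not depend on $O$, and on the one–particle space $P_1\H$ (nonzero, by the hypothesis that the massless spectrum is nontrivial). For each double cone $O$ set $H_O:=\overline{\{P_1x\Om:x=x^*\in\A(O)\text{ smooth}\}}$, the closed real–linear span in $P_1\H$. By Theorem~\ref{th:bu}(\ref{lee:houtprod}) the space $\F$ is canonically the symmetric Fock space over $P_1\H$, with each $\phiout(x)$ acting as the free Segal field attached to the vector $P_1x\Om$; combined with Theorem~\ref{th:covariance} (self–adjointness and strong resolvent convergence) this identifies $e^{it\phiout(x)}$ with the Weyl operator $W(tP_1x\Om)$, whence $\A^\tout(O)|_\F=\{W(\xi):\xi\in H_O\}''$.

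Next I would verify that $O\mapsto H_O$ is a local, $\gconf$–covariant net of standard subspaces of $P_1\H$ carrying the massless representation. Covariance, and the fact that the action of $\gconf$ on it descends to a global action of $\conf$, come from Corollary~\ref{cr:covariance} together with the invariance of $P_1\H$ under $\conf$ (Section~\ref{representation}) and GCI. The localization property needed for second quantization, $\Im\langle\xi,\eta\rangle=0$ for $\xi\in H_{O_1}$, $\eta\in H_{O_2}$ with $O_1,O_2$ spacelike separated, is inherited from $\A$: since $\A^\tout$ is a subnet of the local net $\A$, the fields $\phiout(x_1),\phiout(x_2)$ commute, which forces the scalar appearing as their commutator in Theorem~\ref{th:bu} to vanish. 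Standardness of $H_O$ splits into cyclicity, $\overline{H_O+iH_O}=\overline{P_1\A(O)\Om}=P_1\H$ by the Reeh–Schlieder property, and the separating property $H_O\cap iH_O=\{0\}$, which I would extract from the Bisognano–Wichmann property of $\A(O)$ (Section~\ref{bisognano-wichmann}) restricted to the invariant subspace $P_1\H$, together with Haag duality on $\cyl$. Finally $P_1$ is the spectral projection of the translations on the surface of the light cone, so by the classification in Section~\ref{representation} the representation of $\conf$ on $P_1\H$ is a direct sum of massless irreducibles; by definition this makes $O\mapsto\{W(\xi):\xi\in H_O\}''$ on $\Gamma(P_1\H)$ the free field net associated to the massless representation.

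Putting the two paragraphs together, $\A^\tout|_\F$ equals this second–quantization net; since a local net of standard subspaces has the vacuum cyclic for its second quantization, $\F=\Gamma(P_1\H)$ and the asserted isomorphism follows. For the last statement: by the Reeh–Schlieder property $\bigvee_O H_O\supseteq\overline{\bigcup_O P_1\A(O)\Om}=P_1\H$, so the one–particle sector of $\F$ is all of $P_1\H$; equivalently, $\A^\tout$ generates the entire massless particle spectrum of $U$.

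I expect the genuine work to be concentrated in the middle step: establishing that $O\mapsto H_O$ is a net of \emph{standard} subspaces and matching it with the abstract free net. Locality is clean, being inherited from $\A$, but the separating property $H_O\cap iH_O=\{0\}$ has to be teased out of modular theory on $\cyl$ for the invariant subspace $P_1\H$, and one must be careful that the von Neumann algebra generated on $\F$ by the resolvents $R_\lambda(\phiout(x))$, $x\in\A(O)$, is exactly the Weyl algebra of $H_O$ and nothing larger.
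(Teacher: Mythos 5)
Your argument is correct and follows essentially the same route as the paper: the containment $\A^\tout(O)\subset\A(O)$ is Lemma~\ref{gci:subnet}, the free-field (Fock/Weyl) structure of the asymptotic fields is Theorem~\ref{th:bu}(\ref{lee:houtprod}) together with Theorem~\ref{th:covariance} and Corollary~\ref{cr:covariance}, and the generation of $P_1\H$ is the Reeh--Schlieder density of $\{P_1x\Om\}$ after splitting $x$ into its self-adjoint parts. Your middle step --- recasting the identification with the second-quantization net in terms of the net of standard subspaces $O\mapsto H_O$ --- is a more explicit account of what the paper's very short proof leaves implicit in the phrase ``isomorphic to the free field net,'' but it is the same underlying argument, not a different one.
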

\begin{proof}
 Almost all statements have been proved above.
 The whole massless particle spectrum of $U$ is generated by
 $\A^\tout$ since $\{P_1x\Om: x\in \A(O)\}$ is dense in $P_1\H$ by
 the Reeh-Schlieder property of $\A$ and we only have to consider
 the asymptotic fields for self-adjoint elements $x_+ = (x+x^*)/2$ and $x_- = (x-x^*)/2i$.
 The exponentiated fields $e^{i\phiout(x_\pm)}$ are localized in
 $\A^\tout(O)$ and the one-particle vectors are obtained by
$\frac{d}{dt}e^{it\phiout(x_\pm)}\Om$.
\end{proof}

One can analogously define $\A^\tin$ by taking the limit $T \to -\infty$.
Now that we know that the net $\A$ includes
a free field subnet, it follows that $\A^\tout = \A^\tin$
because we can choose local operators $x$ which creates one-particle states
from the free subnet. For the free field net, the asymptotic field net
is of course itself, so we obtain $\A^\tout = \A^\tin$.
Accordingly, although one can define S-matrix on the subspace generated by
$\A^\tout = \A^\tin$, roughly as the difference between
$\xi_1\timeso\cdots\timeso\xi_n$ and $\xi_1\timesi\cdots\timesi\xi_n$.
(see \cite{BF77}, and \cite{Buchholz75} for its two-dimensional variant), it is trivial.

\section{A general proof}\label{general}
Finally let us prove the existence of a free subnet under conformal invariance
but not necessarily under global conformal invariance. If a net is not globally conformal,
it does not necessarily hold that $\A(V_+)' = \A(V_-)$ and our previous argument
does not work. Instead, here we use directed asymptotic fields defined below.
As already suggested by Buchholz himself \cite[Section 4]{Buchholz82},
Theorem \ref{th:bu} can be extended for asymptotic fields with a function $f$
which specifies a direction in which a local observable proceeds asymptotically.
Such a directed asymptotic field still has a certain local property
and we can construct subnet.

\subsection{Directed asymptotic fields}\label{directed}
For a smooth function $f$ on the unit sphere $S^2$ such that
$f(\mathbf n) \ge 0$ and $\int_{S^2} d\o(\mathbf n)\; f(\mathbf n) = 1$, we define
\[
 \Phi^t_f(x) := -2t\int_{S^2} d\o(\mathbf n)\; f(\mathbf n) \partial_0 x(t,t \mathbf n), \;\;\;\Phi^{h_T}_f(x) = \int_{\RR} dt\; h_T(t)\Phi^t_f(x).
\]
where notations are as in Section \ref{convergence}.
In \cite{Buchholz77} the case where $f=1$ has been worked out and
it has been suggested in \cite{Buchholz82} that the whole theory
works for a general $f$. As we need certain extended results,
let us discuss the proofs and how they should be modified when
$f$ is nontrivial.

First, we explain the claim \cite[Equation (4.3)]{Buchholz82}:
\[
 \underset{T\to \infty}\slim \Phi^{h_T}_f(x)\Omega = P_1 f\left(\frac{\mathbf P}{|\mathbf P|}\right)x\Omega,
\]
where $\mathbf P$ is the 3-momentum operator of the given representation $U$ of the net
(see Section \ref{convergence})
and $f\left(\frac{\mathbf P}{|\mathbf P|}\right)$ is defined by functional calculus.
This follows from the mean ergodic theorem analogously as in \cite[Section 2]{Buchholz77}.
Indeed, this time we have
\[
 \Phi^t_f(x)\Omega = -\frac{it}{2\pi}\int dE_P \int_0^\pi \sin \theta d\theta \int_0^{2\pi}d\f\;
 f(\theta,\f) e^{it(H - \mathbf{n}\cdot \mathbf P)}H(x\Omega)_P
\]
where $P = (H, \mathbf P), \mathbf n = (\sin \theta \cos \f, \sin \theta \sin \f, \cos \f)$
and the integral is about $\mathbf n$ (on the unit sphere)
and the joint spectral decomposition
with respect to $P$ and accordingly $(x\Omega)_P$ is the $P$-component with respect to it.
Since the support of $P$ is included in the closed positive lightcone
$\overline V_+$, the $t$-dependent phase vanishes $e^{it(H - \mathbf n\cdot \mathbf P)}$ only on
the surface of the cone $H = |\mathbf P|$. Instead, on this surface
the integral with respect to $\theta,\f$ gives
$\frac{2\pi}{-it|\mathbf P|}f\left(\frac{\mathbf P}{|\mathbf P|}\right)e^{it(H -|\mathbf P|)}$ with additional terms
which tend to zero when the limit in the mean ergodic theorem is taken
(this can be explicitly demonstrated by considering a function $f$ which is
$z$-rotation symmetric. A general function can be approximated by sums of such
functions with different axis of symmetry in $L^1$-norm). Hence we obtain the formula above.

Only in this paragraph, the propositions and sections refer to those in \cite{Buchholz77}.
Now, Lemma 1 can be modified straightforwardly. Lemma 2 is the main technical ingredient
and has been proved in the Appendix. Now, among the statements in the Appendix,
the only one in which the spherical integral matters is the Lemma, in which commutators
of spherically smeared operators are estimated. Here the only property essentially used
in the estimate is locality of operators and the integrand gets bounded by norm.
This means, if one has to smear the integrand with $f$, it changes the weight
of localization. However, as the integrand is bounded by norm and no other
technique is required, one can simply bound $f$ by a constant in order to adapt
the proof. By this bound, the estimate gets simply multiplied by a constant
depending on $f$. This does not affect the rest of the arguments at all.
Indeed, this Lemma is used later in Corollary, and indirectly in Proposition II,
where the overall constant is unimportant.
Finally, Lemma 2 is proved in Section d) and the overall constant in the estimate
does not play any role, hence we obtain the modified Lemma 2.
In the rest of the paper, the spherical integral appear only through
the correspondence from $x$ to $P_1 f\left(\frac{\mathbf P}{|\mathbf P|}\right)x\Omega$.
Accordingly, one can modify all the propositions of the paper.

Thereafter one can repeat our argument in order to extend the results from
$\A_{N_0}(O)$ to $\A(O)$. In summary, we obtain the following.

\begin{theorem}\label{th:directed}
Let $x = x^*, x_1 = x_1^*, x_2 = x_2^*$ be smooth elements (with respect to
$\gconf$) of $\A(O)$,
$O$ be a double cone and $f, f_1, f_2$ be smooth functions on $S^2$.
\begin{enumerate}[{(}1{)}]
\item\label{dir:commutation} For arbitrary $y \in \A(O_+)$, where $O_+ \subset V_{O,+}$ is bounded,
 $y\cdot \D(\phiout_f(x))\subset \D(\phiout_f(x))$ and
 one has $[\phiout_f(x), y] = 0$ on $\D(\phiout_f(x))$.
 \item\label{dir:core} The operator $\phiout_f(x)$ is self-adjoint and depends only on
 $P_1 f\left(\frac{\mathbf P}{|\mathbf P|}\right)x\Om$. The subspace $\F(V_{O,+})\Om$ is a core of $\phiout_f(x)$.
 \item\label{dir:converngece} The sequence $\Phi^{h_T}_f(x)$ is convergent to $\phiout_f(x)$ in the
 strong resolvent sense.
 \item\label{dir:houtprod} The domain $\D(\phiout_f(x))$ includes the set $\houtprod$ of
 all product states
 $\xi_1\timeso\xi_2\timeso \cdots \timeso \xi_n$ and its action is
 \[
  \phiout_f(x)\cdot \xi_1\timeso\xi_2\timeso \cdots \timeso \xi_n =
  \xi\timeso\xi_1\timeso\xi_2\timeso \cdots \timeso \xi_n +
  \sum_{k=1}^n \<\xi,\xi_k\> \xi_1\timeso\cdots \check{\xi}_k\cdots \timeso\xi_n,
 \]
 where
 $\xi = P_1f\left(\frac{\mathbf P}{|\mathbf P|}\right)x\Om = P_1f\left(\frac{\mathbf P}{|\mathbf P|}\right)x^*\Om$.
 \item For the resolvent $R_{\pm i}(y)= (y\pm i)^{-1}$ of $y$, it holds that
\begin{align*}
&[R_{\pm i}(\phiout_{f_1}(x_1)), R_{\pm i}(\phiout_{f_2}(x_2))]\\
&= \<\Om, [\phiout_{f_1}(x_1), \phiout_{f_2}(x_2)]\Om\>\cdot
  R_{\pm i}(\phiout_{f_1}(x_1))R_{\pm i}(\phiout_{f_2}(x_2))^2 R_{\pm i}(\phiout_{f_1}(x_1))\\
&=   \Re \left\<P_1f_1\left(\frac{\mathbf P}{|\mathbf P|}\right)x_1\Om, P_1f_2\left(\frac{\mathbf P}{|\mathbf P|}\right)x_2\Om\right\>
\cdot R_{\pm i}(\phiout_{f_1}(x_1))R_{\pm i}(\phiout_{f_2}(x_2))^2 R_{\pm i}(\phiout_{f_1}(x_1)).
\end{align*}
 \item For $x \in \A(O)$ and $y \in \F(V_{O,+})$, it holds that $[R_{\pm i}(\phiout_f(x)),y] = 0$.

 \end{enumerate}
\end{theorem}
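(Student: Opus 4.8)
\emph{Overall plan.} The statement is the directional counterpart of Theorem~\ref{th:bu} combined with the extension results of Section~\ref{extension}, so I would prove it by the same two-step scheme used there: first establish all claims for operators in $\A_{N_0}(O)$ with $N_0 \ge 15$, then pass to an arbitrary smooth $x = x^* \in \A(O)$ by rerunning Lemmas~\ref{lm:conv-general}--\ref{lm:tangent} and the proofs of Theorem~\ref{th:covariance} and Corollary~\ref{cr:covariance} with the weight $f$ carried along. Throughout, the one-particle vector attached to $x$ becomes $\xi := P_1 f(\mathbf P/|\mathbf P|)x\Om$; since $f$ is real and $\ge 0$ the operator $\Phi^t_f(x)$ stays symmetric, $f(\mathbf P/|\mathbf P|)$ is a bounded self-adjoint multiplication operator commuting with $P_1$ and with all translations, and $\xi = P_1 f(\mathbf P/|\mathbf P|)x^*\Om$ because $x = x^*$ (the case of general complex $f$, relevant only for the bilinear relation~(5), follows by linearity).

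\emph{Step 1: the $\A_{N_0}$-version.} I would go through Buchholz's proof of \cite[Lemma 1, Lemma 6, Theorems 7--9]{Buchholz77} keeping $f(\mathbf n)$ inside every spherical integral. As explained in the paragraph preceding the statement, the only place where the spherical average enters the delicate norm estimates is the Lemma in the Appendix of \cite{Buchholz77}, where commutators of spherically smeared operators are bounded; there the integrand is already estimated in operator norm via locality, so inserting $f$ merely multiplies the bound by $\sup|f|$ and none of the combinatorics in Lemma 2, its Corollary, or Proposition II is affected. The mean ergodic computation then gives $\slim_{T\to\infty}\Phi^{h_T}_f(x)\Om = P_1 f(\mathbf P/|\mathbf P|)x\Om$, exactly the formula displayed before the statement, so every conclusion of Theorem~\ref{th:bu} holds with $P_1 x\Om$ replaced by $\xi$. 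This yields claims (\ref{dir:commutation})--(6) for $x \in \A_{N_0}(O)$.

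\emph{Step 2: extension to $\A(O)$.} For smooth $x = x^* \in \A(O)$ take the Buchholz approximants $x_n = \int \f_n(t)\,\Ad U(\tau(t,0))(x)\,dt \in \A_{N_0}(O_n)$ with $O_n$ growing to the past and $x_n$ self-adjoint; since $f(\mathbf P/|\mathbf P|)$ is bounded and commutes with $P_1$, the convergence $P_1 x_n\Om \to P_1 x\Om$ of \cite[Remark, p.\ 155]{Buchholz77} gives $\xi_n := P_1 f(\mathbf P/|\mathbf P|)x_n\Om \to \xi$ in norm. Lemma~\ref{lm:conv-field} then applies verbatim to $\{\phiout_f(x_n)\}$ in place of $\{\phiout(x_n)\}$ — its proof uses only that $\phiout_f(x_n)$ acts on $\houtprod$ as the free field with creation vector $\xi_n$ together with the continuity $e^{\xi_n}\to e^{\xi}$ of coherent vectors — so this sequence converges in the strong resolvent sense; Lemma~\ref{lm:conv-general} produces a self-adjoint limit $\Upsilon_f(\xi)$ commuting with $\F(V_{O,+})$, Lemma~\ref{lm:analytic} makes $\F(V_{O,+})\houtprod$ a core, Lemma~\ref{lm:tangent} makes $\F(V_{O,+})\Om$ a core, and $\Upsilon_f(\xi) = \phiout_f(x)$ since both are the closure of $y\Om \mapsto y\xi$ on $\F(V_{O,+})\Om$. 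The proof of Theorem~\ref{th:covariance} then gives the strong resolvent convergence in~(\ref{dir:converngece}); (\ref{dir:core}) combines self-adjointness, the core, and dependence only on $\xi$; (\ref{dir:commutation}) and claim~(6) are the commutation with $\F(V_{O,+})$ just recorded; (\ref{dir:houtprod}) is the free-field action transported through the limit as in the proof of Lemma~\ref{lm:tangent}; and claim~(5) is inherited from the sixth claim of Theorem~\ref{th:bu}, the commutator $[\phiout_{f_1}(x_1),\phiout_{f_2}(x_2)]$ remaining a scalar on the Fock space determined by $\langle\xi_1,\xi_2\rangle$, with the resolvent identity then a purely algebraic consequence passed to the limit by uniform boundedness of the $R_{\pm i}$.

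\emph{Main obstacle.} Everything after Step 1 is a mechanical rerun of results already established in the excerpt, so the genuine work is in Step 1: checking that Buchholz's norm estimates in the Appendix of \cite{Buchholz77}, in particular the proof of his Lemma 2, survive insertion of $f$. The decisive point — which the paragraph before the statement isolates — is that $f$ enters those estimates only through a bounded multiplicative factor and never interferes with the localization bookkeeping; one should also note that $f(\mathbf P/|\mathbf P|)$ being merely bounded (and possibly discontinuous at $\mathbf P = 0$) is harmless, since $P_1$ projects onto the cone $H = |\mathbf P|$ where $|\mathbf P| > 0$ almost everywhere.
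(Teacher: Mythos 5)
Your proposal is correct and follows essentially the same route as the paper: first modify Buchholz's estimates for $\A_{N_0}(O)$ by observing that $f$ enters the Appendix norm bounds only through a multiplicative constant while the mean ergodic theorem yields $\slim_{T\to\infty}\Phi^{h_T}_f(x)\Om = P_1 f\left(\frac{\mathbf P}{|\mathbf P|}\right)x\Om$, and then rerun the extension machinery of Lemmas~\ref{lm:conv-general}--\ref{lm:tangent} and Theorem~\ref{th:covariance} with $\xi = P_1 f\left(\frac{\mathbf P}{|\mathbf P|}\right)x\Om$ in place of $P_1 x\Om$. This is precisely the two-step argument the paper sketches before the theorem statement.
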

Other propositions in \cite[Section 4]{Buchholz77} can be appropriately
modified but we state here only what we need.

\subsection{Conformal free subnet}
Let $\A$ be a conformal net with massless particles. We consider the standard double cone $O_1$.
The following is an easy geometric observation (c.f. \cite[P.60]{Buchholz82}).
\begin{lemma}\label{lm:spacelike}
 For a double cone $O$ which is sufficiently spacelike separated from $O_1$,
 there is a compact set $\Sigma$  in $S^2$ such that
 $\{a+(t,t\mathbf n): a\in O, \mathbf n\in \Sigma, t \mbox{ sufficiently large}\}$
 is spacelike separated from $O_1$.
\end{lemma}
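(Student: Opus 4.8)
The plan is to reduce the statement to an elementary computation with Minkowski squares, exploiting that the vectors $(t,t\mathbf n)$ along which we shift are null. Write $O = b + O_r$, where $b \in \RR^4$ is the centre of $O$ and $O_r := \{a : |a_0| + |\mathbf a| < r\}$ is the standard double cone of radius $r$ (so $O_1$ itself has radius $1$), and put $\hat{\mathbf b} := \mathbf b/|\mathbf b|$. For $\mathbf n \in S^2$ set $\ell := (1,\mathbf n)$, so $\ell\cdot\ell = 0$; then for any $a \in O$, $c \in O_1$ and $p := a - c$ one computes
\[
 \bigl(a + (t,t\mathbf n) - c\bigr)\cdot\bigl(a + (t,t\mathbf n) - c\bigr) \;=\; (p + t\ell)\cdot(p+t\ell) \;=\; p\cdot p + 2t\,(p_0 - \mathbf p\cdot\mathbf n).
\]
Hence the shifted point is strictly spacelike to $c$ for every large $t$ as soon as the linear coefficient $p_0 - \mathbf p\cdot\mathbf n$ is strictly negative; and, since $p\cdot p$ stays bounded while $p_0 - \mathbf p\cdot\mathbf n$ stays below some $-\delta < 0$ as $(a,c,\mathbf n)$ ranges over the compact set $\overline O\times\overline{O_1}\times\Sigma$, the needed threshold $t > T_0$ is then uniform. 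So the whole lemma reduces to choosing $\Sigma$ making $p_0 - \mathbf p\cdot\mathbf n$ uniformly negative.

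First I would take $\Sigma$ to be a small spherical cap about $\hat{\mathbf b}$, namely $\Sigma := \{\mathbf n \in S^2 : \mathbf n\cdot\hat{\mathbf b} \ge \cos\theta_0\}$ for a suitable $\theta_0 \in (0,\tfrac\pi2)$ to be fixed. Writing $a = b + a'$ with $a' \in O_r$ and using $|a_0' - \mathbf a'\cdot\mathbf n| \le |a_0'| + |\mathbf a'| < r$ together with $|{-c_0} + \mathbf c\cdot\mathbf n| \le |c_0| + |\mathbf c| < 1$, one gets, for $\mathbf n \in \Sigma$,
\[
 p_0 - \mathbf p\cdot\mathbf n \;=\; (b_0 - \mathbf b\cdot\mathbf n) + (a_0' - \mathbf a'\cdot\mathbf n) + ({-c_0} + \mathbf c\cdot\mathbf n) \;<\; b_0 - |\mathbf b|\cos\theta_0 + r + 1 .
\]
Thus it suffices to pick $\theta_0$ with $|\mathbf b|\cos\theta_0 > b_0 + r + 1$; this is possible — with $\Sigma$ a genuine cap of nonempty interior, hence carrying admissible smooth weights $f \ge 0$ with $\int_{S^2} f\,d\o = 1$ supported in it — exactly when $|\mathbf b| > b_0 + r + 1$. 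This last inequality is the quantitative meaning of ``$O$ sufficiently spacelike separated from $O_1$'': testing the defining conditions $|\mathbf a| > 1$ and $|a_0| < |\mathbf a| - 1$ of the spacelike complement of $O_1$ on the extreme points $b + (\pm\alpha,-\beta\hat{\mathbf b})$ of $O$ (with $\alpha+\beta$ approaching $r$) shows that plain spacelike separation already forces $|\mathbf b| \ge b_0 + r + 1$, and any positive margin (e.g. $O$ spacelike to a slightly enlarged double cone around $O_1$, or $O$ pushed further out in a spacelike direction) makes it strict.

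I do not expect a genuine obstacle here: the content is just the intuitive fact that a region far out in the spacelike complement of $O_1$, shifted along null directions $(1,\mathbf n)$ that point further away from $O_1$, cannot re-enter its causal shadow. The only care needed is bookkeeping — keeping track of the radii $r$ and $1$ so as to arrive at the clean inequality $|\mathbf b| > b_0 + r + 1$, verifying that spacelike separation really does impose $|\mathbf b| \ge b_0 + r + 1$ as claimed, and checking that, with $\theta_0$ fixed as above, the bound $p_0 - \mathbf p\cdot\mathbf n \le -\delta$ and hence the $t$-threshold are uniform over $a \in O$, $c \in O_1$ and $\mathbf n \in \Sigma$.
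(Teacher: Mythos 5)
Your proof is correct and follows essentially the same route as the paper's (informal) argument: expand the Minkowski square of $p+t\ell$ along the null direction $\ell=(1,\mathbf n)$ and observe that the sign of the linear coefficient $p_0-\mathbf p\cdot\mathbf n$ decides everything, then take $\Sigma$ to be a small cap around the spatial direction from $O_1$ to $O$. Your version merely makes explicit the quantitative meaning of ``sufficiently spacelike separated'' ($|\mathbf b|>b_0+r+1$) and the uniformity over $\overline O\times\overline{O_1}\times\Sigma$, which the paper leaves to the reader.
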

Let us explain what ``sufficiently separated'' means.
First, we consider for simplicity the point of origin and a spacelike vector $v$.
We may assume that
$v = (v_0,0,0,v_3)$, where $|v_0| < v_3$. The vectors in question are of the form
\[
\{(v_0+t, t\sin\theta\,\cos\phi, t\sin\theta\, \sin\phi, v_3+t\cos\theta), t\ge 0\}.
\]
As one can check easily, these are spacelike for sufficiently large $t$
if $\cos\theta > \frac{v_0}{v_3}$.
In general, even if $O$ and $O_1$ are open regions, if the difference $O_1 - O$
is almost in one direction, then the above arguments works.

From this, we see that certain directed asymptotic fields still have
certain locality.
\begin{lemma}\label{lm:directed}
 For $x \in \A(O)$ where $O \perp O_1$ (spacelike separated) and a smooth function $f$
 such that $O$ and the support of $f$ satisfy the situation of Lemma \ref{lm:spacelike},
 $\phiout_f(x)$ is affiliated to $\A(O_1)' = \A(O_1^\cc)$.
\end{lemma}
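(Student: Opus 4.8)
The statement to prove is Lemma~\ref{lm:directed}: that for $x\in\A(O)$ with $O\perp O_1$ and $f$ supported on a set $\Sigma$ satisfying the conclusion of Lemma~\ref{lm:spacelike}, the directed asymptotic field $\phiout_f(x)$ is affiliated to $\A(O_1)'=\A(O_1^\cc)$. The natural strategy is to pass through the approximating sequence $\Phi^{h_T}_f(x)$, establish that each $\Phi^{h_T}_f(x)$ (for $T$ large) lives in the spacelike complement of $O_1$, and then invoke the strong-resolvent convergence of Theorem~\ref{th:directed}(\ref{dir:converngece}) to transport the commutation relation to the limit, i.e.\ to the bounded functions $R_\lambda(\phiout_f(x))$ of $\phiout_f(x)$. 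Affiliation to $\A(O_1)'$ means exactly $R_\lambda(\phiout_f(x))\in\A(O_1)''{}'=\A(O_1)'$ for all $\lambda\notin\RR$, and by Haag duality on $\cyl$ (the double-cone version noted after the duality theorem) $\A(O_1)'=\A(O_1^\cc)$, which gives the second equality for free.

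First I would unwind the definition of $\Phi^t_f(x)=-2t\int_{S^2}d\o(\mathbf n)\,f(\mathbf n)\,\partial_0 x(t,t\mathbf n)$. The integrand is supported (in $\mathbf n$) on $\supp f\subset\Sigma$, so $\Phi^t_f(x)$ is built from translates $x(t,t\mathbf n)$ of $x$ by vectors $(t,t\mathbf n)$ with $\mathbf n\in\Sigma$; hence $\Phi^t_f(x)\in\A\bigl(\bigcup_{\mathbf n\in\Sigma}(O+(t,t\mathbf n))\bigr)$ up to the time-derivative, which stays within an arbitrarily small enlargement of the same region. By Lemma~\ref{lm:spacelike}, for all sufficiently large $t$ the set $\{a+(t,t\mathbf n):a\in O,\ \mathbf n\in\Sigma\}$ is spacelike separated from $O_1$; the subsequent time-average $\Phi^{h_T}_f(x)=\int h_T(t)\Phi^t_f(x)\,dt$ only averages over $t$ in a window around $T$ whose width is $\log|T|$, so for $T$ large enough the \emph{entire} support of $h_T$ falls in the good range of $t$. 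Therefore $\Phi^{h_T}_f(x)$ is localized in a region spacelike to $O_1$, so $[\Phi^{h_T}_f(x),z]=0$ for every $z\in\A(O_1)$, by locality of the net $\A$.

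Next I would pass to the limit. Since $\Phi^{h_T}_f(x)\to\phiout_f(x)$ in the strong resolvent sense and $x$ is self-adjoint so that $\Phi^{h_T}_f(x)$ are self-adjoint, strong resolvent convergence yields $R_\lambda(\Phi^{h_T}_f(x))\to R_\lambda(\phiout_f(x))$ strongly for each $\lambda\notin\RR$. For fixed $z\in\A(O_1)$ and large $T$ we have $[R_\lambda(\Phi^{h_T}_f(x)),z]=0$ (a bounded function of an operator commuting with $z$ still commutes with $z$). Taking the strong limit, using that $z$ is bounded and that strong limits respect products with a fixed bounded operator, gives $[R_\lambda(\phiout_f(x)),z]=0$ for all $z\in\A(O_1)$, hence $R_\lambda(\phiout_f(x))\in\A(O_1)'$. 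As this holds for every $\lambda\notin\RR$, $\phiout_f(x)$ is affiliated to $\A(O_1)'$; and $\A(O_1)'=\A(O_1^\cc)$ by Haag duality on $\cyl$.

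**Main obstacle.** The delicate point is the geometry controlled by Lemma~\ref{lm:spacelike} together with the time-averaging window: one must be sure that for $T$ large the whole $\log|T|$-wide support of $h_T$ lies in the range of $t$ for which $O+(t,t\mathbf n)$ is spacelike to $O_1$ uniformly in $\mathbf n\in\supp f$. This is where the hypothesis that $O$ is \emph{sufficiently} spacelike separated from $O_1$ (so that $O_1-O$ is "almost in one direction", in the language after Lemma~\ref{lm:spacelike}) is essential: it guarantees a genuine open cone of good directions $\Sigma$ and an absolute threshold $T_0$, not a $T$-dependent one, so that the $t$-window, though growing, stays inside the allowed set once $T\ge T_0$. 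A secondary subtlety is justifying that $\Phi^t_f(x)$, which involves $\partial_0 x$, still lies in a region as close as we like to $\bigcup_{\mathbf n}(O+(t,t\mathbf n))$; this is handled exactly as in \cite{Buchholz77}, since $x$ is norm-smooth and $\partial_0 x(a)$ is a norm-limit of difference quotients of translates of $x$, hence affiliated to any neighborhood of the localization region of $x$. Neither point requires new ideas beyond what is already in the cited work, so the proof is short once the geometry of Lemma~\ref{lm:spacelike} is in hand.
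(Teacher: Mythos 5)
Your proof is correct and follows exactly the route of the paper: the paper's own (one-line) proof appeals precisely to the localization of the approximants $\Phi^{h_T}_f(x)$ in the spacelike complement of $O_1$ for large $T$ and to strong resolvent convergence, which you have simply spelled out in full, including the relevant subtlety that the $\log|T|$-wide support of $h_T$ must fit inside the good range of $t$ from Lemma~\ref{lm:spacelike}. Nothing further is needed.
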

\begin{proof}
 This follows immediately from the localization of approximants
 $\Phi^{h_T}_f(x)$ and their convergence to $\phiout_f(x)$ in the strong resolvent sense.
\end{proof}

We construct a subnet of $\A$ as follows. First, consider the following:
\begin{align*}
\A^\tdir(O_1^\cc) :=& \{ \Ad U(g)(R_\lambda (\phiout_f(x))):
\Im \lambda \neq 0, g \in \gconf(O_1),\\
& x \in \A(O), O\perp O_1, f \mbox{ as Lemma \ref{lm:directed}}\}'',
\end{align*}
where $\gconf(O_1)$ is the stabilizer group of $O_1$ in $\gconf$.
This is clearly a subalgebra of $\A(O_1^\cc) = \A(O_1)'$.
For any other double cone $O$
in the global space $\cyl$, we can find $g \in \gconf$ such that $O = gO_1^\cc$.
With this $g$, we define $\A^\tdir(O) = \Ad U(g) (\A^\tdir(O_1^\cc))$.
This is well-defined, because in the definition of $\A^\tdir(O_1^\cc)$ above
$g$ runs in the stability group $\gconf(O_1)$.

\begin{lemma}\label{lm:subnetdir}
The family $\{\A^\tdir(O)\}$ is a conformal subnet of $\A$
and generates $\hout$ from the vacuum $\Om$.
\end{lemma}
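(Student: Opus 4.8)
The plan is to establish two separate assertions: first that $\{\A^\tdir(O)\}$ is a well-defined conformal subnet of $\A$ (isotony and covariance), and second that the subspace $\overline{\bigcup_O \A^\tdir(O)\Om}$ equals $\hout$, the closure of the span of the product states $\xi_1 \timeso \cdots \timeso \xi_n$ with $\xi_k \in P_1\H$.

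For the subnet property, I would argue as follows. Containment $\A^\tdir(O_1^\cc) \subset \A(O_1^\cc)$ is immediate from Lemma \ref{lm:directed}, since each generator $R_\lambda(\phiout_f(x))$ is affiliated to $\A(O_1)' = \A(O_1^\cc)$, and this algebra is invariant under $\Ad U(g)$ for $g \in \gconf(O_1)$ by local conformal covariance. For a general double cone $O = g O_1^\cc$ we defined $\A^\tdir(O) = \Ad U(g)(\A^\tdir(O_1^\cc))$; well-definedness needs that if $g O_1^\cc = g' O_1^\cc$ then $g^{-1}g'$ stabilizes $O_1^\cc$, hence $O_1$, so $g^{-1}g' \in \gconf(O_1)$, and the definition of $\A^\tdir(O_1^\cc)$ was arranged to be $\Ad U(\gconf(O_1))$-invariant — this is exactly why the stabilizer group was included in the generating set. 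Covariance $\Ad U(h)(\A^\tdir(O)) = \A^\tdir(hO)$ for $h \in \gconf$ is then formal. Isotony: if $O \subset \tilde O$ are double cones in $\cyl$, I would transport by a conformal map sending $\tilde O$ to $O_1^\cc$ (or rather work with causal complements) and reduce to checking that when $O_1' \subset O_2'$, i.e. $O_2 \subset O_1$, every directed asymptotic field affiliated to $\A(O_1)'$ that appears in $\A^\tdir(O_1^\cc)$ also appears among the generators of $\A^\tdir(O_2^\cc)$; this holds because an $x \in \A(O)$ with $O \perp O_1$ and $f$ as in Lemma \ref{lm:directed} relative to $O_1$ can, after possibly shrinking nothing, be regarded relative to the larger causal complement $O_2^\cc \supset O_1^\cc$ — the localization region of the approximants only grows, and $O \perp O_2$ follows from $O_2 \subset O_1$. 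So the generating set for $\A^\tdir(O_1^\cc)$ is a subset of that for $\A^\tdir(O_2^\cc)$ up to conjugation, giving isotony.

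For the cyclicity claim, the key point is that the one-particle vectors $P_1 f(\mathbf P/|\mathbf P|)x\Om$ for $x$ ranging over local operators spacelike to $O_1$ and $f$ ranging over admissible directional profiles are total in $P_1\H$. By Reeh–Schlieder for $\A$, the vectors $x\Om$ with $x$ local (spacelike to $O_1$) are dense in $\H$, so $P_1 x\Om$ are dense in $P_1\H$; applying the multiplication operators $f(\mathbf P/|\mathbf P|)$, which for varying admissible $f$ (nonnegative, smooth, integrating to one, supported in a small cap) generate by linear combinations and limits a total family of bounded functions on the sphere $S^2 \cong \{|\mathbf P|>0\}/(\text{radial})$, one sees that $\{P_1 f(\mathbf P/|\mathbf P|)x\Om\}$ is total in $P_1\H$. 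Here I would use that for any double cone $O$ spacelike to $O_1$ and any cap $\Sigma \subset S^2$ one can choose $f$ adapted to $O$ and $\Sigma$ as in Lemma \ref{lm:spacelike}; by letting $O$ recede in various spacelike directions the admissible caps cover $S^2$, and a partition-of-unity argument over such caps (together with Theorem \ref{th:directed}(\ref{dir:houtprod}), which lets us also reconstruct the full $f$-profile from the directed fields) recovers all directions. Then, using the Fock-space action in Theorem \ref{th:directed}(\ref{dir:houtprod}) — each $\phiout_f(x)$ acts on $\houtprod$ exactly as the free field with one-particle vector $\xi = P_1 f(\mathbf P/|\mathbf P|)x\Om$ — repeated application of generators of $\A^\tdir$ to $\Om$ produces $e^{\xi}$-type vectors and their derivatives, which span a dense subspace of the Fock space over $P_1\H$. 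Conversely every $\A^\tdir(O)\Om$ lies in this Fock space since the generators are resolvents of such free fields. Hence $\overline{\bigcup_O \A^\tdir(O)\Om} = \hout$.

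The main obstacle I expect is the directional covering argument in the cyclicity step: making precise that, as $O$ ranges over double cones spacelike to $O_1$ and $f$ over profiles admissible in the sense of Lemma \ref{lm:spacelike}, the resulting $\xi = P_1 f(\mathbf P/|\mathbf P|)x\Om$ are genuinely total in $P_1\H$, rather than confined to some proper subspace cut out by momentum-direction constraints. The resolution is geometric: for \emph{any} fixed direction $\mathbf n_0 \in S^2$ one can find a double cone $O$ far in the spacelike direction roughly opposite to $\mathbf n_0$ so that a cap $\Sigma$ around $\mathbf n_0$ is admissible for $(O,O_1)$; thus no direction is excluded, and combining caps around all directions with Reeh–Schlieder density of $\{P_1 x\Om\}$ closes the argument. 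A secondary technical point is verifying that the definition of $\A^\tdir(O_1^\cc)$ is stable under $\Ad U(g)$, $g \in \gconf(O_1)$ — but this is built into the definition since $g$ is quantified over the whole stabilizer, so it reduces to checking that conjugating a generator $R_\lambda(\phiout_f(x))$ by $U(g)$ yields $R_\lambda(\phiout_{f\circ g^{-1}}(\Ad U(g)(x)))$ (by Corollary \ref{cr:covariance} and the directional transformation law), which is again of the admissible form with $gO \perp O_1$, hence already in the generating set.
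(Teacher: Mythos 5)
Your treatment of covariance, well-definedness, and the cyclicity of $\Om$ (covering $S^2$ by caps $\Sigma$ admissible for double cones receding in various spacelike directions, then invoking the second-quantization structure) matches the paper's argument. The genuine gap is in your isotony step. You reduce an inclusion of double cones to a comparison of generating sets, arguing that a directed field admissible for $(O,O_1)$ is a fortiori admissible for $(O,O_2)$ when $O_2\subset O_1$. But $\A^\tdir(O_2^\cc)$ is \emph{not} defined intrinsically as the algebra generated by directed asymptotic fields of operators spacelike to $O_2$; it is defined as $\Ad U(g)(\A^\tdir(O_1^\cc))$ for some $g$ with $gO_1^\cc=O_2^\cc$. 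To identify its generators with directed fields relative to $O_2$ you would need a covariance law $\Ad U(g)(\phiout_f(x))=\phiout_{f_g}(\Ad U(g)(x))$ for general conformal $g$, which is established nowhere (Corollary \ref{cr:covariance} covers only the undirected field and only $g$ near the identity) and is implausible for special conformal transformations, which do not preserve the asymptotic time/direction structure entering $\Phi^t_f$. The same unproved ``directional transformation law'' appears at the end of your last paragraph. You also cannot appeal to the later identification $\A^\tdir=\A^\tout$, since its proof uses the present lemma.

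The paper avoids this entirely. By construction $\A^\tdir(O)$ is globally invariant under the modular group of $\A(O)$, which acts geometrically by Bisognano--Wichmann; Takesaki's theorem then yields a conditional expectation from $\A(O)$ onto $\A^\tdir(O)$ implemented by the projection $P^\tout$ onto $\hout$. This uses the Reeh--Schlieder property you proved, together with the invariance of $\hout$ under all of $U(\gconf)$ --- a point you gloss over but which is needed to know that the cyclic subspace of $\A^\tdir(O)$ is $\hout$ for \emph{every} $O$, not just $O_1^\cc$. Since $P^\tout$ is independent of $O$, these expectations form a coherent family, and isotony of $\A^\tdir$ follows from isotony of $\A$: for $x\in\A^\tdir(O)\subset\A(\tilde O)$ one has $E^\tdir(x)\Om=P^\tout x\Om=x\Om$, hence $x\in\A^\tdir(\tilde O)$ because $\Om$ is separating. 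You should replace your generator-comparison argument with this conditional-expectation argument.
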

\begin{proof}
Covariance of $\A^\tdir$ holds by definition (and well-definedness).
$\A^\tdir(O)$ is a subalgebra of $\A(O)$, hence locality follows.
Positivity of energy and the properties of vacuum are inherited from those of
$U$ and $\Om$.

Note that the closed subspace $\hout$ = $\overline{\houtprod}$ is invariant under
$U(g)$. Indeed, we know already that $\A^\tout$ is a net whose restriction to the
Minkowski space $M$ generates the subspace $\hout$.
Any local algebra $\A^\tout(O)$, where $O$ is a double cone in $M$, produces
a dense subspace of $\hout$ from $\Om$ and if $g$ is in a small neighborhood of
the unit element of $\gconf$, then $\A^\tout(gO)$ is again a local algebra in $M$
and generate another dense subspace of $\hout$, thus
$\hout$ is invariant under such $U(g)$. A general element $g$ can be reached
as a finite product of such elements, and the invariance follows.

For $O \perp O_1$, the fields $\phiout_f(x), x\in \A(O)$ can generate
$P_1 \chi_\Sigma\left(\frac{\mathbf P}{|\mathbf P|}\right)\H$
where $\Sigma$ is the compact set in Lemma \ref{lm:spacelike}
and $\chi_\Sigma$ denotes the characteristic function of $\Sigma$.
One can patch such $\Sigma$ to see that the whole one particle space
is spanned by $\phiout_f(x)$ which are affiliated to $\A^\tdir(O_1^\cc)$.
Since the second quantization structure is the same, $\overline{\A^\tdir(O_1^\cc)\Om}$
includes the whole free Fock space $\hout$. As $\hout$ is invariant under $U(g)$,
by the construction of $\A^\tdir(O_1^\cc)$,
$\hout$ is the Hilbert subspace generated by $\A^\tdir(O_1^\cc)$ from $\Om$.
Then the same holds for an arbitrary double cone by the covariance of $\A^\tdir$ and
the invariance of $\hout$.
This is Reeh-Schlieder property of $\A^\tdir$ (as a subnet).

Now we consider the isotony of $\A^\tdir$.
The modular group of $\A(O)$ acts geometrically and $\A^\tdir(O)$ is invariant
under that by construction. By Takesaki's theorem, there is a conditional
expectation $E^\tdir$ from $\A(O)$ to $\A^\tdir(O)$ implemented by the projection $P^\tout$
onto $\hout$. It is immediate that this defines a coherent family of conditional expectations
in the sense that $E^\tdir$ does not depend on $O$, because it is implemented by the same
projection $P^\tout$. With this, the isotony of $\A^\tdir$ follows from the isotony of $\A$.
\end{proof}

\begin{proposition}
 Two nets $\A^\tdir(O)$ and $\A^\tout(O)$ coincide, the latter being
 defined in Section \ref{extension}.
\end{proposition}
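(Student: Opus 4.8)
The plan is to reduce the asserted identity of von Neumann algebras to an identity of one-particle subspaces. By Theorem~\ref{th:bu}(\ref{lee:core}), Theorem~\ref{th:directed}(\ref{dir:core}) and the identification made at the end of Section~\ref{extension}, both $\phiout(x)$ and $\phiout_f(x)$ are instances of the single self-adjoint operator $\Upsilon(\xi)$, which depends only on the one-particle vector $\xi\in P_1\H$; moreover, writing $u(g):=U(g)|_{P_1\H}$, one has $\Ad U(g)(\Upsilon(\xi))=\Upsilon(u(g)\xi)$, first for $g\in\gconf$ near the unit element by Corollary~\ref{cr:covariance} and then for every $g\in\gconf$, since any element is a finite product of such and $P_1\H$ is $U$-invariant. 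Hence $\A^\tout(O)$ and $\A^\tdir(O)$ are each generated by the resolvents $R_\lambda(\Upsilon(\xi))$, $\Im\lambda\neq0$, with $\xi$ running through a closed real-linear subspace of $P_1\H$ that I denote $K_\tout(O)$, respectively $K_\tdir(O)$ (here one uses that $\xi\mapsto\Upsilon(\xi)$ is real-linear and continuous in the strong resolvent sense, by Trotter's product formula and the argument of Lemma~\ref{lm:conv-field}). Since both nets are $\gconf$-covariant (Lemma~\ref{lm:subnetdir} and Corollary~\ref{cr:covariance}) and $\gconf$ acts transitively on double cones of $\cyl$, it suffices to prove $K_\tout(O_1^\cc)=K_\tdir(O_1^\cc)$.

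First I would compute $K_\tout(O_1^\cc)$. For a double cone $O\subseteq M$ the generators defining $\A^\tout(O)$ carry one-particle vectors $P_1x\Om$ with $x=x^*\in\A(O)$ smooth, so that, by a routine density argument, $K_\tout(O)=\overline{P_1H_\A(O)}$, where $H_\A(O):=\overline{\A(O)_{sa}\Om}$ is the standard subspace of $\A(O)$. Because $P_1$ commutes with $U$, picking $h\in\gconf$ with $hO=O_1^\cc$ yields $K_\tout(O_1^\cc)=u(h)\,\overline{P_1H_\A(O)}=\overline{P_1\,U(h)H_\A(O)}=\overline{P_1H_\A(O_1^\cc)}$.

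Next I would show that $K_\tdir(O_1^\cc)$ is the same subspace. The inclusion $K_\tdir(O_1^\cc)\subseteq\overline{P_1H_\A(O_1^\cc)}$ is essentially Lemma~\ref{lm:directed}: for $x=x^*\in\A(O)$ with $O\perp O_1$ and $f$ admissible, the approximants $\Phi^{h_T}_f(x)$ are self-adjoint and, for $T$ large, lie in $\A(O_1^\cc)$, so the one-particle vector $P_1f(\mathbf P/|\mathbf P|)x\Om=\underset{T\to\infty}{\slim}\,P_1\Phi^{h_T}_f(x)\Om$ lies in $\overline{P_1H_\A(O_1^\cc)}$, and this subspace is invariant under every $u(g)$, $g\in\gconf(O_1)$, since $U(g)$ preserves $\A(O_1^\cc)$ and commutes with $P_1$. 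For the converse inclusion I would combine a smooth partition of unity $1=\sum_i f_i$ on $S^2$ into small caps — so that $P_1z\Om=\sum_iP_1f_i(\mathbf P/|\mathbf P|)z\Om$ for $z=z^*$ in a local algebra of $O_1^\cc$ — with the patching argument in the proof of Lemma~\ref{lm:subnetdir}: applying suitable elements of $\gconf(O_1)$ moves each cap into a direction that is admissible in the sense of Lemma~\ref{lm:spacelike} for the correspondingly moved localization region of $z$, so that each summand becomes, up to the $\gconf(O_1)$-action, one of the generating directed one-particle vectors; summing and undoing the moves exhibits $P_1z\Om$ in $K_\tdir(O_1^\cc)$. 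As such vectors are dense in $P_1H_\A(O_1^\cc)$ and $K_\tdir(O_1^\cc)$ is closed, this gives $\overline{P_1H_\A(O_1^\cc)}\subseteq K_\tdir(O_1^\cc)$, hence equality. Therefore $K_\tdir(O_1^\cc)=K_\tout(O_1^\cc)$, so $\A^\tdir(O_1^\cc)=\A^\tout(O_1^\cc)$ and, by covariance, $\A^\tdir(O)=\A^\tout(O)$ for all $O$; in particular $\A^\tout(O)\subseteq\A(O)$, so $\A^\tout$ is a conformal subnet.

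The step I expect to be the main obstacle is the converse inclusion in the previous paragraph: one has to verify carefully that, as the local operator ranges over the algebras spacelike to $O_1$, the direction function over admissible caps, and the conformal move over the stabilizer $\gconf(O_1)$, the resulting one-particle vectors really do fill out all of $\overline{P_1H_\A(O_1^\cc)}$ — that is, that admissible caps patched over $\gconf(O_1)$-orbits genuinely cover $S^2$ (Lemma~\ref{lm:spacelike}) and that the complex density coming from the patching in Lemma~\ref{lm:subnetdir} can be sharpened to an identity of the relevant closed real subspaces. A secondary technical point is the strong-resolvent continuity of $\xi\mapsto\Upsilon(\xi)$ needed to push this density through to the von Neumann algebras; it follows by the method of Lemma~\ref{lm:conv-field}, using that $\Upsilon(\xi)$ is determined on a common core by the free-field action and commutes with the ambient local algebras.
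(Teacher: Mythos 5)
Your strategy is genuinely different from the paper's: you try to identify, for each net, the closed \emph{real} subspace of one-particle vectors underlying its generators and to prove these subspaces equal, whereas the paper never computes $K_\tdir(O_1^\cc)$ at all. Unfortunately the step you yourself flag as the main obstacle --- the inclusion $\overline{P_1H_\A(O_1^\cc)}\subseteq K_\tdir(O_1^\cc)$ --- is a genuine gap, and the mechanism you propose for it does not work. Your ``move, apply an admissible cap, undo the move'' argument requires that for $g\in\gconf(O_1)$ one has $u(g)\,f(\mathbf P/|\mathbf P|)\,u(g)^{-1}=\tilde f(\mathbf P/|\mathbf P|)$ for some function $\tilde f$ on $S^2$, i.e.\ that $\Ad U(g)$ acts on the momentum-direction observable by a point transformation of the sphere. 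This holds for rotations (and for Lorentz boosts, which however do not stabilize $O_1$), but the stabilizer $\gconf(O_1)$ is generated by rotations together with the one-parameter groups $\L^{O_1}_t$, which involve special conformal transformations; on the massless one-particle representation these act by second-order differential operators in momentum and do \emph{not} carry $f(\mathbf P/|\mathbf P|)$ into another such multiplier. Rotations alone do not help: for a fixed $z\in\A(\tilde O)$ the directions reachable by conjugating with a rotation $r$ and using the cap admissible for $r^{-1}\tilde O$ are exactly $r\cdot\Sigma_{r^{-1}\tilde O}=\Sigma_{\tilde O}$, so you never leave the original admissible set. In addition, $O_1^\cc$ on $\cyl$ contains double cones that are \emph{timelike} separated from $O_1$ in $M$, for which Lemma \ref{lm:spacelike} supplies no admissible directions whatsoever, so for such $z$ not even the first cap is available. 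Hence the claimed equality of real subspaces is not established (and proving it directly would be essentially as hard as the proposition itself).

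The paper circumvents exactly this difficulty by a modular-theoretic argument that needs only \emph{complex-linear} density, which is much cheaper: Lemma \ref{lm:subnetdir} already shows (by patching admissible caps over \emph{varying} localization regions $O\perp O_1$, not over a fixed operator) that $\A^\tdir$ generates all of $\hout$ from $\Om$, and $\A^\tout$ does the same. One then shows the two nets are \emph{relatively local} --- the resolvents of $\phiout_f(x)$ and $\phiout(y)$ commute for suitably separated localizations, via the Jost--Lehmann--Dyson techniques of \cite{Landau70} and \cite[Theorem 9]{Buchholz77} --- and applies Takesaki's theorem as in Proposition \ref{pr:gci-duality}: two $U$-covariant, mutually local nets with geometric modular groups and the same cyclic subspace must coincide. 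Your proposal uses no relative locality input at all, which is a sign that something must be missing: equality of the generating real one-particle subspaces is a strictly stronger statement than what the conditional-expectation argument requires, and it is precisely the part your construction cannot reach. If you want to keep your one-particle viewpoint, the honest route is to first prove the proposition by the Takesaki argument and only then read off the equality of standard subspaces as a corollary.
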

\begin{proof}
If $x\in \A(O)$ and $y\in\A(O_1)$, where $O\perp O_1$ and $f$ is chosen
for the pair $O,O_1$ as in Lemma \ref{lm:spacelike},
then $\phiout_f(x)$ and $\phiout(y)$, or their resolvents,
commute by the techniques of
Jost-Lehmann-Dyson representation as in \cite[Section 4]{Landau70}\cite[Theorem 9]{Buchholz77}.
We know that $\A^\tout$ is covariant with respect to $U$.
Especially, $\A^\tout(O_1)$ is invariant under $\Ad U(g)$ where
$g \in \gconf(O_1)$.
By definition of $\A^\tdir$, the two nets $\A^\tdir$ and $\A^\tout$ are
relatively local.

We saw also that they generate the same Hilbert subspace $\hout$
in Lemma \ref{lm:subnetdir}.
Both nets $\A^\tout$, $\A^\tdir$ are conformal with respect to $U$, relatively
local and span the same Hilbert subspace.
By the standard application of Takesaki's theorem as in Proposition \ref{pr:gci-duality},
these local algebras coincide.
\end{proof}

This concludes our construction. Any conformal net, global or not, contains
a free subnet $\A^\tout = \A^\tdir$ which generates the massless particle spectrum.

\subsubsection*{Decoupling of the free field subnet}\label{decoupling}
The next Proposition works with Haag dual (for double cones in $M$)
nets with covariance with respect to the Poincar\'e group.
A net has {\bf split property} if for each pair $O_1 \subset O_2$ such that
$\overline O_1 \subset O_2$, there is a type I factor $\R$ such that
$\A(O_1) \subset \R \subset A(O_2)$.
A {\bf DHR sector} of the net $\A$ is the equivalence class of a representation $\pi$
of the global $C^*$-algebra $\overline{\bigcup_{O}\A(O)}^{\|\cdot\|}$
where $O$ are double cones under certain conditions \cite{Haag96}. Among others,
the most important one is that there is a double cone $O$ such that
the restriction of $\pi$ to $\overline{\bigcup_{O'\perp O}\A(O')}^{\|\cdot\|}$
($\perp$ denotes the spacelike separation) is unitarily equivalent to
the identity representation (the vacuum representation).

\begin{proposition}\label{pr:decoupling}
Let $\A$ be a Haag dual subnet of a Haag dual net $\F$ on a separable Hilbert space
and assume that $\A$ has split property and has no
nontrivial irreducible DHR sector
(if $\A \subset \F$ is an inclusion of conformal nets, we have the Haag duality
on $\cyl$ and we do not need the Haag duality on $M$).
Then $\F$ decouples, namely $\F(O) = \tilde\pi_0(\A(O))\otimes \C_0(O)$
where $\C(O) = \A(O)'\cap \F(O)$ is the coset net, $\C_0$
is the irreducible vacuum representation of $\C$ and $\tilde\pi_0$
is the vacuum representation of $\A$ (the restriction of $\A$ to its cyclic subspace).
\end{proposition}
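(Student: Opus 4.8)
The plan is to view the vacuum Hilbert space $\H$ of $\F$ as a representation space for the subnet $\A$ and to show that, since $\A$ has no nontrivial DHR sector, this representation must be a multiple of the vacuum representation of $\A$; the decoupling then follows from locality, Haag duality, and an elementary fact about tensor products over a factor.

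First I would analyse the representation $\rho$ of $\A$ on $\H$ coming from $\A(O)\subset\F(O)$. It is locally normal (a restriction of a normal representation of $\F(O)$), it is covariant with positive energy under the same $U$, and $\H$ is separable. \textbf{The main step}, and the only input beyond von Neumann algebra theory, is that under these hypotheses together with the split property of $\A$ the representation $\rho$ decomposes as a direct sum of irreducible DHR sectors of $\A$ (the Doplicher--Haag--Roberts analysis, cf.\ \cite{Haag96}). By assumption the only such sector is the vacuum $\tilde\pi_0$, which acts on $\H_\A=\overline{\A(O)\Om}$ (independent of $O$). Hence there are a multiplicity Hilbert space $M$ and a unitary $V\colon\H\to\H_\A\otimes M$ with $V\A(O)V^*=\tilde\pi_0(\A(O))\otimes 1$ for every $O$. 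The vacuum subspace $\H_\A$ is $\rho$-invariant and carries a single copy of $\tilde\pi_0$; since the vacuum representation of $\A$ is irreducible, the $\rho$-invariant subspaces of $\H_\A\otimes M$ are exactly the $\H_\A\otimes K$ with $K\subset M$ closed, so $V\H_\A=\H_\A\otimes\CC\psi$ for a unit vector $\psi\in M$ and $V\Om=\Om_\A\otimes\psi$ is a product vector with $\Om_\A$ cyclic for $\tilde\pi_0(\A(O))$.

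Next I would pin down $\F(O)$. By locality of $\F$, $\F(O)\subset\F(O^\cc)'$; since $\A(O^\cc)\subset\F(O^\cc)$ and Haag duality of $\A$ gives $V\A(O^\cc)V^*=\tilde\pi_0(\A(O))'\otimes 1$ (for a conformal subnet this is Haag duality on $\cyl$, which is precisely why duality on $M$ is not needed), one gets
\[
 V\F(O)V^*\ \subset\ \bigl(\tilde\pi_0(\A(O))'\otimes 1\bigr)'\ =\ \tilde\pi_0(\A(O))\otimes B(M).
\]
Now $\A(O)$, hence $\tilde\pi_0(\A(O))$, is a factor (type $\mathrm{III}_1$ for a conformal net), so the Ge--Kadison theorem on intermediate von Neumann algebras of a tensor product over a factor applies to $\tilde\pi_0(\A(O))\otimes 1\subset V\F(O)V^*\subset\tilde\pi_0(\A(O))\otimes B(M)$ and yields $V\F(O)V^*=\tilde\pi_0(\A(O))\otimes\D(O)$, where $\D(O):=V\F(O)V^*\cap(\CC 1_{\H_\A}\otimes B(M))$.

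It remains to identify $\D$ with the coset net $\C$ and its vacuum representation. Since $\tilde\pi_0(\A(O))$ is a factor, conjugating $\C(O)=\A(O)'\cap\F(O)$ by $V$ gives $(\tilde\pi_0(\A(O))'\otimes B(M))\cap(\tilde\pi_0(\A(O))\otimes\D(O))=\CC 1_{\H_\A}\otimes\D(O)$, hence $V\C(O)V^*=\CC 1_{\H_\A}\otimes\D(O)$ and $\C(O)\cong\D(O)$. Moreover $V\,\overline{\C(O)\Om}=\overline{(\CC 1\otimes\D(O))(\Om_\A\otimes\psi)}=\CC\Om_\A\otimes\overline{\D(O)\psi}=\CC\Om_\A\otimes M$, the last equality because $\psi$ is cyclic for $\D(O)$ (as $\Om$ is cyclic for $\F(O)=\tilde\pi_0(\A(O))\otimes\D(O)$ and $\Om_\A$ is cyclic for the factor $\tilde\pi_0(\A(O))$). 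Therefore $\C_0(O)$, the restriction of $\C(O)$ to $\overline{\C(O)\Om}$, is spatially equivalent to $\D(O)$ acting on $M$, and $V\F(O)V^*=\tilde\pi_0(\A(O))\otimes\C_0(O)$. Since the same $V$ works for every $O$, this is the asserted decoupling. The delicate point is concentrated entirely in the DHR decomposition of $\rho$ in the second paragraph; the remaining steps are routine operator algebra.
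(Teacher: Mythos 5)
Your proposal follows essentially the same strategy as the paper: (i) use the DHR analysis plus the split property and separability to show that the representation of $\A$ on the vacuum Hilbert space of $\F$ is a multiple of $\tilde\pi_0$, so that $\A(O)$ takes the form $\tilde\pi_0(\A(O))\otimes\CC\1$ in a factorization $\H\cong\H_\A\otimes\K$; (ii) use locality of $\F$ and Haag duality of $\A$ to squeeze $\F(O)$ between $\tilde\pi_0(\A(O))\otimes\CC\1$ and $\tilde\pi_0(\A(O))\otimes\B(\K)$; (iii) invoke the Ge--Kadison splitting theorem. Your identification of the middle algebra with $\tilde\pi_0(\A(O))\otimes\C_0(O)$ via the product form of the vacuum vector is a nice explicit supplement to what the paper leaves implicit.

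The one point where you diverge, and where your argument has a gap in the stated generality, is the application of Ge--Kadison directly to double cone algebras: that theorem requires the first tensor factor to be a factor, and you assert that $\A(O)$ is a factor ``(type $\mathrm{III}_1$ for a conformal net)''. For a conformal net this is indeed automatic (every double cone is a conformal image of a wedge), and this is exactly the closing remark of the paper's proof. But the proposition is stated for a general Poincar\'e-covariant Haag dual net with split property, and in that setting factoriality of double cone algebras is not among the hypotheses and does not follow from them in any obvious way (the paper does not assume additivity, so the usual argument that $Z(\A(O))=\A(O)\cap\A(O')$ commutes with $\A(O)\vee\A(O')=\B(\H)$ is not available). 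The paper circumvents this precisely by applying Ge--Kadison to \emph{wedge} algebras, whose factoriality in the vacuum representation is a standard theorem, obtaining $\F(W)=\tilde\pi_0(\A(W))\otimes\C_0(W)$, and then recovering $\F(O)=\bigcap_{O\subset W}\F(W)=\tilde\pi_0(\A(O))\otimes\bigcap_{O\subset W}\C_0(W)$ from Haag duality of both nets. To make your proof cover the general statement, you should either insert this wedge detour or add factoriality of the $\A(O)$ as a hypothesis; for the conformal application in this paper your argument is complete as written.
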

\begin{proof}
 The argument here is essentially contained in the proof of \cite[Theorem 3.4]{CC01}
 and has been suggested to apply to globally conformal nets in \cite{BNRT07}.
 
 The representation of $\A$ on the vacuum Hilbert space of $\F$ is a DHR representation
 of $\A$ \cite[Lemma 3.1]{CC01} (this can be proved under split property of $\A$ only,
 from which it follows that local algebras are properly infinite, and separability
 of the Hilbert space), hence by split property it is the direct integral of irreducible
 representations (see \cite[Proposition 56]{KLM01}, which is written for nets on $S^1$
 but the arguments apply to nets on $M$), and by assumption it is the direct sum of copies of the vacuum
 representation.
 Hence on the Hilbert space of $\F$, an element $x \in \A(O)$ is of the form
 $\tilde\pi_0(x)\otimes \CC\1$ with an appropriate decomposition $\H = \H_\A \otimes \K$.
 Since $\A$ is Haag dual on its vacuum representation $\tilde\pi_0$,
 we have $\A(O') = \tilde\pi_0(\A(O'))\otimes \CC\1 = \tilde\pi_0(\A(O))'\otimes\CC\1$.
 By the relative locality of $\F$ to $\A$,
 we have $\F(O) \subset \A(O')' = \tilde \pi_0(\A(O))\otimes \B(\K)$.
 Now we have an inclusion
 \[
  \A(O) = \tilde\pi_0(\A(O))\otimes \CC\1\subset \F(O) \subset \tilde\pi_0(\A(O))\otimes \B(\K).
 \]
 This relation holds also for a wedge $W$,
 \[
  \A(W) = \tilde\pi_0(\A(W))\otimes \CC\1\subset \F(W) \subset \tilde\pi_0(\A(W))\otimes \B(\K)
 \]
 but the wedge algebra $\tilde\pi_0(\A(W))$ in the vacuum representation is a factor
 \cite[1.10.9 Corollary]{Baumgaertel}. Now by \cite[Theorem A]{GK96},
 there is $\C_0(W) \subset \B(\K)$ such that $\F(W) = \tilde\pi_0(\A(W))\otimes \C_0(W)$.
 It is clear that $\F(W) = \A(W) \vee \C(W)$, where $\C(W) = \F(W) \cap \A(W)'$

 By Haag duality of the both nets $\F$ and $\A$, we have
\[
 \F(O) = \bigcap_{O\subset W} \F(W) 
 = \bigcap_{O\subset W}\tilde\pi_0(\A(W))\otimes \C_0(W) 
 = \tilde\pi_0(\A(O))\otimes \bigcap_{O\subset W} \C_0(W).
\]
 By defining $\C(O) := \F(O) \cap \A(O)' = \CC\1\otimes \bigcap_{O\subset W} \C_0(W)$
 and $\C_0(O) = \bigcap_{O\subset W} \C_0(W)$,
 we obtain $\F(O) = \tilde\pi_0(\A(O)) \otimes \C_0(O) = \A(O)\vee \C(O)$.

 If $\A\subset\F$ is an inclusion of conformal nets, we can directly argue with double cones $O$.
 Each $\A(O)$ is a factor, the modular group acts geometrically and Haag duality holds
 on $\cyl$ (one should simply transplant the duality argument to $\cyl$) \cite{BGL93}.
\end{proof}

\begin{corollary}
  Let $(\A, U, \Om)$ be a conformal net and assume that
 the massless particle subspace $P_1\H$ of $U$ consists only of the scalar representation
 with finite multiplicity.
 Then the free subnet $\A^\tout$ decouples in $\A$, namely
 $\A(O) = \A^\tout(O)\vee \C(O)$, where $\C(O) := \A(O) \cap \A^\tout(O)'$
 is the coset subnet.
\end{corollary}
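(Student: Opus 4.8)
The plan is to apply Proposition \ref{pr:decoupling} to the inclusion $\A^\tout\subset\A$, with the free subnet $\A^\tout$ in the role of the subnet called ``$\A$'' there and the given conformal net $\A$ in the role of the ambient net ``$\F$''. All that remains is to verify the hypotheses of that proposition. That $\A=\F$ is Haag dual is automatic: being conformal, it satisfies Haag duality on $\cyl$ by the theorem of \cite{BGL93} quoted above. So the work is to check that $\A^\tout$ is a Haag dual conformal subnet, on a separable Hilbert space, with the split property and no nontrivial irreducible DHR sector.

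First I would make the identification of $\A^\tout$ explicit. By Theorem \ref{th:bu}(\ref{lee:houtprod}), Theorem \ref{th:directed} and the analysis of Section \ref{extension}, the restriction $\tilde\pi_0(\A^\tout)$ to the cyclic subspace $\hout$ is unitarily equivalent to the second quantization (free field) net built over the one-particle space $P_1\H$; under the present hypothesis $P_1\H$ carries only the scalar massless representation with multiplicity $n<\infty$, so this is the $n$-fold tensor product of the four-dimensional massless free scalar field net. From this several hypotheses follow at once. The Hilbert space is separable (we work under the standing separability assumption on $\H$). The net $\tilde\pi_0(\A^\tout)$ is conformal --- covariant with respect to $U$ by Corollary \ref{cr:covariance}, with $\Om$ cyclic by Lemma \ref{lm:subnetdir} and separating because it is so already for the larger algebras $\A(O)$ --- hence it extends to $\cyl$ and satisfies Haag duality there, with factorial local algebras, by \cite{BGL93}. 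Finally it has the split property, since the massless free scalar net in $d=4$ satisfies the nuclearity condition and hence split, and a finite tensor product of split nets is split.

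The crux is the remaining hypothesis, that $\A^\tout$ has no nontrivial irreducible DHR sector, and this is exactly where the assumptions \emph{scalar} and \emph{finite multiplicity} are needed. The superselection structure of a net generated by a free field is trivial; for the scalar field this gives a trivial DHR category, and passing to the finite tensor power creates no new irreducible sectors. By contrast, free nets of nonzero helicity (the free electromagnetic field being the first example) exhibit infrared and topological phenomena that would obstruct both Haag duality for double cones and triviality of the sector category, so the scalar restriction is essential, and finiteness of the multiplicity keeps everything within a finite tensor product. I expect this triviality of the DHR category to be the main obstacle to a fully self-contained write-up, and I would dispose of it by invoking the known absence of nontrivial locally normal DHR sectors for the free massless scalar field.

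With all hypotheses of Proposition \ref{pr:decoupling} verified, that proposition produces a factorization $\H=\H_{\A^\tout}\otimes\K$ in which $\A^\tout(O)=\tilde\pi_0(\A^\tout(O))\otimes\CC\1$ and, for every double cone $O$, $\A(O)=\tilde\pi_0(\A^\tout(O))\otimes\C_0(O)=\A^\tout(O)\vee\C(O)$, where $\C(O)=\A(O)\cap\A^\tout(O)'$ is the coset subnet and $\C_0$ its vacuum representation; because $\A$ is conformal the double-cone step is run directly on $\cyl$ exactly as in the proof of that proposition. This is the asserted decoupling of $\A^\tout$ in $\A$.
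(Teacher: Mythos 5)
Your proposal is correct and follows essentially the same route as the paper: apply Proposition \ref{pr:decoupling} to the inclusion $\A^\tout\subset\A$, using that $\A^\tout$ is (a finite tensor power of) the massless free scalar net, which has no nontrivial DHR sector \cite{Araki64, Driessler79} and satisfies the split property \cite{BJ87, BW86}, both properties being inherited by finite tensor products. The paper's proof is just a terser version of your verification.
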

\begin{proof}
  The scalar free field net has no nontrivial DHR sector \cite{Araki64, Driessler79}
  and has split property \cite{BJ87, BW86}. These properties are inherited by
  any finite tensor product.
  Thus the claim follows from Proposition \ref{pr:decoupling}.
\end{proof}

\section{Open problems}\label{open}
We have shown that massless particles in a conformal net are free.
However, massless representations are only one of the families of
the irreducible representations of the conformal group.
Unfortunately, at the moment the scattering theory, which extracts free fields,
is not applicable to the rest of the family.
It would be interesting if one could extract other fields by a different device.
This would not be very easy because in general they are expected to be interacting
(e.g.\! the super Yang-Mills theory \cite{Mandelstam83}).

As for decoupling, it relies on the split property and the absence of DHR sector
of the scalar free field. As the proofs in the scalar case are based on the
arguments in the one particle space and the second quantization, we
expect that similar results should hold for
each massless finite-helicity representation of the conformal group.

Another interesting question is whether it is possible to prove conformal covariance
from scale invariance (under certain additional conditions).
Some results have been obtained in this direction
\cite{Nakayama13, DKSS13}. An operator-algebraic proof is unknown
(if we do not assume asymptotic completeness, c.f.\! \cite{Tanimoto12-1}).

By comparing with the result that any massless asymptotically complete
model in two dimensions can be obtained
by ``twisting'' a tensor product net \cite[Section 3]{Tanimoto12-2} \cite[Proposition 2.2]{BT12},
one may wonder whether such a structure is available in four dimensions, too.
This is not straightforward, because wedges are not suited for
the scattering theory in four dimensions. Neither are lightcones, because
the intersection of the shifted future and past lightcones does not give back
the algebra for a double cone even in the free field net \cite{HL82}.
Related to this issue is whether the S-matrix is a complete invariant of a net
under asymptotic completeness. This is open also for massive theories, although
partial results are available \cite{BBS01, Mund12}.

\subsubsection*{Acknowledgement}
I am grateful to Detlev Buchholz and Karl-Henning Rehren for pointing out serious 
technical issues in early versions of this paper.
I thank Marcel Bischoff, Wojciech Dybalski and Nikolay Nikolov for interesting discussions,
Luca Giorgetti and Vincenzo Morinelli for useful comments and
the referees of {\it Forum of Mathematics, Sigma} for careful reading of the manuscript and
suggestions.
I appreciate the support by Hausdorff Institut f\"ur Mathematik, where a part of this work has been done.

This work was supported by Grant-in-Aid for JSPS fellows 25-205.

\def\cprime{$'$}

\end{document}